\documentclass[11pt]{article}

\usepackage{amsmath,amsthm,amssymb}
\usepackage{fullpage}
\usepackage{hyperref}
\usepackage[usenames,dvipsnames,svgnames,table]{xcolor}
\usepackage{graphicx}
\usepackage{caption}
\usepackage{subcaption}
\usepackage{authblk}
\usepackage{algorithm}
\usepackage[noend]{algorithmic}

\usepackage{tikz}
\usetikzlibrary{arrows}
\usetikzlibrary{positioning, fit}
\usetikzlibrary{shapes} 
\usetikzlibrary{decorations.pathmorphing}
\usetikzlibrary{calc,through,intersections}

\newtheorem{theorem}{Theorem}
\newtheorem{problem}{Problem}
\newtheorem{corollary}[theorem]{Corollary}
\newtheorem{lemma}{Lemma}
\newtheorem{proposition}{Proposition}
\newtheorem{claim}{Claim}
\newtheorem{fact}{Fact}

\theoremstyle{definition}
\newtheorem{definition}{Definition}
\newtheorem{operation}{Operation}
\newtheorem{reduction}{Reduction rule}
\newtheorem{remark}{Remark}
\newtheorem{property}{Property}

\newsavebox{\mybox}

\newcommand{\qedclaim}{\hfill $\diamond$ \medskip}
\newenvironment{proofclaim}{\noindent{\em Proof.}}{\qedclaim}
\newenvironment{proofof}[1]{\medskip\noindent\emph{Proof of #1. }\ignorespaces}{\hfill\qed\medskip\par\noindent\ignorespacesafterend}

\title{The use of a pruned modular decomposition for {\sc Maximum Matching} algorithms on some graph classes
\thanks{This work was supported by the Institutional research programme PN 1819 "Advanced IT resources to support digital transformation processes in the economy and society - RESINFO-TD" (2018), project PN 1819-01-01"Modeling, simulation, optimization of complex systems and decision support in new areas of IT\&C research", funded by the Ministry of Research and Innovation, Romania.}}

\author[1,2]{Guillaume Ducoffe}
\author[1,3]{Alexandru Popa}

\affil[1]{\small National Institute for Research and Development in Informatics, Romania}
\affil[2]{\small The Research Institute of the University of Bucharest ICUB, Romania}
\affil[3]{\small University of Bucharest, Faculty of Mathematics and Computer Science}

\date{}

\begin{document}

\maketitle

\begin{abstract}
We address the following general question: given a graph class ${\cal C}$ on which we can solve {\sc Maximum Matching} in (quasi) linear time, does the same hold true for the class of graphs that can be {\em modularly decomposed} into ${\cal C}$ ?
A major difficulty in this task is that the {\sc Maximum Matching} problem is {\em not} preserved by quotient, thereby making difficult to exploit the structural properties of the quotient subgraphs of the modular decomposition.
So far, we are only aware of a recent framework in [Coudert et al., SODA'18] that only applies when the quotient subgraphs have bounded order and/or under additional assumptions on the nontrivial modules in the graph.
As a first attempt toward improving this framework we study the combined effect of modular decomposition with a pruning process over the quotient subgraphs.
More precisely, we remove sequentially from all such subgraphs their so-called one-vertex extensions ({\it i.e.}, pendant, anti-pendant, twin, universal and isolated vertices).
Doing so, we obtain a ``pruned modular decomposition'', that can be computed in ${\cal O}(m \log n)$-time.
Our main result is that if all the pruned quotient subgraphs have bounded order then a maximum matching can be computed in linear time.
This result is mostly based on two pruning rules on pendant and anti-pendant {\em modules} -- that are adjacent, respectively, to one or all but one other modules in the graph.
Furthermore, these two latter rules are surprisingly intricate and we consider them as our main technical contribution in the paper.

We stress that the class of graphs that can be totally decomposed by the pruned modular decomposition contains all the distance-hereditary graphs, and so, it is larger than cographs.
In particular, as a byproduct of our approach we also obtain the first known linear-time algorithms for {\sc Maximum Matching} on distance-hereditary graphs and graphs with modular-treewidth at most one.
Finally, we can use an extended version of our framework in order to compute a maximum matching, in linear-time, for all graph classes that can be modularly decomposed into cycles.
Our work is the first to explain why the existence of some nice ordering over the {\em modules} of a graph, instead of just over its vertices, can help to speed up the computation of maximum matchings on some graph classes.  
\end{abstract}

{\bf Keywords:} maximum matching; FPT in P; modular decomposition; pruned graphs; one-vertex extensions; $P_4$-structure.

\section{Introduction}\label{sec:intro}

{\em Can we compute a maximum matching in a graph in linear-time?}
-- {\it i.e.}, computing a maximum set of pairwise disjoint edges in a graph. --
On one hand, despite considerable years of research and the design of elegant combinatorial and linear programming techniques, the best-known algorithms for this fundamental problem have stayed blocked to an ${\cal O}(m\sqrt{n})$-time complexity~\cite{Gab17,MiV80}.
On the other hand, the {\sc Maximum Matching} problem has several applications~\cite{Bun00,DeS84,LoP09,Pul95}, some of them being relevant only for specific graph families.
For instance, there exists a special matching problem arising in industry that can be rephrased as finding a maximum matching in a given convex bipartite graph~\cite{Glo67}.
It may then be tempting to use some well-structured graph classes in order to overcome this superlinear barrier for particular cases of graphs.
We follow this well-established line of research ({\it e.g.}, see~\cite{Cha96,CDP18,DaK98,Dra97,FPT97,FGV99,Glo67,HoK73,KaS81,LiR93,MNN17,MoJ89,YuY93,YuZ07,Yus13}).
Our work will combine two successful approaches for this problem, namely, the use of a {\em vertex-ordering} characterization for certain graph classes, and a recent technique based on the decomposition of a graph by its {\em modules}.
We detail these two approaches in what follows, before summarizing our contributions.  

\subsection{Related work}

A cornerstone of most {\sc Maximum Matching} algorithms is the notion of augmenting paths~\cite{Ber57,Edm65}.
-- See~\cite{FLPS+17,Lov79,MVV87} for some other approaches based on matrix multiplication. --
Roughly, given some matching $F$ in a graph, an $F$-augmenting path is a simple path between two exposed vertices ({\it i.e.}, not part of $V(F)$) in which the edges belong alternatively to $F$ and not to $F$.
The symmetric difference between $F$ and any $F$-augmenting path leads to a new matching of greater cardinality $|F|+1$.
Therefore, the standard strategy in order to solve {\sc Maximum Matching} is to repeatedly compute a set of vertex-disjoint augmenting paths (w.r.t. the current matching) until no more such path can be found.
However, although we can compute a set of augmenting paths in linear-time~\cite{GaT83}, this is a tedious task that involves the technical notion of blossoms and this may need to be repeated $\Omega(\sqrt{n})$ times before a maximum matching can be computed~\cite{HoK73}.
One way to circumvent this issue for a specific graph class is to use the structural properties of this class in order to compute the augmenting paths more efficiently.

For instance, a well-known greedy approach consists in, given some total ordering $(v_1,v_2,\ldots,v_n)$ over the vertices in the graph, to consider the exposed vertices $v_i$ by increasing order, then to try to match them with some exposed neighbour $v_j$ that appears later in the ordering~\cite{Dra97}.
The candidate vertex $v_j$ can be chosen either arbitrarily or according to some specific rules depending on the graph class we consider.
On the positive side, variants of the greedy approach have been used in order to compute maximum matchings in quasi linear-time on some graph classes that admit a vertex-ordering characterization, such as: interval graphs~\cite{MoJ89}, convex bipartite graphs~\cite{Glo67}, strongly chordal graphs~\cite{DaK98}, cocomparability graphs~\cite{MNN17}, etc.
But on the negative side, the greedy approach does not look promising for chordal graphs (defined by the existence of a perfect elimination ordering) since computing a maximum matching in a given split graph is already ${\cal O}(n^2)$-time equivalent to the same problem on general bipartite graphs~\cite{DaK98}.

More related to our work are the reduction rules, presented in~\cite{KaS81}, for the vertices of degree at most two.
Indeed, since a pendant vertex in any path, and in particular in an augmenting one, can only be an endpoint, it can be easily seen that we can always match a pendant vertex with its unique neighbour w.l.o.g.
A slightly more complicated reduction rule is presented for degree-two vertices in~\cite{KaS81}.
Nevertheless, no such rule is likely to exist already for degree-three vertices since {\sc Maximum Matching} on cubic graphs is linear-time equivalent to {\sc Maximum Matching} on general graphs~\cite{Bie01}.
Our initial goal was to extend similar reduction rules to {\em module-orderings} -- defined next -- of which vertex-orderings are a particular case.

\paragraph{\sc Modular Decomposition.}
A module in a graph $G=(V,E)$ is any vertex-subset $X$ such that every vertex of $V \setminus X$ is either adjacent to every of $X$ or nonadjacent to every of $X$.
Trivial examples of modules are $\emptyset, V$ and $\{v\}$ for every $v \in V$.
Roughly, the {\em modular decomposition} of $G$ is a recursive decomposition of $G$ according to its nontrivial modules~\cite{HaP10}.
We postpone its formal definition until Section~\ref{sec:prelim}.
For now, we only want to stress that the vertices in the ``quotient subgraphs'' that are outputted by this decomposition represent modules of $G$.

The use of modular decomposition in the algorithmic field has a rich history.
The successive improvements on the best-known complexity for computing this decomposition are already interesting on their own since they required the introduction of several new techniques~\cite{CoH94,DGM01,HMP04,MRS94,TCHP08}.
There is now a practical linear-time algorithm for computing the modular decomposition of any graph~\cite{TCHP08}.
Furthermore, the story does not end here since modular decomposition was also applied in order to solve several graph-theoretic problems ({\it e.g.}, see~\cite{ALMH+17,BKM06,CDP18,Dah98,FLMT18,GLO13}).
Our main motivation for considering modular decomposition in this note is its recent use in the field of parameterized complexity for {\em polynomial} problems.
-- For some earlier applications to NP-hard problems, see~\cite{GLO13}. --
More precisely, let us call {\em modular-width} of a graph $G$ the minimum $k \geq 2$ such that every quotient subgraph in the modular decomposition of $G$ is either ``degenerate'' ({\it i.e.}, complete or edgeless) or of order at most $k$.
With Coudert, we proved in~\cite{CDP18} that many ``hard'' graph problems in P -- for which no linear-time algorithm is likely to exist -- can be solved in $k^{{\cal O}(1)}(n+m)$-time on graphs with modular-width at most $k$.
In particular, we proposed an ${\cal O}(k^4n +m)$-time algorithm for {\sc Maximum Matching}.

One appealing aspect of our approach in~\cite{CDP18} was that, for most problems studied, we obtained a linear-time reduction from the input graph $G$ to some (smaller) quotient subgraph $G'$ in its modular decomposition.
-- We say that the problem is preserved by quotient. --
This paved the way to the design of efficient algorithms for these problems on graph classes with {\em unbounded} modular-width, assuming their quotient subgraphs are simple enough w.r.t. the problem at hands.
We illustrated this possibility through the case of $(q,q-3)$-graphs ({\it i.e.}, graphs where no set of at most $q$ vertices, $q \geq 7$, can induce more than $q-3$ paths of length four).
However, this approach completely fell down for {\sc Maximum Matching}.
Indeed, our {\sc Maximum Matching} algorithm in~\cite{CDP18} works on supergraphs of the quotient graphs that need to be repeatedly updated every time a new augmenting path is computed.
Such approach did not help much in exploiting the structure of quotient graphs.
We managed to do so for $(q,q-3)$-graphs only through the help of a deeper structural theorem on the nontrivial modules in this class of graphs.
Nevertheless, to take a shameful example, it was not even known before this work whether {\sc Maximum Matching} could be solved faster than with the state-of-the art algorithms on the graphs that can be modularly decomposed into {\em paths}!

\subsection{Our contributions}

We propose {\em pruning rules} on the modules in a graph (some of them new and some others revisited) that can be used in order to compute {\sc Maximum Matching} in linear-time on several new graph classes.
More precisely, given a module $M$ in a graph $G=(V,E)$, recall that $M$ is corresponding to some vertex $v_M$ in a quotient graph $G'$ of the modular decomposition of $G$.
Assuming $v_M$ is a so-called {\em one-vertex extension} in $G'$ ({\it i.e.}, it is pendant, anti-pendant, universal, isolated or it has a twin), we show that a maximum matching for $G$ can be computed from a maximum matching of $G[M]$ and a maximum matching of $G \setminus M$ efficiently (see Section~\ref{sec:core}).
Our rules are purely {\em structural}, in the sense that they only rely on the structural properties of $v_M$ in $G'$ and not on any additional assumption on the nontrivial modules.
Some of these rules ({\it e.g.}, for isolated or universal modules) were first introduced in~\cite{CDP18} --- although with slightly different correctness proofs.
Our main technical contributions in this work are the pruning rules for, respectively, {\em pendant} and {\em anti-pendant} modules (see Sections~\ref{sec:anti-pendant} and~\ref{sec:pendant}). 
The latter two cases are surprisingly the most intricate.
In particular, they require amongst other techniques: the computation of specified augmenting paths of length up to $7$, the addition of some ``virtual edges'' in other modules, and a careful swapping between some matched and unmatched edges. 

Then, we are left with pruning every quotient subgraph in the modular decomposition by sequentially removing the one-vertex extensions.
We prove that the resulting ``pruned quotient subgraphs'' are unique (independent from the removal orderings) and that they can be computed in quasi linear-time using a {\em trie} data-structure (Section~\ref{sec:oruned-mod-dec}).
Furthermore, many interesting graph classes are totally decomposable w.r.t. this new ``pruned modular decomposition'', namely, every graph that can be decomposed into: trees, distance-hereditary graphs~\cite{BaM86}, tree-perfect graphs~\cite{BrL99}, etc. 
These classes are further discussed in Section~\ref{sec:applications}.
Note that for some of them, such as distance-hereditary graphs, we so obtain the first known linear-time algorithm for {\sc Maximum Matching}.
With slightly more work, we can extend our approach to every graph that can be modularly decomposed into cycles (Section~\ref{sec:unicycle}).
The case of graphs that can be modularly decomposed into {\em series-parallel graphs}~\cite{Epp92}, or more generally the graphs of bounded {\em modular treewidth}~\cite{PSS16}, is left as an interesting open question.

\bigskip
Definitions and our first results are presented in Section~\ref{sec:prelim}. 
We introduce the pruned modular decomposition in Section~\ref{sec:oruned-mod-dec}, where we show that it can be computed in quasi linear-time.
Then, the core of the paper is Section~\ref{sec:core} where the pruning rules are presented along with their correctness proofs. 
In particular, we state our main result in Section~\ref{sec:main}.
Applications of our approach to some graph classes are discussed in Section~\ref{sec:applications}.
Finally, we conclude in Section~\ref{sec:ccl} with some open questions.

\section{Preliminaries}\label{sec:prelim}

For the standard graph terminology, see~\cite{BoM08,Die10}.
We only consider graphs that are finite, simple (hence without loops or multiple edges), and unweighted -- unless stated otherwise.
Furthermore we make the standard assumption that graphs are encoded as adjacency lists.
In what follows, we introduce our main algorithmic tool for the paper as well as the graph problems we study.

\subsection*{Modular decomposition}

We define a {\em module} in a graph $G=(V,E)$ as any subset $M \subseteq V(G)$ such that for any $u,v \in M$ we have $N_G(v) \setminus M = N_G(u) \setminus M$.
Said otherwise, every vertex of $V \setminus M$ is either adjacent to every vertex of $M$ or nonadjacent to every vertex of $M$.
There are trivial examples of modules such as $\emptyset, \ V, \ \mbox{and} \ \{v\}$ for every $v \in V$.
Other less trivial examples of modules are the connected components of $G$ and, similarly, the co-connected components of $G$.
Let ${\cal P} = \{M_1,M_2,\ldots,M_p\}$ be a partition of the vertex-set $V$.
If for every $1 \leq i \leq p$, $M_i$ is a module of $G$, then we call ${\cal P}$ a {\em modular partition} of $G$.
By abuse of notation, we will sometimes identify a module $M_i$ with the induced subgraph $H_i = G[M_i]$, {\it i.e.}, we will write ${\cal P} = \{ H_1, H_2, \ldots H_p\}$.

The {\em quotient subgraph} $G / {\cal P}$ has for vertex-set ${\cal P}$, and there is an edge between every two modules $M_i,M_j \in {\cal P}$ such that $M_i \times M_j \subseteq E$.
Conversely, let $G'=(V',E')$ be a graph and let ${\cal P} = \{ H_1, H_2, \ldots H_p\}$. be a collection of subgraphs.
The {\em substitution graph} $G'({\cal P})$ is obtained from $G'$ by replacing every vertex $v_i \in V'$ with a module inducing $H_i$.
In particular, for $G' = G / {\cal P}$ we have that $G'({\cal P}) = G$.

\medskip
The {\em modular decomposition} of $G$ is a compact representation of all the modules in $G$ (in ${\cal O}(n+m)$-space), that can be recursively defined as follows.
First we say that $G$ is {\em prime} if its only modules are trivial ({\it i.e.}, $\emptyset, \ V, \ \mbox{and the singletons} \ \{v\}$).
Roughly, the modular decomposition of $G$ is a tree-like collection of prime quotient subgraphs of $G$. 
In order to make this more precise, let us introduce a few more notions.
We call a module $M$ {\em strong} if it does not overlap any other module, {\it i.e.}, for any module $M'$ of $G$, either one of $M$ or $M'$ is contained in the other or $M$ and $M'$ do not intersect. 
Let ${\cal M}(G)$ be the family of all inclusion wise maximal strong modules of $G$ that are proper subsets of $V$.
The family ${\cal M}(G)$ is a modular partition of $G$~\cite{HaP10}, and so, we can define $G' = G / {\cal M}(G)$.
The following structure theorem is due to Gallai.

\begin{theorem}[~\cite{Gal67}]\label{thm:modular-dec}
For an arbitrary graph $G$ exactly one of the following conditions is satisfied.
\begin{enumerate}
\item $G$ is disconnected;
\item its complement $\overline{G}$ is disconnected;
\item or its quotient graph $G'$ is prime for modular decomposition.
\end{enumerate}
\end{theorem}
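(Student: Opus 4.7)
The plan is to prove the theorem in two steps: first, that the three conditions are mutually exclusive, and second, that at least one of them always holds. I would start with mutual exclusivity. For any $G$ on at least two vertices, $G$ and $\overline{G}$ cannot both be disconnected: any vertex in one connected component of $G$ is, in $\overline{G}$, adjacent to every vertex of every other component, so $\overline{G}$ has a spanning ``star-like'' connected subgraph. Moreover, if $G$ is disconnected, its connected components are pairwise non-overlapping modules that are maximal among strong proper modules, hence they coincide with ${\cal M}(G)$, and so $G'$ is edgeless on at least two vertices --- not prime, since every subset of its vertex-set is then a module. The symmetric argument on $\overline{G}$ rules out $(3)$ in case $(2)$. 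The trivial case $|V(G)|=1$ falls into $(3)$.

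For the main direction, I would assume $G$ and $\overline{G}$ are both connected and show that $G'=G/{\cal M}(G)$ is prime. The key tool is the standard module correspondence between a graph and its quotient: for any modular partition ${\cal P}$ of $G$, a subset ${\cal N}\subseteq {\cal P}$ is a module of $G/{\cal P}$ if and only if the union $\bigcup_{M\in{\cal N}} M$ is a module of $G$. This follows directly from the definition of a module together with the ``uniform adjacency'' between two distinct parts of ${\cal P}$ in $G$. Assuming for contradiction that $G'$ has a nontrivial module ${\cal N}$, the union $N=\bigcup_{M\in{\cal N}} M$ is then a proper, non-singleton module of $G$ that strictly contains each $M\in{\cal N}$.

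The main obstacle --- and the only place where the hypothesis that $G$ and $\overline{G}$ are both connected is really used --- is to upgrade $N$ to a \emph{strong} module of $G$, which will contradict the maximality of the elements of ${\cal M}(G)$. For this I would rely on the classical ``module algebra'': whenever two modules $A,B$ of $G$ overlap, each of $A\cap B$, $A\cup B$, $A\setminus B$, $A\triangle B$ is again a module. Starting from any hypothetical module $N'$ that overlaps $N$, I would combine $N'$ with the individual members of ${\cal N}$ under these operations to produce either an overlap with a single $M\in{\cal N}$ (impossible, since $M$ is strong) or a new strong proper module of $G$ properly containing some $M\in{\cal N}$, again contradicting maximality. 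The role of connectivity of both $G$ and $\overline{G}$ is precisely to rule out the degenerate outcomes of this manipulation (empty intersections or full unions with $V$), since disconnection of one of them is exactly what would allow the quotient to be non-prime in the first place. Combining the three cases yields the theorem.
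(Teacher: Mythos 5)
The paper gives no proof of Theorem~\ref{thm:modular-dec} at all --- it is stated as a classical result of Gallai with a citation --- so there is no in-paper argument to compare yours against; I can only judge the proposal on its own terms. Your scaffolding is the standard one and is essentially sound: mutual exclusivity via the observation that a disconnected $G$ (resp.\ $\overline{G}$) forces ${\cal M}(G)$ to be the connected (resp.\ co-connected) components and hence the quotient to be edgeless (resp.\ complete); and, for the main implication, the correspondence between modules of $G/{\cal P}$ and unions of parts of ${\cal P}$, followed by a maximality argument against ${\cal M}(G)$. (There is a harmless degenerate issue when the quotient has exactly two vertices, where ``prime'' as defined in the paper holds vacuously, but that is a matter of convention.)

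The genuine gap is that the one step carrying all the content of the theorem is asserted rather than proved. Everything reduces to the lemma: \emph{if $A$ and $B$ are overlapping modules of $G$ with $A\cup B=V$, then $G$ or $\overline{G}$ is disconnected.} (Sketch: the three nonempty sets $A\setminus B$, $A\cap B$, $B\setminus A$ are modules; picking $x\in A\setminus B$, the module property of $B$ forces $x$ to be joined to all of $B$ or to none of it, and propagating this shows the three sets are pairwise completely joined or pairwise completely non-adjacent, so $\overline{G}$ or $G$ is disconnected.) With this lemma, the clean route is: when $G$ and $\overline{G}$ are both connected, the union of two overlapping proper modules is again a \emph{proper} module, so the maximal proper modules are pairwise non-overlapping, hence strong, hence equal to ${\cal M}(G)$; a nontrivial module of $G'$ would then lift to a proper module of $G$ strictly containing some $M\in{\cal M}(G)$, a contradiction. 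Your sketch instead says connectivity ``rules out the degenerate outcomes \dots since disconnection of one of them is exactly what would allow the quotient to be non-prime in the first place,'' which as written is circular --- it restates the theorem instead of proving the implication. Relatedly, your plan to ``upgrade $N$ to a strong module'' by combining it with an overlapping $N'$ does not by itself yield a contradiction: you must first pass to a module maximal among \emph{all} proper modules containing $N$ and only then show that this maximal module is strong, which again requires exactly the union lemma above. Until that lemma is stated and proved, the central implication is missing.
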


Then, the modular decomposition of $G$ is formally defined as follows.
We output the quotient graph $G'$ and, for any strong module $M \in {\cal M}(G)$ that is nontrivial (possibly none if $G=G'$), we also output the modular decomposition of $G[M]$.
Observe that, by Theorem~\ref{thm:modular-dec}, the subgraphs from the modular decomposition are either edgeless, complete, or prime for modular decomposition. 
See Fig.~\ref{fig:modular-dec} for an example.

\begin{figure}[h!]
\centering
\begin{subfigure}[b]{.46\textwidth}\centering
\resizebox{!}{4cm}{ \begin{tikzpicture}

\node[circle,fill=black,inner sep=0pt,minimum size=5pt,label=above:{$a$}] (a) at (0.8, 4.5) {};
\node[circle,fill=black,inner sep=0pt,minimum size=5pt,label=above:{$b$}] (b) at (2.2, 4.5) {};

\node[circle,fill=black,inner sep=0pt,minimum size=5pt,label=above left:{$c$}] (c) at (0, 3) {};
\node[circle,fill=black,inner sep=0pt,minimum size=5pt,label=below left:{$d$}] (d) at (1, 3) {};
\node[circle,fill=black,inner sep=0pt,minimum size=5pt,label=below right:{$e$}] (e) at (2, 3) {};
\node[circle,fill=black,inner sep=0pt,minimum size=5pt,label=above right:{$f$}]  (f) at (3, 3) {};

\node[circle,fill=black,inner sep=0pt,minimum size=5pt,label=right:{$g$}] (g) at (1.5, 1.5) {};

\node[circle,fill=black,inner sep=0pt,minimum size=5pt,label=below left:{$h$}] (h) at (0, 0) {};
\node[circle,fill=black,inner sep=0pt,minimum size=5pt,label=above left:{$i$}] (i) at (1, 0) {};
\node[circle,fill=black,inner sep=0pt,minimum size=5pt,label=above right:{$j$}] (j) at (2, 0) {};
\node[circle,fill=black,inner sep=0pt,minimum size=5pt,label=below right:{$k$}]  (k) at (3, 0) {};

\draw (a) -- (c) -- (b) -- (d) -- (a) -- (e) -- (b) -- (f) -- (a);
\draw (c) -- (d) -- (e) -- (f);
\draw (c) -- (g) -- (d);
\draw (e) -- (g) -- (f);
\draw (h) -- (g) -- (i);
\draw (j) -- (g) -- (k);
\draw (h) -- (i) -- (j) -- (k);
\draw (h) to [out=-30,in=-150] (k);
\end{tikzpicture} }
\end{subfigure}\hfill
\begin{subfigure}[b]{.52\textwidth}\centering
\resizebox{!}{4.5cm}{ \begin{tikzpicture}
\node (ab) [rectangle,fill=black,minimum size=8pt,label=above:{$ab$}, anchor=center] {};
\node (cdef) [rectangle,fill=black,minimum size=8pt, label=above:{$cdef$}, anchor=center, right = 1cm of ab] {};
\node (g) [circle,fill=black,inner sep=0pt,minimum size=8pt,label=above:{$g$}, right = 1cm of cdef]  {};
\node (hijk) [rectangle,fill=black,minimum size=8pt, label=above:{$hijk$}, anchor=center, right = 1cm of g] {};
\draw (ab) -- (cdef) -- (g) -- (hijk);
\node (block1) [style={ellipse, x radius=5cm, y radius=3cm, draw=black, inner sep=3ex}, label=right:{prime}, fit={(ab) (cdef) (g) (hijk)}] {};

\node (a) [circle,fill=black,inner sep=0pt,minimum size=8pt,label=left:{$a$}, below= 1cm of block1, xshift=-5cm]  {};
\node (b) [circle,fill=black,inner sep=0pt,minimum size=8pt,label=left:{$b$}, right = .5cm of a, yshift=3mm]  {};
\node (block2) [style={ellipse, x radius=2cm, y radius=1cm, draw=black, inner sep=1ex}, label=below:{edgeless}, fit={(a) (b)}] {};

\node (c) [circle,fill=black,inner sep=0pt,minimum size=8pt,label=right:{$c$}, below= 1.5cm of block1, xshift=-1.5cm]  {};
\node (d) [circle,fill=black,inner sep=0pt,minimum size=8pt,label=left:{$d$}, below= .5cm of c, xshift=3mm]  {};
\node (e) [circle,fill=black,inner sep=0pt,minimum size=8pt,label=left:{$e$}, below= .5cm of d, xshift=3mm]  {};
\node (f) [circle,fill=black,inner sep=0pt,minimum size=8pt,label=left:{$f$}, below= .5cm of e, xshift=3mm]  {};
\draw (c) -- (d) -- (e) -- (f);
\node (block3) [style={ellipse, x radius=2cm, y radius=1cm, draw=black, inner sep=1ex}, label=below:{prime}, fit={(c) (d) (e) (f)}] {};

\node (hj) [rectangle,fill=black,minimum size=8pt,label=above:{$hj$}, below= 1.5cm of block1, xshift=3cm]  {};
\node (ik) [rectangle,fill=black,minimum size=8pt,label=above:{$ik$}, right = .5cm of hj, yshift=-3mm]  {};
\draw (hj) -- (ik);
\node (block4) [style={ellipse, x radius=2cm, y radius=1cm, draw=black, inner sep=2ex}, label=above:{complete}, fit={(hj) (ik)}] {};

\node (j) [circle,fill=black,inner sep=0pt,minimum size=8pt,label=left:{$j$}, below= 1cm of block4, xshift=-1cm]  {};
\node (h) [circle,fill=black,inner sep=0pt,minimum size=8pt,label=left:{$h$},  left = .5cm of j, yshift=-3mm]  {};
\node (block5) [style={ellipse, x radius=2cm, y radius=1cm, draw=black, inner sep=1ex}, label=below:{edgeless}, fit={(j) (h)}] {};

\node (i) [circle,fill=black,inner sep=0pt,minimum size=8pt,label=left:{$i$}, below = 1cm of block4, xshift=1cm]  {};
\node (k) [circle,fill=black,inner sep=0pt,minimum size=8pt,label=left:{$k$},  right = .5cm of i, yshift=-3mm]  {};
\node (block6) [style={ellipse, x radius=2cm, y radius=1cm, draw=black, inner sep=1ex}, label=below:{edgeless}, fit={(i) (k)}] {};

\draw (block1) -- (block2);
\draw (block3) -- (block1);
\draw (block4) -- (block1);
\draw (block4) -- (block5);
\draw (block4) -- (block6);
\end{tikzpicture} }
\end{subfigure}
\caption{A graph and its modular decomposition.}
\label{fig:modular-dec}
\end{figure}
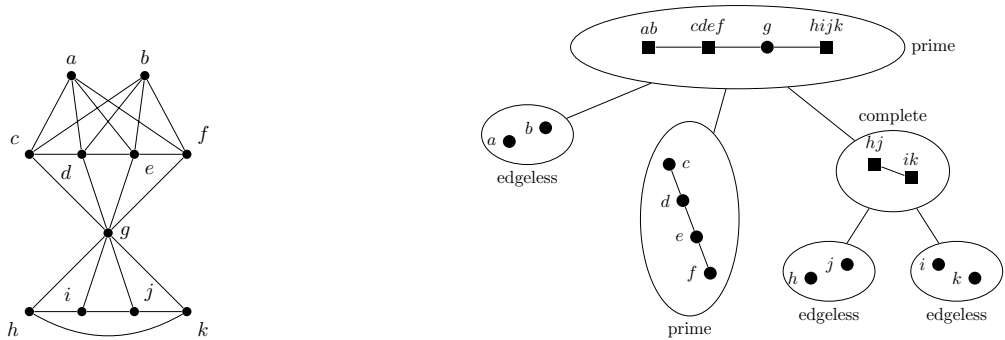

The modular decomposition of a given graph $G=(V,E)$ can be computed in linear-time~\cite{TCHP08}.
We stress that there are many graph classes that can be characterized using the modular decomposition~\cite{ABNT16}.
In particular, $G$ is a cograph if and only if every quotient subgraph in its modular decomposition is either complete or disconnected~\cite{CPS85}.

\subsection*{Maximum Matching}

A matching in a graph is defined as a set of edges with pairwise disjoint end vertices.
The maximum cardinality of a matching in a given graph $G=(V,E)$ is denoted by $\mu(G)$.
We consider the problem of computing a matching of maximum cardinality.

\begin{center}
	\fbox{
		\begin{minipage}{.95\linewidth}
			\begin{problem}[\textsc{Maximum Matching}]\
				\label{prob:max-matching} 
					\begin{description}
					\item[Input:] A graph $G=(V,E)$.
					\item[Output:] A matching of $G$ with maximum cardinality.
				\end{description}
			\end{problem}     
		\end{minipage}
	}
\end{center}

We remind the reader that {\sc Maximum Matching} can be solved in ${\cal O}(m\sqrt{n})$-time on general graphs --- although we do not use this result directly in our paper~\cite{Gab17,MiV80}.
Furthermore, let $G=(V,E)$ be a graph and let $F \subseteq E$ be a matching of $G$.
We call a vertex matched if it is incident to an edge of $F$, and exposed otherwise.
Then, we define an $F$-augmenting path as a path where the two ends are exposed, and the edges belong alternatively to $F$ and not to $F$.
It is well-known and easy to check that, given an $F$-augmenting path $P = (v_1,v_2, \ldots, v_{2\ell})$, the matching $E(P)\Delta F$ (obtained by symmetric difference on the edges) has larger cardinality than $F$.

\begin{lemma}[Berge,~\cite{Ber57}]\label{lem:berge}
A matching $F$ in $G=(V,E)$ is maximum if and only if there is no $F$-augmenting path.
\end{lemma}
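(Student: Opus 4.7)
The plan is to prove both directions by a symmetric-difference argument. The easy direction is: if an $F$-augmenting path $P$ exists, then the observation recalled just before the statement of the lemma already yields a matching $E(P)\Delta F$ of cardinality $|F|+1$, so $F$ is not maximum. Contrapositively, if $F$ is maximum there can be no $F$-augmenting path.

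For the reverse direction, suppose $F$ is not maximum and pick any matching $F^*$ with $|F^*|>|F|$. Consider the edge set $H = F \Delta F^*$ and view it as a subgraph of $G$. The key observation is that every vertex $v \in V$ is incident to at most one edge of $F$ and at most one edge of $F^*$, hence to at most two edges of $H$, so $H$ decomposes into vertex-disjoint simple paths and simple cycles whose edges alternate between $F$ and $F^*$.

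Now I would do a simple counting per connected component of $H$. Every alternating cycle must have even length, so it contains the same number of $F$-edges as $F^*$-edges. Every alternating path $Q$ contains either the same number of edges from $F$ and $F^*$, or exactly one more of one type than the other, depending on the types of its two end-edges. Since $|F^*\setminus F| > |F\setminus F^*|$, summing over the components forces the existence of at least one component $Q^*$ of $H$ that is a path containing strictly more $F^*$-edges than $F$-edges, which in turn forces both end-edges of $Q^*$ to belong to $F^*\setminus F$.

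The final step is to check that this $Q^*$ is actually $F$-augmenting in $G$: its edges alternate between $F$ and $F^*$ (hence in particular between $F$ and $E\setminus F$), and its two endpoints are incident in $Q^*$ only to edges of $F^*\setminus F$, so they are not covered by $F$ (since $F$ is a matching and any edge of $F$ at such an endpoint would also be an edge of $F\cap F^*$, hence not in $H$). There is no real obstacle here; the only mild subtlety is making sure the endpoints of $Q^*$ are genuinely $F$-exposed in the whole graph $G$ and not merely in $H$, which is handled by the preceding remark that $F$-edges at those endpoints would have to lie in $F\cap F^*$ and therefore be absent from $H$.
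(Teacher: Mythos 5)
The paper offers no proof of this lemma: it is Berge's classical theorem, quoted with a citation to~\cite{Ber57}, so there is no in-paper argument to compare against. Your proof is the standard symmetric-difference argument and is correct in substance: the forward direction via $E(P)\Delta F$ (which the paper itself recalls just above the statement), and the reverse direction by decomposing $H = F\Delta F^*$ into alternating paths and cycles and counting edges of each type per component. The one place where your justification does not quite close is the final claim that the endpoints of $Q^*$ are $F$-exposed. You correctly observe that an $F$-edge at such an endpoint cannot lie in $H$ (it would give the endpoint a second $H$-edge, contradicting that it is a degree-one endpoint of its component), and you conclude that any such edge would lie in $F\cap F^*$; but you then stop at ``hence not in $H$,'' which is not by itself a contradiction. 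The missing half-line is that this edge, together with the end-edge of $Q^*$ incident to the same vertex (which lies in $F^*\setminus F$ and is therefore distinct from it), would give that vertex two distinct edges of $F^*$ --- contradicting that $F^*$ is a matching. Note that the contradiction comes from $F^*$ being a matching, not from $F$ being one as your parenthetical suggests. With that half-line added, the proof is complete and is the standard one.
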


In this paper, we will consider an intermediate matching problem, first introduced in~\cite{CDP18}.

\begin{center}
	\fbox{
		\begin{minipage}{.95\linewidth}
			\begin{problem}[\textsc{Module Matching}]\
				\label{prob:mod-matching} 
					\begin{description}
					\item[Input:] A graph $G'=(V',E')$ with the following additional information;
                                  \begin{itemize}
                                    \item a collection of subgraphs ${\cal P} = \{ H_1, H_2, \ldots, H_p \}$;

                                    \item a collection ${\cal F} = \{ F_1, F_2, \ldots, F_p \}$, 

                                    with $F_i$ being a maximum matching of $H_i$ for every $i$.
                                  \end{itemize}
					\item[Output:] A matching of $G = G'({\cal P})$ with maximum cardinality.
				\end{description}
			\end{problem}     
		\end{minipage}
	}
\end{center}

A natural choice for {\sc Module Matching} would be to take ${\cal P} = {\cal M}(G)$.
However, we will allow ${\cal P}$ to take different values for our reduction rules.
In particular, by setting $G' = G$, ${\cal P} = V$ and ${\cal F}$ a collection of empty matchings, one obtains that \textsc{Module Matching} is a strict generalization of {\sc Maximum Matching}.

\paragraph{Additional notations.}
Let $\langle G',{\cal P},{\cal F} \rangle$ be any instance of \textsc{Module Matching}.
The order of $G'$, equivalently the cardinality of ${\cal P}$, is denoted by $p$.
For every $1 \leq i \leq p$ let $M_i = V(H_i)$ and let $n_i = |M_i|$ be the order of $H_i$.
We denote $\delta_i = |E(M_i,\overline{M_i})|$ the size of the cut with all the edges between $M_i$ and $N_G(M_i)$.
In particular, we have $\delta_i = \sum_{v_j \in N_{G'}(v_i)} n_in_j$. 
Let us define $\Delta m(G') = \sum_i \delta_i$.
We will omit the dependency in $G'$ if it is clear from the context.
Finally, let $\Delta \mu = \mu(G) - \sum_i \mu(H_i)$. 

\medskip
Our framework is based on the following lemma (inspired from~\cite{CDP18}).

\begin{lemma}\label{lem:incr}
Let $G=(V,E)$ be a graph.
Suppose that for every $H'$ in the modular decomposition of $G$ we can solve \textsc{Module Matching} on any instance $\langle H',{\cal P},{\cal F} \rangle$ in time $T(p,\Delta m,\Delta \mu)$, where $T$ is a subadditive function\footnote{We stress that every polynomial function is subadditive.}.
Then, we can solve \textsc{Maximum Matching} on $G$ in time ${\cal O}(T({\cal O}(n),m,n))$.
\end{lemma}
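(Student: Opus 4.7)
The plan is to proceed by a bottom-up traversal of the modular decomposition tree of $G$, invoking the assumed oracle for \textsc{Module Matching} once at each internal node. First I would compute the modular decomposition of $G$ in $\mathcal{O}(n+m)$ time via~\cite{TCHP08}. The leaves of the tree are single vertices, whose maximum matching is empty; each internal node corresponds to one quotient subgraph $H'$ of the modular decomposition, together with its maximal strong submodules $M_1,\dots,M_p$.

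At such an internal node, assume inductively that maximum matchings $F_i$ of each $G[M_i]$ are already in hand from the children. Applying the oracle to the instance $\langle H',\{G[M_i]\}_i,\{F_i\}_i\rangle$ returns a maximum matching of the substitution graph $H'(\{G[M_i]\}_i)$, which is precisely the induced subgraph of $G$ on $\bigcup_i M_i$. Correctness follows by induction on the depth: at the root the substitution graph is $G$ itself, so we terminate with a maximum matching of $G$.

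For the running time I would bound the total cost by $\mathcal{O}(n+m) + \sum_{H'} T(p,\Delta m,\Delta \mu)$, where the summation ranges over all quotient subgraphs of the modular decomposition and the parameters $p,\Delta m,\Delta \mu$ are read off at the corresponding node. Three accounting identities are needed. First, $\sum_{H'} p = \mathcal{O}(n)$, since the modular decomposition tree has $\mathcal{O}(n)$ nodes and this sum counts each non-root node exactly once (as a child of its parent). Second, $\sum_{H'} \Delta m = \mathcal{O}(m)$: each edge $uv$ of $G$ is an inter-module edge at exactly one quotient subgraph, namely the one associated with the least common ancestor of $u$ and $v$ in the decomposition tree, and so contributes $2$ to exactly one term $\Delta m$. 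Third, $\sum_{H'} \Delta \mu = \mu(G) \leq n/2$ by telescoping: for every strong module $M$ appearing as a node of the tree, the quantity $\mu(G[M])$ appears once with positive sign (as the $\mu$ of the substitution graph at that node) and once with negative sign (in the term $\sum_i \mu(H_i)$ of its parent), so everything cancels except $\mu(G)$ at the root and the zeros at the leaves. Invoking the assumed subadditivity together with monotonicity of $T$ --- which for the polynomial complexity bounds considered in the paper gives $\sum_{H'} T(p,\Delta m,\Delta \mu) = \mathcal{O}(T(\sum p,\sum \Delta m,\sum \Delta \mu))$ --- the total cost is $\mathcal{O}(T(n,m,n))$, as claimed.

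The main obstacle is not algorithmic but arithmetic: one must verify carefully that each edge of $G$ is charged exactly once across the levels of the decomposition (hence the LCA accounting) and that the matching-size increments cancel correctly when summed over the whole tree (hence the telescoping argument). Both become routine once the right viewpoint is adopted, and no further subtlety arises in the reduction itself.
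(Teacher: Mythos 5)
Your proposal is correct and follows essentially the same route as the paper: a traversal of the modular decomposition tree with one oracle call per quotient subgraph, and the three accounting identities (total order of quotients is ${\cal O}(n)$, each edge charged to exactly one quotient, and the telescoping of $\Delta\mu$ to $\mu(G)$) are exactly the ones the paper uses, phrased there as $\sum_i N_i = N-p$, $\sum_i m_i = m - \Delta m$, and $\sum_i \mu(H_i) = \mu(G) - \Delta\mu$ inside an induction. The only cosmetic difference is that the paper writes the argument as a top-down recursion rather than an explicit bottom-up sweep with LCA bookkeeping.
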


\begin{proof}
Let $N$ be the sum of the orders of all the subgraphs in the modular decomposition of $G$.
Next, we describe an algorithm for \textsc{Maximum Matching} that runs in time ${\cal O}(T(N,m,\mu(G)))$.  
The latter will prove the lemma since we have $N = {\cal O}(n)$~\cite{HaP10}.
We prove our result by induction on the number of subgraphs in the modular decomposition of $G$.
For that, let $G'=({\cal M}(G),E')$ be the quotient graph of $G$.
There are two cases.

Suppose $G = G'$, or equivalently the modular decomposition of $G$ is reduced to $G'$.
Then, solving {\sc Maximum Matching} for $G$ is equivalent to solving {\sc Module Matching} on $\langle G,{\cal M}(G), {\cal F} \rangle$ with ${\cal M}(G) = V$ and ${\cal F}$ being a collection of $n$ empty matchings.
By the hypothesis it can be done in time $T(p,\Delta m,\Delta \mu)$, where $p = |V(G')|$.
In this situation we get $p = |V| = N$, $\Delta m = |E| = m$, $\Delta \mu = \mu(G)$.
Since $T$ is subadditive by the hypothesis, we have that {\sc Maximum Matching} can be solved in this case in time ${\cal O}(T(N,m,\mu(G)))$. 

Otherwise, $G \neq G'$.
Let ${\cal M}(G) = \{M_1, M_2, \ldots, M_p\}$.
For every $1 \leq i \leq p$, we call the algorithm recursively on $H_i = G[M_i]$ and we so obtain a maximum matching $F_i$ for this subgraph.
By the induction hypothesis, this step takes times ${\cal O}(\sum_{i=1}^p T(N_i,m_i,\mu(H_i)))$, with $N_i$ being the sum of the orders of all the subgraphs in the modular decomposition of $H_i$ and $m_i = |E(H_i)|$.
Furthermore, let ${\cal F} = \{F_1,F_2,\ldots,F_p\}$.
Observe that we have $\sum_{i=1}^p N_i = N - p, \ \sum_{i=1}^p m_i = m - \Delta m \ \mbox{and} \ \sum_{i=1}^p \mu(H_i) = \mu(G) - \Delta \mu$.
In order to compute a maximum matching for $G$, we are left with solving {\sc Module Matching} on $\langle G', {\cal M}(G), {\cal F} \rangle$, that takes time $T(p,\Delta m,\Delta \mu)$ by the hypothesis.
Overall, since $T$ is subadditive, the total running time is an ${\cal O}(T(N,m,\mu(G)))$. 
\end{proof}

An important observation for our subsequent analysis is that, given any module $M$ of a graph $G$, the internal structure of $G[M]$ has no more relevance after we computed a maximum matching $F_M$ for this subgraph.
More precisely, we will use the following lemma:

\begin{lemma}[~\cite{CDP18}]\label{lem:mw-matching-reduction}
Let $M$ be a module of $G=(V,E)$, let $G[M] = (M,E_M)$ and let $F_M  \subseteq E_M$ be a maximum matching of $G[M]$.
Then, every maximum matching of $G_M' = (V, (E \setminus E_M) \cup F_M)$ is a maximum matching of $G$.
\end{lemma}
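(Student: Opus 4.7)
The plan is to prove this by double-inclusion on matching sizes. Since $F_M \subseteq E_M$, we have $(E \setminus E_M) \cup F_M \subseteq E$, hence $G_M'$ is a spanning subgraph of $G$. Therefore every matching of $G_M'$ is a matching of $G$, which gives the easy direction $\mu(G_M') \leq \mu(G)$. Once we establish the reverse inequality $\mu(G_M') \geq \mu(G)$, any maximum matching of $G_M'$ will have size $\mu(G)$ and, being a matching of $G$, will be a maximum matching of $G$ as desired.

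The substantive direction is therefore to show $\mu(G_M') \geq \mu(G)$. I would take a maximum matching $F$ of $G$ and partition it into $F = F^{in} \sqcup F^{cross} \sqcup F^{out}$, where $F^{in}$ consists of edges with both endpoints in $M$, $F^{cross}$ of edges with exactly one endpoint in $M$, and $F^{out}$ of edges disjoint from $M$. Setting $c = |F^{cross}|$ and letting $S$ denote the set of outside endpoints of the cross edges, I would build a matching $F'$ of $G_M'$ as follows: keep $F^{out}$ intact (its edges still lie in $G_M'$); choose a subset $F'_M \subseteq F_M$ of size $k$ (to be determined); and then match the $c$ vertices of $S$ to some $c$ vertices of $M$ that are unsaturated by $F'_M$. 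The key point enabling the last step is that $M$ is a module, so every vertex of $S$ is adjacent to every vertex of $M$; any injection from $S$ into the exposed part of $M$ therefore yields a valid matching.

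The main obstacle is to verify that a choice of $k$ exists making $|F'| \geq |F|$, i.e., $k + c \geq |F^{in}| + c$, while respecting $2k + c \leq |M|$ and $k \leq |F_M|$. I would combine two inequalities: the maximality of $F_M$ in $G[M]$ gives $|F_M| \geq |F^{in}|$ (since $F^{in}$ is itself a matching of $G[M]$), and the fact that the vertices of $M$ saturated by $F$ number $2|F^{in}| + c \leq |M|$. If $|F_M| \leq (|M|-c)/2$, taking $k = |F_M|$ works outright. Otherwise, the second inequality lets me take $k = \lfloor (|M|-c)/2 \rfloor$, which satisfies $k \geq |F^{in}|$ precisely because $2|F^{in}| + c \leq |M|$; note that here $k \leq |F_M|$ is immediate from the case hypothesis. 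In both cases $F'$ has size at least $|F^{in}| + c + |F^{out}| = |F|$, completing the argument.

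A sanity check I would include: the three pieces $F'_M$, the $S$-to-$M$ matching, and $F^{out}$ are pairwise vertex-disjoint. This is because $F'_M$ and the $S$-to-$M$ matching use disjoint subsets of $M$ by construction, while $F^{out}$ is vertex-disjoint from $M \cup S$ as $F$ was a matching in $G$ to begin with. This verification is routine and not the source of any real difficulty.
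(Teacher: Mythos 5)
Your argument is correct. Note, however, that the paper does not actually prove this lemma: it is imported verbatim from \cite{CDP18} and used as a black box, so there is no in-paper proof to compare against. Your exchange argument is a sound self-contained derivation: the easy inclusion $\mu(G_M')\leq\mu(G)$ from $G_M'$ being a spanning subgraph, and the reverse inequality by splitting a maximum matching $F$ of $G$ into $F^{in}$, $F^{cross}$, $F^{out}$ and rebuilding a matching of $G_M'$ of at least the same size. The two inequalities you invoke ($|F_M|\geq|F^{in}|$ by maximality of $F_M$ in $G[M]$, and $2|F^{in}|+c\leq|M|$ by counting saturated vertices of $M$) do suffice in both of your cases to produce a valid $k$ with $k\geq|F^{in}|$, $k\leq|F_M|$ and $2k+c\leq|M|$, and the module property guarantees that the $S$-to-$M$ edges exist and lie in $E\setminus E_M$. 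The disjointness check is also as routine as you claim. This is essentially the standard argument showing that only $\mu(G[M])$, and not the internal structure of $G[M]$, matters for $\mu(G)$, which is the spirit in which the lemma is used throughout the paper.
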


By Lemma~\ref{lem:mw-matching-reduction} we can modify our algorithmic framework as follows.
For every instance $\langle G', {\cal P}, {\cal F}\rangle$ for {\sc Module Matching}, we can assume that $H_i = (M_i,F_i)$ for every $1 \leq i \leq p$.

\paragraph{Data structures.}
Finally, let $\langle G', {\cal P}, {\cal F}\rangle$ be any instance for {\sc Module Matching}.
A {\em canonical ordering} of $H_i$ (w.r.t. $F_i)$ is a total ordering over $V(H_i)$ such that the exposed vertices appear first, and every two vertices that are matched together are consecutive.
In what follows, we will assume that we have access to a canonical ordering for every $i$.
Such orderings can be computed in time ${\cal O}(\sum_i |M_i| + |F_i|)$ by scanning all the modules and the matchings in ${\cal F}$, that is an ${\cal O}(\Delta m)$ provided $G'$ is connected.

Furthermore, let $F$ be a (not necessarily maximum) matching for the subdivision $G = G'({\cal P})$.
We will make the standard assumption that, for every $v \in V(G)$, we can decide in constant-time whether $v$ is matched by $F$, and if so, we can also access in constant-time to the vertex matched with $v$.

\section{A pruned modular decomposition}\label{sec:oruned-mod-dec}

In this section, we introduce a pruning process over the quotient subgraphs, that we use in order to refine the modular decomposition.

\begin{definition}\label{def:one-vertex-ext}
Let $G=(V,E)$ be a graph.
We call $v \in V$ a one-vertex extension if it falls in one of the following cases:
\begin{itemize}
\item $N_G[v] = V$ ({\em universal}) or $N_G(v) = \emptyset$ ({\em isolated}); 
\item $N_G[v] = V \setminus u$ ({\em anti-pendant}) or $N_G(v) = \{u\}$ ({\em pendant}), for some $u \in V \setminus v$;
\item $N_G[v] = N_G[u]$ ({\em true twin}) or $N_G(v) = N_G(u)$ ({\em false twin}), for some $u \in V \setminus v$.
\end{itemize}
\end{definition}

A pruned subgraph of $G$ is obtained from $G$ by sequentially removing one-vertex extensions (in the current subgraph) until it can no more be done.
This terminology was introduced in~\cite{LRT00}, where they only considered the removals of twin and pendant vertices.
Also, the clique-width of graphs that are totally decomposed by the above pruning process ({\it i.e.}, with their pruned subgraph being a singleton) was studied in~\cite{Rao08}
\footnote{Anti-twins are also defined as one-vertex extensions in~\cite{Rao08}. Their integration to this framework remains to be done.}.
Our contribution in this part is twofold.
First, we show that the gotten subgraph is ``almost'' independent of the removal ordering, {\it i.e.}, there is a unique pruned subgraph of $G$ (up to isomorphism).
The latter can be derived from the following (easy) lemma:

\begin{lemma}\label{lem:pruned-unique}
Let $G=(V,E)$ be a graph and let $v,v' \in V$ be one-vertex extensions of $G$.
Suppose that $v$ and $v'$ are {\em not} pairwise twins.
Then, $v'$ is also a one-vertex extension of $G \setminus v$.
\end{lemma}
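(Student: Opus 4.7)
The plan is a direct case analysis on the type of $v'$ as a one-vertex extension of $G$. The key observation is that for every vertex $w \neq v$ we have $N_{G \setminus v}(w) = N_G(w) \setminus \{v\}$, so deleting $v$ changes only the open/closed neighborhood of $v$'s own neighbors. It therefore suffices to verify that each of the six defining conditions of Definition~\ref{def:one-vertex-ext} for $v'$ in $G$ implies one of the (possibly different) six conditions for $v'$ in $G \setminus v$.

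The easy types come first. If $v'$ is universal or isolated in $G$, then the same trivially holds in $G \setminus v$. If $v'$ is a true (respectively, false) twin of some $u \in V \setminus \{v'\}$, then the hypothesis that $v$ and $v'$ are not pairwise twins forces $u \neq v$, and intersecting the identity $N_G[v'] = N_G[u]$ (respectively, $N_G(v') = N_G(u)$) with $V \setminus \{v\}$ yields $N_{G \setminus v}[v'] = N_{G \setminus v}[u]$ (respectively the open-neighborhood version). Hence $v'$ is still a twin of the same $u$ in $G \setminus v$.

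The pendant and anti-pendant cases each split on whether the distinguished vertex coincides with $v$. If $v'$ is pendant with unique neighbor $u$, then $u \neq v$ preserves $N_{G \setminus v}(v') = \{u\}$ and keeps $v'$ pendant, while $u = v$ makes $v'$ isolated in $G \setminus v$. Dually, if $v'$ is anti-pendant with unique non-neighbor $u$, the subcase $u \neq v$ preserves anti-pendancy of $v'$, whereas $u = v$ turns $v'$ into a universal vertex of $G \setminus v$. I do not expect a real obstacle beyond bookkeeping; the only points that deserve a second look are that in the two type-changing subcases the resulting status is still among the six listed in Definition~\ref{def:one-vertex-ext}, and that the non-twin hypothesis is genuinely used only to guarantee a surviving twin partner in the twin case (it is vacuous in all the other cases).
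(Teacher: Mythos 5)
Your proof is correct and follows essentially the same case analysis as the paper: universal/isolated vertices are unaffected, pendant (resp.\ anti-pendant) vertices stay pendant or become isolated (resp.\ stay anti-pendant or become universal), and the non-twin hypothesis is used exactly once to guarantee that the twin partner $u$ differs from $v$ and hence survives the deletion. Your write-up is in fact slightly more explicit than the paper's on that last point.
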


\begin{proof}
We need to consider several cases.
If $v'$ is either isolated or universal in $G$ then it stays so in $G \setminus v$.
If $v'$ is pendant in $G$ then it is either pendant or isolated in $G \setminus v$.
Similarly, if $v'$ is anti-pendant in $G$ then it is either anti-pendant or universal in $G \setminus v$.
Otherwise, $v'$ has a twin $u$ in $G$.
By the hypothesis, $u \neq v'$.
Then, we have that $u,v$ stay pairwise twins in $G \setminus v'$.
\end{proof}

\begin{corollary}\label{cor:unique-pruned}
Every graph $G=(V,E)$ has a unique pruned subgraph up to isomorphism.
\end{corollary}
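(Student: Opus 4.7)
The plan is to use Lemma~\ref{lem:pruned-unique} as the local confluence (``diamond'') step of an abstract rewriting argument. Termination is obvious because each reduction step strictly decreases the number of vertices, so by a Newman-style argument it suffices to prove local confluence up to isomorphism. I would proceed by induction on $|V(G)|$, with the hypothesis: for every graph $H$ with $|V(H)| < |V(G)|$, all pruned subgraphs of $H$ are pairwise isomorphic.

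For the inductive step, pick two pruned subgraphs $G_1, G_2$ of $G$ obtained by two reduction sequences. If $G$ has no one-vertex extension there is nothing to do; otherwise, let $v_1$ (resp.\ $v_2$) be the first one-vertex extension removed in the reduction sequence leading to $G_1$ (resp.\ $G_2$). If $v_1 = v_2$, then $G_1$ and $G_2$ are both pruned subgraphs of $G \setminus v_1$, so the induction hypothesis applies directly. The interesting case is $v_1 \neq v_2$, which splits as follows.

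\textbf{Case 1: $v_1,v_2$ are not twins.} By Lemma~\ref{lem:pruned-unique}, $v_2$ is a one-vertex extension of $G \setminus v_1$ and, symmetrically, $v_1$ is a one-vertex extension of $G \setminus v_2$. Hence both $G \setminus v_1$ and $G \setminus v_2$ admit $G \setminus \{v_1,v_2\}$ as a valid next step. Pick any pruned subgraph $G_{12}$ of $G \setminus \{v_1, v_2\}$; it is also a pruned subgraph of $G \setminus v_1$ and of $G \setminus v_2$ (the terminal condition is intrinsic to the current graph). Applying the induction hypothesis twice yields $G_1 \cong G_{12} \cong G_2$.

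\textbf{Case 2: $v_1,v_2$ are twins.} Here I close the diamond by a direct isomorphism rather than by further reduction: the bijection $\phi : V \setminus \{v_1\} \to V \setminus \{v_2\}$ sending $v_2 \mapsto v_1$ and fixing every other vertex is a graph isomorphism $G \setminus v_1 \to G \setminus v_2$, because $v_1$ and $v_2$ have the same adjacencies to $V \setminus \{v_1,v_2\}$ (and the induced subgraph on $V \setminus \{v_1,v_2\}$ is unchanged). Therefore any pruned subgraph of $G \setminus v_2$ is also (an isomorphic copy of) a pruned subgraph of $G \setminus v_1$, and the induction hypothesis on $G \setminus v_1$ gives $G_1 \cong G_2$.

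The only real obstacle is handling Case~2, since Lemma~\ref{lem:pruned-unique} explicitly excludes pairs of twins. The point is that in that case we do not need to continue reducing at all: the twin relation already provides the isomorphism that completes the diamond, which is why Lemma~\ref{lem:pruned-unique} restricts to non-twins without losing generality for the corollary.
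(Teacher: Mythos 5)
Your proof is correct and follows essentially the same route as the paper's: a Newman-style confluence argument by induction on $|V(G)|$ (the paper phrases it as a minimal counter-example), using Lemma~\ref{lem:pruned-unique} to close the diamond for non-twins and the direct isomorphism $G \setminus v_1 \cong G \setminus v_2$ to handle the twin case. Your treatment of the twin case is in fact spelled out slightly more explicitly than in the paper, which simply asserts that $G \setminus v$ and $G \setminus v'$ are isomorphic when $v,v'$ are twins.
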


\begin{proof}
Suppose for the sake of contradiction that $G$ has two non-isomorphic pruned subgraphs.
W.l.o.g., $G$ is a minimum counter-example.
In particular, for every one-vertex extension $v$ of $G$, we have that $G \setminus v$ has a unique pruned subgraph up to isomorphism.
Therefore, there exist $v,v' \in V$ such that: $v,v'$ are one-vertex extensions of $G$, and the pruned subgraphs of $G \setminus v$ and $G \setminus v'$ are non isomorphic.
We claim that $v$ is {\em not} a one-vertex extension of $G \setminus v'$ (resp., $v'$ is not a one-vertex extension of $G \setminus v$).
Indeed, otherwise, both the pruned subgraphs of $G \setminus v$ and of $G \setminus v'$ would be isomorphic to the pruned subgraphs of $G \setminus \{v,v'\}$.
By Lemma~\ref{lem:pruned-unique}, it implies that $v,v'$ are pairwise twins in $G$.
However, since $G \setminus v$ and $G \setminus v'$ are isomorphic, so are their respective pruned subgraphs.
A contradiction.
\end{proof}

For most classes of graphs that we consider in this note, the pruned subgraph is trivial (reduced to a singleton) and can be computed in linear-time.
For purpose of completeness, we observe in what follows that the same can be done for any graph (up to a logarithmic factor).

\begin{proposition}\label{prop:pruned-compute}
For every $G=(V,E)$, its pruned subgraph can be computed in ${\cal O}(n+m\log n)$-time.
\end{proposition}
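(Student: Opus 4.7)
The plan is to maintain the current graph $G'$ along with auxiliary data structures that support an online sequence of removals of one-vertex extensions, stopping when none remains; by Corollary~\ref{cor:unique-pruned} the order of removals is irrelevant to the final result. I would split one-vertex extensions into two families: four \emph{degree-based} types (isolated, pendant, universal, anti-pendant), identifiable from the current vertex degree and the current order $n'$, and the \emph{twin} type, identifiable only from the full adjacency pattern. Different data structures handle the two families.

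For the degree-based family I would use standard machinery: doubly-linked adjacency lists (so that deleting one edge costs $O(1)$), a counter $n'$ of the current order, and for each vertex a pointer into degree buckets both in $G$ and in $\overline{G}$, with $O(1)$ access to the four relevant buckets (degrees $0$ and $1$ on each side). Each edge removal updates two degree pointers and each bucket move is $O(1)$, for a total of $O(n+m)$ on the degree side.

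For the twin family I would build, in $O(n+m)$ time by bucket-sorting the adjacency lists, a trie $T_f$ (resp.\ $T_t$) indexed by the sorted open (resp.\ closed) neighborhoods; a leaf of size at least two certifies a pair of false (resp.\ true) twins. The crucial structural observation is \emph{monotonicity}: if $u_1,u_2$ have the same open or closed neighborhood, then any third vertex $v$ is adjacent to both or to neither of them, so deletions keep these neighborhoods equal, and twin classes can therefore only merge, never split. Hence after $v$ is removed, each neighbour $u \in N_{G'}(v)$ only needs its root-to-leaf trie path shortened by one edge --- the one labelled $v$ --- followed by a possible merge of its new leaf with an existing leaf. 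Storing the labelled children at each trie node in a balanced BST, each such update runs in $O(\log n)$, giving a total of $\sum_{v} \deg(v) \cdot O(\log n) = O(m \log n)$ on the twin side.

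Summing the two sides yields the announced $O(n + m\log n)$ bound. The main obstacle is the twin side: without the monotonicity observation one would have to cope with twin-class splits, whose detection by naive neighbourhood comparison would push the cost to $\Omega(nm)$. Monotonicity is what makes all updates purely incremental, reducing each to a single-label deletion in the trie, and leaves the $O(\log n)$ overhead from the labelled-children addressing as the only gap to linear time.
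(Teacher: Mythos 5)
Your proposal is correct and takes essentially the same route as the paper's proof: the four degree-based extensions (isolated, pendant, universal, anti-pendant) are detected in $O(1)$ each via a maintained degree sequence for a total of $O(n+m)$, and the twin detection is delegated to a neighborhood trie accounting for the $O(m\log n)$ term. The only difference is that the paper simply cites the trie data structure of~\cite{LRT00} as a black box for the twin side, whereas you sketch its internals yourself (your monotonicity observation --- twins stay twins under vertex deletion --- is the same fact the paper uses in the proof of Lemma~\ref{lem:pruned-unique}).
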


\begin{proof}
By Corollary~\ref{cor:unique-pruned}, we are left with greedily searching for, then eliminating, the one-vertex extensions. 
We can compute the ordered degree sequence of $G$ in ${\cal O}(n+m)$-time.
Furthermore, after any vertex $v$ is eliminated, we can update this sequence in ${\cal O}(|N(v)|)$-time.
Hence, up to a total update time in ${\cal O}(n+m)$, at any step we can detect and remove in constant-time any vertex that is either universal, isolated, pendant or anti-pendant.
Finally, in~\cite{LRT00} they proposed a trie data-structure supporting the following two operations: suppression of a vertex; and detection of true or false twins (if any).
The total time for all the operations on this data-structure is in ${\cal O}(n+m\log n)$~\cite{LRT00}.
\end{proof}

From now on, we will term ``pruned modular decomposition'' of a graph $G$ the collection of the pruned subgraphs for all the quotient subgraphs in the modular decomposition of $G$.
Note that there is a unique pruned modular decomposition of $G$ (up to isomorphism) and that it can be computed in ${\cal O}(n+m\log n)$-time by Proposition~\ref{prop:pruned-compute} (applied to every quotient subgraph in the modular decomposition separately).

\begin{remark}
A careful reader could object that most cases of one-vertex extensions (universal, isolated, twin) imply the existence of non trivial modules.
Therefore, such vertices should not exist in the prime quotient subgraphs of the modular decomposition.
However, they may appear after removal of pendant or anti-pendant vertices (see Fig.~\ref{fig:bull}).
\end{remark}

\begin{figure}[h!]
\centering
\includegraphics[width=.45\textwidth]{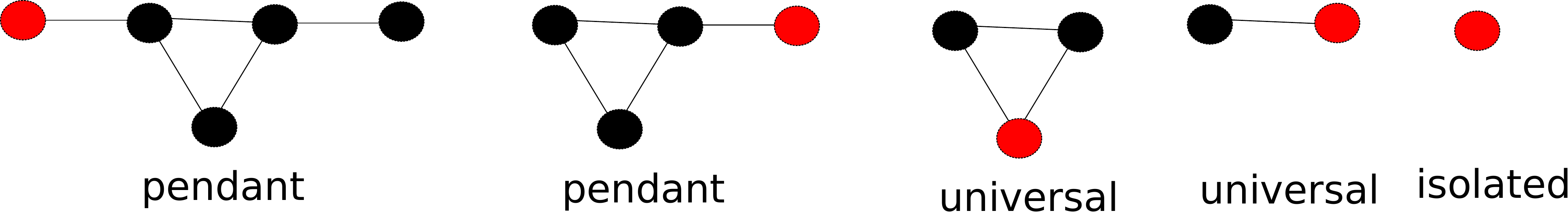}
\caption{The bull is totally decomposable by the pruned modular decomposition.}
\label{fig:bull}
\end{figure}

\section{Reduction rules}\label{sec:core}

In this section, we consider any instance $\langle G',{\cal P},{\cal F} \rangle$ of {\sc Module Matching}.
We suppose that $v_1$, the vertex corresponding to $M_1$ in $G'$ is a one-vertex extension
Under this assumption, we present reduction rules to a smaller instance $\langle G^*,{\cal P}^*,{\cal F}^* \rangle$ where $|{\cal P}^*| < |{\cal P}|$.
Some of the rules we propose next were first introduced in~\cite{CDP18}, namely for universal and isolated modules (explicitly) and for false or true twin modules (implicitly).
We state these above rules in Section~\ref{sec:simple} for completeness of the paper.
Our main technical contributions are the reduction rules for pendant and anti-pendant modules (presented in Sections~\ref{sec:anti-pendant} and~\ref{sec:pendant}, respectively), which are surprisingly the most intricate.
Finally, we end this section stating our main result (Theorem~\ref{thm:main}).

\subsection{Simple cases}\label{sec:simple}

We start with two cases that can be readily solved, namely:

\begin{reduction}[see also~\cite{CDP18}]\label{rule:isolated}
Suppose $v_1$ is isolated in $G'$.

We set $G^* = G' \setminus v_1$, ${\cal P}^* = {\cal P} \setminus \{H_1\}$, and ${\cal F}^* = {\cal F} \setminus \{F_1\}$.
\end{reduction}

Indeed, note that in this above case a maximum matching of $G$ is the union of $F_1$ with any maximum matching of the subdivision $G^*({\cal H}^*)$.

\begin{reduction}\label{rule:false-twin}
Suppose $v_1,v_2$ are false twins in $G'$.

We set $G^* = G' \setminus v_1$, ${\cal P}^* = \{H_1 \cup H_2\} \cup ({\cal P} \setminus \{H_1,H_2\})$, ${\cal F}^* = \{F_1 \cup F_2\} \cup ({\cal F} \setminus \{F_1,F_2\})$.
\end{reduction}

Indeed, note that in this above case, $M_1 \cup M_2$ is a module of $G$.
Furthermore, since there is no edge between $M_1$ and $M_2$ we have that $F_1 \cup F_2$ is a maximum matching of $G[M_1 \cup M_2] = H_1 \cup H_2$.
Hence, the above reduction rule is correct.

\medskip
Then, before we state the two other reduction rules in this section, we need to introduce the following two local operations on a matching (they will be used for all the remaining reduction rules).
To our best knowledge, these two following operations were first introduced in~\cite{YuY93} for the computation of maximum matchings in cographs.
We give a sligthly generalized presentation of these rules.
In what follows, let $F \subseteq E$ be a matching and let $M \subseteq V$ be a module.

\begin{operation}[{\sc MATCH}$(M,F)$]\label{op:match}
While there are $x \in M, \ y \in N(M)$ exposed,
we add $\{x,y\}$ to $F$.
\end{operation}

\begin{operation}[{\sc SPLIT}$(M,F)$]\label{op:split}
While there exist $x,x' \in M, \ y,y' \in N(M)$ such that $x$ and $x'$ are exposed, and $\{y,y'\} \in F$,
we replace the edge $\{y,y'\}$ in $F$ by the two new edges $\{x,y\}, \ \{x',y'\}$. 
\end{operation}

We stress that the {\sc MATCH} and {\sc SPLIT} operations correspond to special cases of $F$-augmenting paths, of respective lengthes $1$ and $3$.
Given $F$ and $M$, the ``{\sc MATCH} and {\sc SPLIT}'' technique consists in applying {\sc MATCH}$(M,F)$, which results in a new matching $F'$, then applying {\sc SPLIT}$(M,F')$.
We refer to~\cite{CDP18,FPT97,FGV99,YuY93} for applications of this technique to some graph classes.
In particular, we will use the following lemma for our analysis:

\begin{lemma}[~\cite{CDP18}]\label{lem:join}
Let $G = G_1 \oplus G_2$ be the join of two graphs $G_1,G_2$ and let $F_1,F_2$ be maximum matchings for $G_1,G_2$, respectively.
For $F = F_1 \cup F_2$, applying the ``{\sc MATCH} and {\sc SPLIT}'' technique to $V(G_1)$, then to $V(G_2)$ leads to a maximum matching of $G$.
\end{lemma}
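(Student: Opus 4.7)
The plan is to show that the matching $F$ produced by the four-step procedure attains the cardinality $\mu(G)$, by tracking $|F|$ through each step and matching the final value against an explicit upper bound on $\mu(G)$. Let $d_i = n_i - 2\mu(G_i)$ denote the deficit of $G_i$, and assume without loss of generality that $d_1 \leq d_2$; the opposite case admits an analogous treatment in which {\sc SPLIT}$(V(G_1),F)$ plays the role played by {\sc SPLIT}$(V(G_2),F)$ below.

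First I would observe that because every vertex of $V(G_1)$ is adjacent to every vertex of $V(G_2)$ in the join $G$, the operation {\sc MATCH}$(V(G_1),F)$ pairs each of the $d_1$ exposed vertices of $V(G_1)$ with a distinct exposed vertex of $V(G_2)$, and terminates with no exposed vertex in $V(G_1)$ and exactly $d_2 - d_1$ exposed vertices in $V(G_2)$. Consequently, {\sc SPLIT}$(V(G_1),F)$ and {\sc MATCH}$(V(G_2),F)$ are both vacuous, since each of them requires an exposed vertex in $V(G_1)$.

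Next I would analyze {\sc SPLIT}$(V(G_2),F)$. Each iteration selects two exposed $x, x' \in V(G_2)$ and an edge $\{y,y'\}\in F$ with $y,y'\in N(V(G_2))=V(G_1)$; such an edge must belong to $F_1$, since no cross-edge of $F$ has both its endpoints in $V(G_1)$. Every split raises $|F|$ by one, saturates two previously exposed vertices of $V(G_2)$, and destroys exactly one edge of $F_1$; crucially, no exposed vertex of $V(G_1)$ is created, so iterating the loop does not require an additional pass of {\sc MATCH}. The operation therefore halts in one of two regimes: either (a) fewer than two exposed vertices remain in $V(G_2)$, in which case $|F| = \mu(G_1)+\mu(G_2)+\lfloor (d_1+d_2)/2\rfloor$ and at most one vertex of $G$ is exposed, so that $F$ is trivially maximum; or (b) every edge of $F_1$ has been destroyed, in which case $V(G_1)$ is entirely matched to $V(G_2)$ and $|F|=n_1+\mu(G_2)$.

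To conclude regime (b) I would establish the upper bound $\mu(G)\leq n_1+\mu(G_2)$ by a direct counting argument: writing an arbitrary matching $M$ of $G$ as $a$ edges internal to $G_1$, $b$ edges internal to $G_2$, and $c$ cross-edges, the degree constraint $2a+c\leq n_1$ combined with $b\leq \mu(G_2)$ yields $|M| = a+b+c \leq (n_1-a)+b \leq n_1+\mu(G_2)$, so the lower bound from regime (b) is tight. The main obstacle I anticipate is the bookkeeping around the two terminating regimes of {\sc SPLIT}$(V(G_2),F)$ and the verification that exposed vertices remain confined to one side of the partition throughout the procedure, so that the four operations can legitimately be executed in the prescribed order without needing to re-invoke {\sc MATCH} in between.
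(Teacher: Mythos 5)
Your proof is correct. Note that the paper does not actually prove this lemma --- it is imported as a black box from \cite{CDP18} --- so there is no in-paper argument to compare against; your self-contained case analysis (the first {\sc MATCH} exhausts the side with the smaller deficit, the next two operations are vacuous because all exposed vertices now lie on one side, and the remaining {\sc SPLIT} terminates either with at most one exposed vertex in all of $G$ or with $V(G_1)$ fully saturated, at which point the counting bound $2a+c\le n_1$ giving $\mu(G)\le n_1+\mu(G_2)$ certifies optimality) is the standard argument for this fact, and every step, including the symmetric treatment of the case $d_1>d_2$, checks out.
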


Based on the above lemma, we can readily solve two more cases, namely:

\begin{reduction}\label{rule:universal}[see also~\cite{CDP18}]
Suppose $v_1$ is universal in $G'$.

We set $G^* = G \setminus v_1$, ${\cal P}^* = {\cal P} \setminus \{H_1\}$, ${\cal F}^* = {\cal F} \setminus \{F_1\}$.

\smallskip
Furthermore, let $F^*$ be a maximum matching of the subdivision $G^*({\cal P}^*) = G[V \setminus M_1]$.
We apply Lemma~\ref{lem:join} to $G = G[M_1] \oplus G[V \setminus M_1]$ with $F_1,F^*$ in order to compute a maximum-cardinality matching of $G$.
\end{reduction}

Observe that this above rule is similar to Reduction rule~\ref{rule:isolated}, however it requires an additional post-processing step.

\begin{reduction}\label{rule:true-twin}
Suppose $v_1,v_2$ are true twins in $G'$.
Let $F_2^*$ be the matching of $G[M_1 \cup M_2] = H_1 \oplus H_2$ that is obtained from $F_1,F_2$ by applying Lemma~\ref{lem:join}.

We set $G^* = G \setminus v_1$, ${\cal P}^* = \{H_1 \oplus H_2\} \cup ({\cal P} \setminus \{H_1,H_2\})$, ${\cal F}^* = \{F_2^*\} \cup ({\cal F} \setminus \{F_1,F_2\})$.
\end{reduction}

\paragraph{Complexity analysis.}
Reduction rules~\ref{rule:isolated} and~\ref{rule:false-twin} take constant-time.
The complexity of Reduction rules~\ref{rule:universal} and~\ref{rule:true-twin} is dominated by the {\sc Match} and {\sc Split} operations.
Furthermore, every such operation adds, in constant-time, one or two edges in the matching with exactly one end in $M_1$.
It implies that there cannot be more than ${\cal O}(n_1)$ operations.
Observe that $\Delta m_{G'} - \Delta m_{G^*} \geq n_1n_2 = \Omega(n_1)$.
As a result, Reduction rules~\ref{rule:universal} and~\ref{rule:true-twin} take ${\cal O}(\Delta m_{G'} - \Delta m_{G^*})$-time. 

\subsection{Anti-pendant}\label{sec:anti-pendant}

Suppose $v_1$ is anti-pendant in $G'$.
W.l.o.g., $v_2$ is the unique vertex that is nonadjacent to $v_1$ in $G'$.
By Lemma~\ref{lem:mw-matching-reduction}, we can also assume w.l.o.g. that $E(H_i) = F_i$ for every $i$.
In this situation, we start applying Reduction rule~\ref{rule:isolated}, {\it i.e.}, we set $G^* = G' \setminus v_1$, ${\cal P}^* = {\cal P} \setminus \{H_1\}$, ${\cal F}^* = {\cal F} \setminus \{F_1\}$.
Then, we obtain a maximum-cardinality matching $F^*$ of $G \setminus M_1$ ({\it i.e.}, by applying our reduction rules to this new instance).
Finally, we compute from $F_1,F^*$ a maximum-cardinality matching of $G$, using an intricate procedure.
We detail this procedure next.

\paragraph{First phase: Pre-processing.}
Our correctness proofs in what follows will assume that some additional properties hold on the matched vertices in $F^*$.
So, we start correcting the initial matching $F^*$ so that it is the case.
For that, we introduce two ``swapping'' operations.
Recall that $v_2$ is the unique vertex that is nonadjacent to $v_1$ in $G'$. 

\begin{operation}[{\sc REPAIR}]\label{op:repair}
While there exist $x_2,y_2 \in M_2$ such that $\{x_2,y_2\} \in F_2$ and $y_2$ is exposed in $F^*$,
we replace any edge $\{x_2,w\} \in F^*$ by $\{x_2,y_2\}$.
\end{operation}

This above operation ensures that all the vertices matched by $F_2$ are also matched by $F^*$.
In particular, for every $\{x_2,y_2\} \in F_2$ we have either $\{x_2,y_2\} \in F_2$ or (since we assume $E(H_2) = F_2$) there exist $w,w' \notin M_2$ such that $\{x_2,w\},\{y_2,w'\} \in F^*$.
We stress that this above operation does not modify the cardinality of the matching. 

\begin{operation}[{\sc ATTRACT}]\label{op:attract}
While there exist $x_2\in M_2$ exposed and $\{u,w\} \in F^*$ such that $u \in N_G(M_2), w \notin M_2$, 
we replace $\{u,w\}$ by $\{u,x_2\}$.
\end{operation}

Said otherwise, we give a higher priority on the vertices in $M_2$ to be matched.
Again, we stress that this above operation does not modify the cardinality of the matching. 
Furthermore, Operations~\ref{op:repair} and~\ref{op:attract} are non conflicting since for {\sc Attract} we only consider exposed vertices in $M_2$ ({\it i.e.}, not in $V(F_2)$) and matched edges with their both ends in $N_G(M_1) = V \setminus M_2$.

\smallskip
Let $F^{(0)} = F_1 \cup F^*$.
Summarizing, we get:

\begin{definition}\label{def:good-pty}
A matching $F$ of $G$ is {\em good} if it satisfies the following two properties:
\begin{enumerate}
\item every vertex matched by $F_1 \cup F_2$ is also matched by $F$;
\item either every vertex in $M_2$ is matched, or there is no matched edge in $N_G(M_2) \times N_G(M_1)$.
\end{enumerate}
\end{definition}

\begin{fact}\label{fact:pre-processing}
$F^{(0)}$ is a good matching of $G$.
\end{fact}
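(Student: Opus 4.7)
The plan is to verify the two defining properties of a good matching separately, by tracking the effects of the two preprocessing operations on $F^{(0)} = F_1 \cup F^*$. First I would note that $F^{(0)}$ is a valid matching because $F_1 \subseteq E(G[M_1])$ and $F^* \subseteq E(G \setminus M_1)$ live on disjoint vertex sets, and that both {\sc Repair} and {\sc Attract} only substitute one matched edge for another of the same cardinality, so validity and size are preserved throughout.

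To establish Property~1, it suffices to show that after {\sc Repair} every vertex of $V(F_2)$ is matched by $F^*$, and that {\sc Attract} does not undo this (vertices of $V(F_1) \subseteq M_1$ are automatically matched by $F_1 \subseteq F^{(0)}$). The key observation is that we have already assumed $E(H_2) = F_2$ via Lemma~\ref{lem:mw-matching-reduction}, so the only edges of $G$ lying inside $M_2$ are those of $F_2$ itself. In any {\sc Repair} iteration with $\{x_2, y_2\} \in F_2$ and $y_2$ exposed, maximality of $F^*$ forces $x_2$ to be matched to some $w$; and if $w$ lay in $M_2$, then $\{x_2, w\}$ would have to be an edge of $F_2$ distinct from $\{x_2, y_2\}$, contradicting the fact that $F_2$ is itself a matching. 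Hence $w \notin M_2$, so each swap strictly decreases the number of exposed vertices in $V(F_2)$ without introducing new ones. {\sc Repair} therefore terminates with all of $V(F_2)$ matched, and {\sc Attract} subsequently only swaps at exposed vertices of $M_2 \setminus V(F_2)$ while exposing vertices strictly outside $M_2$, so Property~1 is preserved until the end.

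For Property~2, the argument is a careful reading of {\sc Attract}'s termination condition. The operation halts precisely when either no $x_2 \in M_2$ is exposed, or no matched edge $\{u, w\} \in F^*$ satisfies $u \in N_G(M_2)$ and $w \notin M_2$. Because $v_1$ is anti-pendant in $G'$ and $F^*$ carries no edge incident to $M_1$, the set $V \setminus (M_1 \cup M_2)$ coincides with $N_G(M_1)$, so ``$w \notin M_2$'' is equivalent to ``$w \in N_G(M_1)$''; the two disjuncts of the stopping condition therefore match exactly the two disjuncts of Property~2. Termination itself is immediate, since each {\sc Attract} iteration strictly grows $|V(F^*) \cap M_2|$.

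The main subtlety I expect is in Property~1: one must rule out that a {\sc Repair} swap re-exposes an already-matched vertex of $V(F_2)$, stalling the operation in a cycle. The normalization $E(H_2) = F_2$ is precisely what precludes this, and it is the only non-trivial ingredient; everything else reduces to reading off the explicit termination conditions of the two operations.
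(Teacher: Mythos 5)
Your proof is correct and follows essentially the same route as the paper, which leaves Fact~\ref{fact:pre-processing} implicit in the remarks following Operations~\ref{op:repair} and~\ref{op:attract} ({\sc Repair} forces $V(F_2) \subseteq V(F^*)$, {\sc Attract}'s termination condition is exactly the second disjunct of Definition~\ref{def:good-pty}). Your explicit use of the normalization $E(H_2)=F_2$ and of the maximality of $F^*$ to show that each {\sc Repair} swap is well-defined and strictly decreases the number of exposed vertices of $V(F_2)$ is a welcome filling-in of details the paper omits.
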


\paragraph{Main phase: a modified {\sc Match} and {\sc Split}.}
We now apply the following three operations sequentially:

\begin{enumerate}
\item {\sc Match}$(M_1,F^{(0)})$ (Operation~\ref{op:match}).
Doing so, we obtain a larger matching $F^{(1)}$.

\begin{fact}\label{fact:f-1}
$F^{(1)}$ is a good matching of $G$.
\end{fact}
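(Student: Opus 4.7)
The plan is to verify the two defining conditions of a good matching (Definition~\ref{def:good-pty}) separately, exploiting the fact that {\sc Match}$(M_1,F^{(0)})$ is a purely monotone operation that only touches vertices in $M_1 \cup N_G(M_1)$.

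For condition~(1), I would simply note that Operation~\ref{op:match} never deletes an edge from the current matching: it only adds pairs $\{x,y\}$ whose endpoints were previously exposed. Hence $V(F^{(0)}) \subseteq V(F^{(1)})$, and by Fact~\ref{fact:pre-processing} every vertex in $V(F_1) \cup V(F_2)$ already lies in $V(F^{(0)})$, so a fortiori in $V(F^{(1)})$.

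For condition~(2), the key structural ingredient is the anti-pendancy of $v_1$ in $G'$: since $v_2$ is the unique non-neighbour of $v_1$, we have $N_G(M_1) = V \setminus (M_1 \cup M_2)$ and in particular $M_1 \cap N_G(M_2) = \emptyset$. I would then split according to which sub-case of condition~(2) holds for $F^{(0)}$. If every vertex of $M_2$ is already matched by $F^{(0)}$, then because {\sc Match}$(M_1,F^{(0)})$ only alters the status of vertices in $M_1 \cup N_G(M_1)$ and this set is disjoint from $M_2$, every vertex of $M_2$ remains matched in $F^{(1)}$. Otherwise, $F^{(0)}$ contains no matched edge in $N_G(M_2) \times N_G(M_1)$; any edge $\{x,y\}$ newly inserted by {\sc Match} has $x \in M_1$, and since $x \notin N_G(M_2) \cup N_G(M_1)$ such an edge cannot lie in $N_G(M_2) \times N_G(M_1)$ either. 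Combined with the preservation of the old edges, this shows $F^{(1)}$ still contains no matched edge in $N_G(M_2) \times N_G(M_1)$.

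I do not expect any genuine obstacle here: the statement is essentially a sanity check that the preprocessing invariants from the {\sc Repair}/{\sc Attract} phase survive one round of {\sc Match} on $M_1$, and the only non-trivial point that needs to be invoked is the disjointness $M_1 \cap N_G(M_2) = \emptyset$ coming from $v_1 \not\sim v_2$ in $G'$. The same template (monotonicity of the operation plus locality in $M_1 \cup N_G(M_1)$) will presumably have to be reused when analysing the subsequent {\sc Split}-type step, which is why isolating the argument cleanly at this stage is worthwhile.
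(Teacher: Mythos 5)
Your proof is correct and follows essentially the same route as the paper's: the paper likewise argues that {\sc Match} only augments the matching (so Property~1 is inherited from Fact~\ref{fact:pre-processing}) and that it creates no new exposed vertex in $M_2$ and no new matched edge in $N_G(M_2) \times N_G(M_1)$. Your version merely makes explicit the disjointness $M_1 \cap \bigl(N_G(M_1) \cup N_G(M_2)\bigr) = \emptyset$ coming from anti-pendancy, which the paper leaves implicit.
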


\begin{proofclaim}
We still have $V(F_1 \cup F_2) \subseteq V(F^{(1)})$ since we only increase the matching by using augmenting paths.
Furthermore, we do not create any new exposed vertex in $M_2$, nor any new matched edge in $N_G(M_2) \times N_G(M_1)$, and so, the second property also stays true.
\end{proofclaim}

\item {\sc Split}$(M_1,F^{(1)})$ (Operation~\ref{op:split}).
Doing so, we obtain a larger matching $F^{(2)}$.

\begin{fact}\label{fact:f-2}
$F^{(2)}$ is a good matching of $G$.
\end{fact}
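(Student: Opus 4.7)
The plan is to prove Fact~\ref{fact:f-2} by induction on the number of elementary substitutions performed by {\sc Split}$(M_1,F^{(1)})$. The base case is exactly Fact~\ref{fact:f-1}; the inductive step amounts to showing that a single substitution preserves both properties in Definition~\ref{def:good-pty}. The geometric input I would record up front is that, since $v_1$ is anti-pendant in $G'$ with $v_2$ its unique non-neighbour, $N_G(M_1) = V \setminus (M_1 \cup M_2)$, and this set is in particular disjoint from $M_1$ and from $M_2$. Concretely, a single substitution takes exposed $x,x' \in M_1$ together with a matched edge $\{y,y'\}$ whose endpoints both lie in $N_G(M_1)$, and replaces $\{y,y'\}$ by the pair $\{x,y\},\{x',y'\}$.

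For Property 1, I would observe that $V(F_1)\cup V(F_2) \subseteq M_1 \cup M_2$, whereas $y,y' \in V \setminus (M_1 \cup M_2)$; hence deleting $\{y,y'\}$ cannot expose any vertex of $V(F_1)\cup V(F_2)$, and the two new edges in fact keep $y$ and $y'$ matched while additionally matching $x,x' \in M_1$. For Property 2, I would split on which of its two disjuncts currently holds. If every vertex of $M_2$ is already matched, then since the substitution only affects vertices in $M_1 \cup N_G(M_1) = V \setminus M_2$, no vertex of $M_2$ becomes exposed and the first disjunct is preserved. Otherwise, no matched edge lies in $N_G(M_2) \times N_G(M_1)$; in particular the removed edge $\{y,y'\}$ is not in this set, and each of the new edges $\{x,y\}, \{x',y'\}$ has one endpoint in $M_1$, which is disjoint from $N_G(M_1)$, so no new matched edge can lie in $N_G(M_2) \times N_G(M_1)$ either.

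I do not anticipate any substantive obstacle: the argument is a routine invariant-preservation by case analysis. The only point that requires a little care is the second sub-case of Property 2, where one must explicitly use that $x,x' \in M_1$ are outside $N_G(M_1)$ to conclude that the newly inserted edges cannot populate the forbidden set $N_G(M_2) \times N_G(M_1)$. Once this is noted, the induction closes immediately, yielding that $F^{(2)}$ is good.
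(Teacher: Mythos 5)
Your proposal is correct and takes essentially the same route as the paper, which states Fact~\ref{fact:f-2} without a separate proof and covers it by the same argument as Fact~\ref{fact:f-1} (made explicit later in Claim~\ref{claim:f-3}): each {\sc Split} substitution is a length-$3$ augmenting path, so it exposes no vertex, and every newly matched edge has an endpoint in $M_1$, hence creates no exposed vertex in $M_2$ and no matched edge in $N_G(M_2) \times N_G(M_1)$. Your induction over individual substitutions simply spells out this invariant-preservation in detail.
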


\item the operation {\sc Unbreak}, defined in what follows (see also Fig.~\ref{fig:unbreak} for an illustration):

\begin{figure}[!h]
\centering
\includegraphics[width=.35\textwidth]{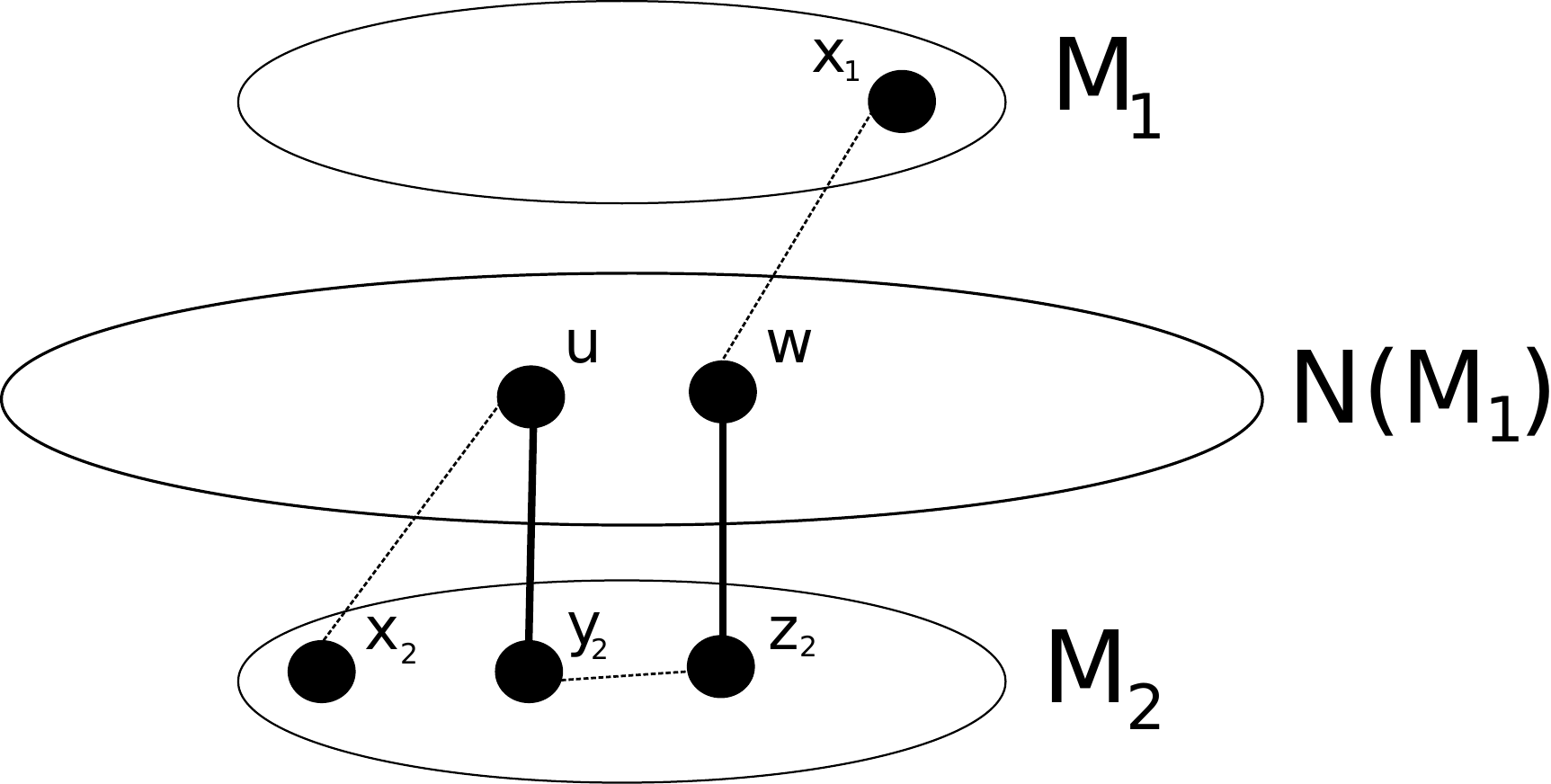}
\caption{An augmenting path of length $5$ with ends $x_1,x_2$. Matched edges are drawn in bold.}
\label{fig:unbreak}
\end{figure}

\begin{operation}[{\sc Unbreak}]\label{op:unbreak}
While there exist $x_1 \in M_1$ and $x_2 \in M_1 \cup M_2$ exposed, and there also exist $\{y_2,z_2\} \in F_2 \setminus F^{(2)}$, we replace any two edges $\{y_2,u\},\{z_2,w\} \in F^{(2)}$ by the three edges $\{x_2,u\},\{y_2,z_2\} \ \mbox{and} \ \{w,x_1\}$.
\end{operation}

We stress that the two edges $\{y_2,u\},\{z_2,w\} \in F^{(2)}$ always exist since $F^{(2)}$ is a good matching of $G$.  
Furthermore doing so, we obtain a larger matching $F^{(3)}$.
\end{enumerate}

The resulting matching $F^{(3)}$ is not necessarily maximum.
However, this matching satisfies the following crucial property:

\begin{lemma}\label{lem:main-anti-pendant}
No vertex of $M_1$ can be an end in an $F^{(3)}$-augmenting path.
\end{lemma}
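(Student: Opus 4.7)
The plan is to argue by contradiction. Assume that $P = (x_1 = v_0, v_1, \ldots, v_{2k+1} = y_1)$ is an $F^{(3)}$-augmenting path with $x_1 \in M_1$, and derive a contradiction from the invariants enforced by MATCH, SPLIT, UNBREAK together with the maximality of $F^*$ in $G\setminus M_1$.

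First I would collect the invariants driving the whole analysis. Since none of MATCH, SPLIT, UNBREAK ever removes a matching edge internal to $M_1$, we have $F_1\subseteq F^{(3)}$; hence $x_1$ is exposed in $F_1$, and its $G$-neighbours all lie in $N_G(M_1)=V\setminus(M_1\cup M_2)$. MATCH-exhaustion then forces every vertex of $N_G(M_1)$ to be matched in $F^{(3)}$, so in particular $y_1\notin N_G(M_1)$ and hence $y_1\in M_1\cup M_2$. Further invariants of $F^{(3)}$: (SPLIT) if $F^{(3)}$ admits at least two exposed vertices of $M_1$, then no edge of $F^{(3)}$ has both ends in $N_G(M_1)$ (UNBREAK never creates such edges); (UNBREAK) if $F^{(3)}$ admits distinct exposed vertices in $M_1$ and in $M_1\cup M_2$ then $F_2\subseteq F^{(3)}$; (goodness of $F^{(3)}$, which follows from Fact~\ref{fact:f-2} because UNBREAK does not create $N_G(M_2)\times N_G(M_1)$-edges nor exposes any vertex of $V(F_1\cup F_2)$) if $M_2$ has an exposed vertex in $F^{(3)}$ then no $F^{(3)}$-matched edge lies in $N_G(M_2)\times N_G(M_1)$; and ($F^*$-maximality) no two vertices exposed in $F^*$ are adjacent in $G\setminus M_1$.

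The two easier cases are handled by an induction along $P$. If $y_1\in M_1$ then there are at least two exposed vertices of $M_1$, so both the SPLIT invariant and $F_2\subseteq F^{(3)}$ hold. I would prove by induction on $i$ that every odd-indexed $v_{2i-1}$ lies in $N_G(M_1)$: $v_1$ does by the basic reduction; the partner $v_{2i}$ of $v_{2i-1}\in N_G(M_1)$ cannot lie in $N_G(M_1)$ by SPLIT; if $v_{2i}\in M_1$ then it must be exposed in $F_1$ (so that its $F_1$-partner in $F^{(3)}$ does not conflict with its cross-partner $v_{2i-1}\notin M_1$), and consequently its only neighbours lie in $N_G(M_1)$; if $v_{2i}\in M_2$ then $F_2\subseteq F^{(3)}$ similarly forces it to be exposed in $F_2$, so its only neighbours lie in $N_G(M_2)\subseteq N_G(M_1)$. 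Thus $v_{2i+1}\in N_G(M_1)$ in every sub-case, and eventually $y_1=v_{2k+1}\in N_G(M_1)$, contradicting $y_1\in M_1$. The sub-case $y_1\in M_2$ with at least two exposed $M_1$-vertices is handled in exactly the same manner, concluding with $v_{2k+1}\in N_G(M_1)\cap M_2=\emptyset$.

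The critical remaining case is $y_1\in M_2$ with $x_1$ the unique exposed vertex of $M_1$. Here the SPLIT invariant no longer excludes matched edges inside $N_G(M_1)$, but goodness refines the picture: setting $S=N_G(M_1)\setminus N_G(M_2)$ and $T=N_G(M_2)$, every matched edge of $F^{(3)}$ whose both endpoints lie in $N_G(M_1)$ must in fact lie inside $S$. The terminal edge $e_{2k}=\{v_{2k-1},v_{2k}\}$ satisfies $v_{2k}\in T$, because $v_{2k}\in M_2$ would make $\{v_{2k},y_1\}$ an edge inside $V(H_2)=M_2$, and $E(H_2)=F_2$ together with $y_1\notin V(F_2)$ makes this impossible. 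Hence $e_{2k}$ is either of the form $\{v_{2k-1}\in M_2\setminus V(F_2),\,v_{2k}\in T\}$ or $\{v_{2k-1}\in M_1\setminus V(F_1),\,v_{2k}\in T\}$. The $M_2$-branch is killed by $F^*$-maximality: the ATTRACT step forces the $F^*$-partner of $v_{2k}\in T$ (if any) to lie in $M_2$, so either $v_{2k}$ is exposed in $F^*$ (giving a length-$1$ $F^*$-augmenting path $\{v_{2k},y_1\}$) or $v_{2k}$ is $F^*$-matched with some $w_2\in M_2$, in which case the $F_2$-associate of $w_2$ and its $F^*$-partner produce a length-$5$ $F^*$-augmenting path through $y_1$; both contradict the maximality of $F^*$. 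The $M_1$-branch is killed by tracking how $\{v_{2k-1},v_{2k}\}$ arose: it cannot come from MATCH (ATTRACT forces $v_{2k}\in T$ to be $F^*$-matched once $y_1$ is exposed in $F^*$, hence non-exposed at the start of MATCH), nor from SPLIT (any $N_G(M_1)\times N_G(M_1)$-edge of $F^{(1)}$ involving $v_{2k}\in T$ would be routed by ATTRACT through $M_2$, which is impossible for an edge internal to $N_G(M_1)$), so it must come from UNBREAK; and then UNBREAK's companion vertex $x_2\in M_1\cup M_2$ must actually lie in $M_2$, because otherwise SPLIT acting on the edge $\{v_1,v_2\}$ (which necessarily survives from $F^{(0)}$ into $F^{(1)}$) with the two exposed $M_1$-vertices $v_{2k-1}$ and $x_2$ would eventually force $x_1$ itself to be SPLIT-matched, contradicting $x_1$ exposed in $F^{(3)}$; but $x_2\in M_2$ again produces a length-$5$ $F^*$-augmenting path of the form $(y_1,v_{2k},\ldots,x_2)$, delivering the final contradiction. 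The main obstacle is clearly this last sub-case, where SPLIT no longer helps and one must combine goodness, $F^*$-maximality, UNBREAK-exhaustion and the precise history of how each $F^{(3)}$-edge was created.
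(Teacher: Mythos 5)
Your invariant set-up is sound, and your handling of the cases where $F^{(3)}$ leaves at least two vertices of $M_1$ exposed is correct; the induction along $P$ showing that every odd-indexed vertex lies in $N_G(M_1)$ is a clean alternative to the paper's local arguments near the ends of a minimum-length $P$. The genuine gaps are both in the remaining case ($y_1\in M_2$ and $x_1$ the unique exposed vertex of $M_1$), in your analysis of the terminal matched edge $\{v_{2k-1},v_{2k}\}$. In the $M_2$-branch you invoke ``the $F_2$-associate of $w_2$'', but the edge $\{v_{2k-1},v_{2k}\}$ may simply have belonged to $F^*$ already: none of {\sc Match}, {\sc Split}, {\sc Unbreak} ever removes an edge joining $M_2\setminus V(F_2)$ to $N_G(M_2)$, and {\sc Repair}/{\sc Attract} can legitimately leave such an edge while $y_1$ stays exposed. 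Then $w_2=v_{2k-1}\in M_2\setminus V(F_2)$ has no $F_2$-associate, no length-$5$ $F^*$-augmenting path materialises, and $F^*$ can perfectly well be maximum in this configuration. The paper closes this sub-case by a different exit: it takes $P$ of \emph{minimum length}, observes that $v_{2k-2}$ must lie in $N_G(M_2)$ (it is adjacent to $v_{2k-1}\in M_2$ but cannot be in $M_2$, else {\sc Unbreak} would have continued), and shortcuts to the strictly shorter augmenting path $(v_0,\ldots,v_{2k-2},y_1)$. You never assume $P$ minimal, so this escape is unavailable as written.

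In the $M_1$-branch, the step eliminating ``{\sc Unbreak} with companion $x_2\in M_1$'' does not hold up: you assert that $\{v_1,v_2\}$ is a matched edge inside $N_G(M_1)$ surviving into $F^{(1)}$, but in this case you have not shown $v_2\in N_G(M_1)$ (your induction needed the two-exposed-vertices hypothesis, which fails here), and even granting it, {\sc Split}-exhaustion only implies that not both of $v_{2k-1}$ and $x_2$ stay exposed through {\sc Split} --- it does not ``force $x_1$ itself to be {\sc Split}-matched''. Establishing $v_1,v_2\in N_G(M_1)$ is precisely what the paper does first in this case, again via minimality of $P$ combined with {\sc Unbreak}-exhaustion; it then uses that edge to show the terminal edge must have been created by {\sc Split}, contradicting goodness of $F^{(1)}$. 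A smaller slip: it is the maximality of $F^*$ (the length-$1$ augmenting path $(y_1,v_{2k})$), not {\sc Attract}, that forces $v_{2k}\in N_G(M_2)$ to be matched by $F^*$; {\sc Attract} only constrains the partners of already-matched vertices. To repair the proof, take $P$ of minimum length from the outset and prove $v_1,v_2\in N_G(M_1)$ before analysing the terminal edge; with those two ingredients both problematic sub-cases close as in the paper.
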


\begin{proof}
Let $x_1 \in M_1$ be exposed.
Suppose by contradiction $x_1$ is an end of some $F^{(3)}$-augmenting path $P = (x_1 = u_1, u_2, \ldots, u_{2\ell})$.
W.l.o.g., $P$ is of minimum length.
We will derive a contradiction from the following invariants:

\begin{claim}\label{claim:f-3}
The following properties hold for every $0 \leq j \leq 3$:
\begin{enumerate}
\item\label{pty-1} $F^{(j)}$ is a good matching of $G$;
\item\label{pty-2} If $u_{2i},u_{2i+1} \in N_G(M_1)$ and $\{u_{2i},u_{2i+1}\} \in F^{(3)}$ then we also have $\{u_{2i},u_{2i+1}\} \in F^{(j)}$; 
\item\label{pty-3} $F_1 \subseteq F^{(j)}$.
\end{enumerate}
\end{claim}

\begin{proofclaim}
The proof readily follows from the same arguments as for Fact~\ref{fact:f-1}.
Since we only increase the successive matchings using augmenting paths, we keep the property that $V(F_1 \cup F_2) \subseteq V(F^{(j)})$.
In fact, since we only consider the exposed vertices in $M_1$ for our operations, we have the stronger Property~\ref{pty-3} that $F_1 \subseteq F^{(j)}$.
Furthermore, our successive operations do not create any new exposed vertex in $M_2$ nor any new matched edge in $N_G(M_1) \times N_G(M_1)$, and so, both Properties~\ref{pty-1} and~\ref{pty-2} also hold.
\end{proofclaim}

In what follows, we divide the proof in three claims.

\begin{claim}\label{claim:in-m1-m2}
$u_{2\ell} \in M_1 \cup M_2$.
\end{claim}

\begin{proofclaim}
Suppose for the sake of contradiction $u_{2\ell} \notin M_1 \cup M_2$, or equivalently $u_{2\ell} \in N(M_1)$.
Then, we could have continued the first step {\sc Match}$(M_1,F^{(0)})$ by matching $x_1,u_{2\ell}$ together, that is a contradiction.
\end{proofclaim}

Next, we derive a contradiction by proving $u_{2\ell} \notin M_1 \cup M_2$.

\begin{claim}\label{claim:notin-m1}
$u_{2\ell} \notin M_1$.
\end{claim}

\begin{proofclaim}
Suppose for the sake of contradiction $u_{2\ell} \in M_1$.
There are two cases.
\begin{enumerate}

\item
Case $u_2,u_3 \in N(M_1)$.
Since $\{u_2,u_3\} \in F^{(3)}$ we have by Claim~\ref{claim:f-3} $\{u_2,u_3\} \in F^{(1)}$.
In particular, we could have replaced $\{u_2,u_3\}$ by $\{u_2,x_1\},\{u_3,u_{2\ell}\}$ during the second step of the main phase ({\it i.e.}, {\sc Split}$(M_1,F^{(1)})$), that is a contradiction.

\item
Thus, let us now assume $u_2 \in N(M_1)$ (necessarily) but $u_3 \notin N(M_1)$.
By minimality of $P$, we have $u_4 \notin N(M_1)$ (otherwise, $P'=(x_1,u_4,u_5,\ldots,u_{2\ell})$ would be a shorter augmenting path than $P$).
We claim that it implies $u_3 \notin M_1$.
Indeed, otherwise we should also have $u_4 \in M_1$, and so, $\{u_3,u_4\} \in F_1$ since we assume that $M_1$ induces a matching.
However, $\{u_3,u_4\} \notin F^{(3)}$, whereas we have by Claim~\ref{claim:f-3} that $F_1 \subseteq F^{(3)}$.
A contradiction.
Therefore, as claimed, $u_3 \notin M_1$.
We deduce from the above that $u_3 \in M_2$.
Similarly, since $u_4 \notin N(M_1) \supseteq N(M_2)$ we get $u_4 \in M_2$.
Altogether combined (and since $M_2$ induces a matching), $\{u_3,u_4\} \in F_2 \setminus F^{(3)}$.
Then, we could have continued the step {\sc Unbreak} with $x_1 \in M_1$ and $u_{2\ell} \in M_1$ exposed, and $\{u_3,u_4\} \in F_2 \setminus F^{(3)}$, that is a contradiction.  
\end{enumerate}
As a result, $u_{2\ell} \notin M_1$.
\end{proofclaim}

\begin{claim}\label{claim:notin-m2}
$u_{2\ell} \notin M_2$.
\end{claim}

\begin{proofclaim}
Suppose for the sake of contradiction $u_{2\ell} \in M_2$.
First we prove $u_2,u_3 \in N(M_1)$. 
Indeed, since $u_1$ is exposed and $F_1 \subseteq F^{(3)}$ by Claim~\ref{claim:f-3} we have that $u_2 \in N(M_1)$.
Furthermore, if $u_3 \notin N(M_1)$ then we could prove as before (Claim~\ref{claim:notin-m1}, Case 2) $\{u_3,u_4\} \in F_2 \setminus F^{(3)}$; it implies that we could have continued the step {\sc Unbreak} with $x_1 \in M_1$ and $u_{2\ell} \in M_2$ exposed, and $\{u_3,u_4\} \in F_2 \setminus F^{(3)}$, that is a contradiction. 
Therefore, as claimed, $u_2,u_3 \in N(M_1)$.  

Now, consider the edge $\{u_{2\ell-2},u_{2\ell-1}\} \in F^{(3)}$.
Since $u_{2\ell} \in M_2$ is exposed and by Claim~\ref{claim:f-3} we have $V(F_2) \subseteq V(F^{(3)})$, $u_{2\ell} \notin V(F_2)$.
Furthermore, since $E(H_2) = F_2$ we have $u_{2\ell-1} \notin M_2$.
There are two cases.
\begin{enumerate}

\item 
Suppose $u_{2\ell-2} \in M_1$.
Since $u_{2\ell-2}$ is matched to $u_{2\ell-1} \notin M_1$ and $F_1 \subseteq F^{(3)}$ by Claim~\ref{claim:f-3}, we have that $u_{2\ell-2} \notin V(F_1)$.
Furthermore, we claim that the edge $\{u_{2\ell-2},u_{2\ell-1}\}$ was added to the matching during the second step of the main phase ({\it i.e.}, {\sc Split}$(M_1,F^{(1)})$).

In order to prove this subclaim, we only need to decide the first step where $u_{2\ell-2}$ was matched to any vertex; indeed, our above operations can only consider vertices in $M_1$ that are exposed.
We observe that $u_{2\ell-2},u_{2\ell-1}$ could not possibly be matched together during the first step since otherwise, we could have also matched $u_{2\ell-1}$ with $u_{2\ell}$, thereby contradicting that $F^*$ is a maximum-cardinality matching of $G \setminus M_1$.
In addition, recall that we proved above $u_2,u_3 \in N(M_1)$.
It implies that $u_{2\ell-2},u_{2\ell-1}$ were matched together during the second step since, otherwise, this second step could have continued with $x_1,u_{2\ell-2} \in M_1$ exposed and $\{u_2,u_3\} \in F^{(1)}$.
Therefore, this claim is proved.

Then, before the second step of the main phase happened, vertex $u_{2\ell-1}$ was matched to some other vertex in $N_G(M_1)$.
However, since $u_{2\ell-1} \in N_G(M_2)$ and $u_{2\ell} \in M_2$ is exposed the latter contradicts that $F^{(1)}$ is good, and so, Claim~\ref{claim:f-3}.

\item Thus from now on assume $u_{2\ell-2} \notin M_1$.
By Claim~\ref{claim:f-3} we have that $F^{(3)}$ is good, and so, since $u_{2\ell-1} \in N_G(M_2)$ and $u_{2\ell} \in M_2$ is exposed we have $u_{2\ell-2} \in M_2$.
Furthermore, $u_{2\ell-3} \notin M_2$ since, otherwise, the final step of the main phase ({\sc Unbreak}) could have continued with $x_1 \in M_1$ and $u_{2\ell} \in M_2$ exposed, and $\{u_{2\ell-3},u_{2\ell-2}\} \in F_2 \setminus F^{(3)}$, that is a contradiction.  
However, it implies that $P'=(x_1 = u_1, u_2, u_3, \ldots, u_{2\ell-3},u_{2\ell})$ is a shorter augmenting path than $P$, thereby leading to another contradiction.
\end{enumerate}
As a result, $u_{2\ell} \notin M_2$.
\end{proofclaim}

Overall since $u_{2\ell} \in M_1 \cup M_2$ by Claim~\ref{claim:in-m1-m2} but $u_{2\ell} \notin M_1 \cup M_2$ by Claims~\ref{claim:notin-m1} and~\ref{claim:notin-m2}, the above proves that $x_1$ cannot be an end in any $F^{(3)}$-augmenting path.
\end{proof}

\paragraph{Finalization phase: breaking some edges in $F_1$.}
Intuitively, the matching $F^{(3)}$ may not be maximum because we sometimes need to borrow some edges of $F_1$ in augmenting paths.
So, we complete our procedure by performing the following two operations:

\begin{itemize}
\item Let $U_1 = N(M_1) \setminus V(F^{(3)})$, {\it i.e.}, $U_1$ contains all the exposed vertices in $N(M_1)$.
Consider the subgraph $G[M_1 \cup U_1] = G[M_1] \oplus G[U_1]$.
The set $U_1$ is a module of this subgraph.
We apply {\sc Split}$(U_1,F^{(3)})$ in $G[M_1 \cup U_1]$.
Doing so, we obtain a larger matching $F^{(4)}$.

\begin{fact}\label{fact:f-4}
$F^{(4)}$ is a good matching of $G$.
\end{fact}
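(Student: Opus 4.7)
The plan is to verify the two clauses of Definition~\ref{def:good-pty} directly, by carefully tracking which vertices and edges are touched when we pass from $F^{(3)}$ to $F^{(4)}$. Recall that {\sc Split}$(U_1,F^{(3)})$ is applied inside $G[M_1 \cup U_1] = G[M_1] \oplus G[U_1]$, where by definition $U_1$ consists of exposed vertices in $N_G(M_1)$. In this subgraph the neighborhood of the module $U_1$ is exactly $M_1$, so each iteration picks an edge $\{y,y'\} \in F^{(3)}$ with $y,y' \in M_1$ and replaces it by $\{x,y\},\{x',y'\}$ for exposed $x,x' \in U_1$. Since we reduced to $E(H_1) = F_1$ using Lemma~\ref{lem:mw-matching-reduction}, any such edge $\{y,y'\}$ must in fact lie in $F_1$; by Claim~\ref{claim:f-3} we also have $F_1 \subseteq F^{(3)}$, so the operation is well-defined and only breaks edges of $F_1$.

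For Property~1 of goodness, I will simply observe that $V(F^{(3)}) \subseteq V(F^{(4)})$: when an edge $\{y,y'\} \in F_1$ is broken, both endpoints remain matched (to $x$ and $x'$ respectively), and no other matched edges are removed. Combined with the fact that $F^{(3)}$ is good (and hence $V(F_1 \cup F_2) \subseteq V(F^{(3)})$), this gives $V(F_1 \cup F_2) \subseteq V(F^{(4)})$.

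For Property~2, I will split into the two cases of the disjunction for $F^{(3)}$. Here the key geometric input is that $v_1$ is anti-pendant with unique non-neighbor $v_2$, so $M_1$ and $M_2$ are non-adjacent; equivalently, $M_1 \cap N_G(M_2) = \emptyset$ and $U_1 \subseteq N_G(M_1) = V \setminus (M_1 \cup M_2)$. In particular $M_2$ is disjoint from $M_1 \cup U_1$, so the operation touches no vertex of $M_2$: if $M_2 \subseteq V(F^{(3)})$, then still $M_2 \subseteq V(F^{(4)})$. In the other case, $F^{(3)}$ has no matched edge in $N_G(M_2) \times N_G(M_1)$, and the new edges $\{x,y\},\{x',y'\}$ introduced by {\sc Split} satisfy $y,y' \in M_1$, hence $y,y' \notin N_G(M_2)$; so the new edges cannot lie in $N_G(M_2) \times N_G(M_1)$, and deletions of $F_1$-edges certainly do not create any. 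Either way, Property~2 is preserved.

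No step of this argument looks difficult — the whole proof is essentially bookkeeping on which edges and vertices the {\sc Split} operation touches. The only place where I would take care is the identification of the broken edges with edges of $F_1$, which relies on the reduction $E(H_1) = F_1$ and the invariant $F_1 \subseteq F^{(3)}$ from Claim~\ref{claim:f-3}; once this is in hand, the rest is immediate.
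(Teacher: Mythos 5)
Your proof is correct, and it follows the same route the paper implicitly intends: the paper states Fact~\ref{fact:f-4} without proof, relying on the same observations it spells out for Facts~\ref{fact:f-1} and~\ref{fact:f-2} (the operation only augments along augmenting paths, touches no vertex of $M_2$, and creates no matched edge in $N_G(M_2) \times N_G(M_1)$). Your additional bookkeeping --- identifying the broken edges as edges of $F_1$ via $E(H_1)=F_1$, and using $N_G(M_1)=V\setminus(M_1\cup M_2)$ to keep $M_2$ untouched --- is exactly the right way to fill in the omitted details.
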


\item Then, we apply the operation {\sc LocalAug}, defined next (see also Fig.~\ref{fig:local-aug} for an illustration):

\begin{figure}[h!]
\centering
\includegraphics[width=.35\textwidth]{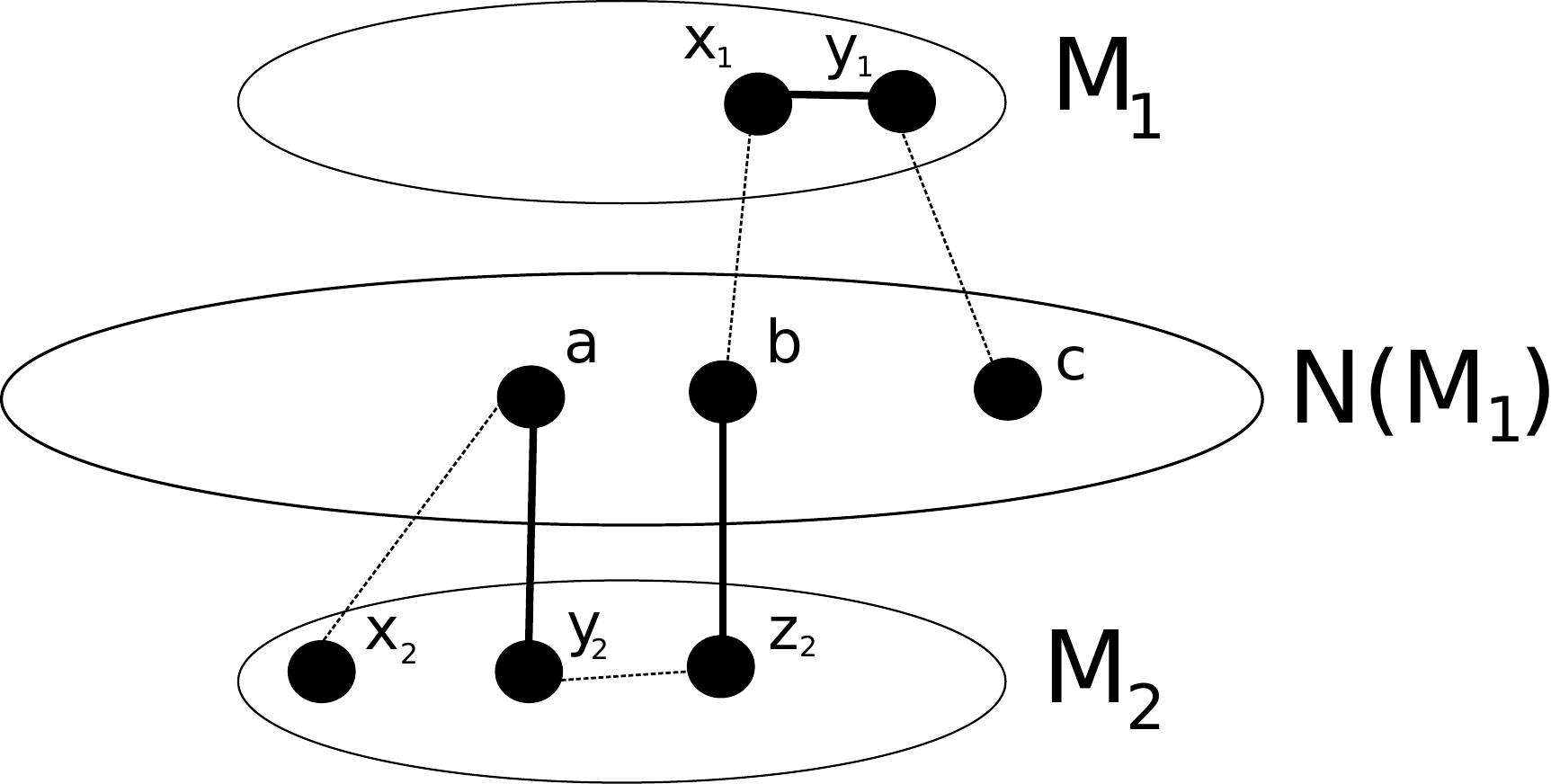}
\caption{An augmenting path of length $7$ with ends $x_2,c$. Matched edges are drawn in bold.}
\label{fig:local-aug}
\end{figure}

\begin{operation}[{\sc LocalAug}]\label{op:local-aug}
While there exist $x_2 \in M_2$ and $c \in N(M_1)$ exposed, and there also exist $\{x_1,y_1\} \in F^{(4)}$ and $\{y_2,z_2\} \in F_2 \setminus F^{(4)}$, we do the following:
\begin{itemize}
\item we remove $\{x_1,y_1\}$ and any edge $\{a,y_2\}, \{b,z_2\}$ from $F^{(4)}$;
\item we add $\{x_2,a\},\{y_2,z_2\},\{b,x_1\} \ \mbox{and} \ \{y_1,c\}$ in $F^{(4)}$.
\end{itemize}
\end{operation}

We stress that the two edges $\{y_2,a\},\{z_2,b\} \in F^{(4)}$ always exist since $F^{(4)}$ is a good matching of $G$.  
Furthermore doing so, we obtain a larger matching $F^{(5)}$.
\end{itemize}

\begin{lemma}\label{lem:final-anti-pendant}
$F^{(5)}$ is a maximum-cardinality matching of $G$.
\end{lemma}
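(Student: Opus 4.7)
By Berge's lemma (Lemma~\ref{lem:berge}) it suffices to show that $F^{(5)}$ admits no augmenting path. The plan is to argue by contradiction: suppose $P = (u_1, u_2, \ldots, u_{2\ell})$ is an $F^{(5)}$-augmenting path of minimum length, and derive a contradiction by a case analysis on where the endpoints $u_1, u_{2\ell}$ lie among $M_1$, $M_2$, and $N_G(M_1) = V \setminus (M_1 \cup M_2)$.

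Before entering the case analysis, I will record two invariants that survive the finalization phase. First, neither {\sc Split}$(U_1, F^{(3)})$ nor {\sc LocalAug} creates a new exposed vertex inside $M_1$: the split replaces an $F_1$-edge $\{y, y'\}$ by two edges that still match $y, y'$ (to partners in $U_1$), and {\sc LocalAug} immediately re-matches any $x_1, y_1$ it releases to partners $b, c \in N_G(M_1)$. Second, $F^{(5)}$ is still a good matching of $G$ in the sense of Definition~\ref{def:good-pty}: matched status is preserved on all of $V(F_1 \cup F_2)$, and {\sc LocalAug} strictly decreases the set of exposed $M_2$-vertices, so the second clause continues to hold.

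With these invariants, I can mimic the proof of Lemma~\ref{lem:main-anti-pendant} to rule out $u_1 \in M_1$ (and, symmetrically, $u_{2\ell} \in M_1$). The only delicate point is that the property $F_1 \subseteq F^{(j)}$ used repeatedly there no longer holds for $j = 5$; whenever the old argument invoked it, one now replaces it with a sub-argument that uses the stopping condition of {\sc Split}$(U_1, F^{(3)})$ (if the missing $F_1$-edge was broken by the split) or of {\sc LocalAug} (if it was broken during finalization), in both cases tracing the new partners of $u_{2i-1}, u_{2i}$ in $F^{(5)}$ to locate a configuration that should have allowed the corresponding operation to continue. Next, the case $u_1 \in M_2$, $u_{2\ell} \in N_G(M_1)$ (both exposed) is ruled out using the termination condition of {\sc LocalAug}: since we would have candidates for $x_2$ and $c$, either (i) no edge of $F^{(5)}$ lies entirely inside $M_1$, or (ii) $F_2 \subseteq F^{(5)}$; in case (i) a walk along $P$ using goodness forces a short detour through $M_1$ that can be shortcut, contradicting the minimality of $P$, while in case (ii) the intactness of $F_2$ forces the second edge of $P$ to exit $M_2$ through $N_G(M_2) \subseteq N_G(M_1)$ and reduces the situation to sub-case (i).

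Finally, the remaining configurations are $u_1, u_{2\ell} \in N_G(M_1)$ and $u_1, u_{2\ell} \in M_2$. For both endpoints in $N_G(M_1)$, if $P$ avoids $M_1$ one obtains, after carefully accounting for the net edge changes the operations made inside $V \setminus M_1$, an augmenting path in $G \setminus M_1$ with respect to $F^*$, contradicting its maximality; if $P$ enters $M_1$, two exposed vertices of $U_1$ together with a surviving $F_1$-edge reachable along the path would have enabled another iteration of {\sc Split}$(U_1, F^{(3)})$. Both endpoints in $M_2$ is ruled out because $F_2$ is maximum in $H_2$, so $P$ must exit $M_2$ through $N_G(M_2) \subseteq N_G(M_1)$, reducing to a prior case. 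The main obstacle is the detailed bookkeeping needed when $P$ alternates between inside and outside of $M_1$: for every edge of $P$ that touches $M_1$, one must decide whether it was introduced during {\sc Match}, {\sc Split}$(M_1,\cdot)$, {\sc Unbreak}, {\sc Split}$(U_1, F^{(3)})$, or {\sc LocalAug}, because each origin yields a different stopping condition that the overall path must violate.
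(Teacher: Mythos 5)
Your overall skeleton (minimum augmenting path, case analysis on the endpoints, goodness invariants for $F^{(5)}$) matches the paper's, and the two invariants you record at the outset are exactly the facts the paper uses. However, there are two genuine gaps. First, your plan for ruling out an endpoint in $M_1$ is to re-run the case analysis of Lemma~\ref{lem:main-anti-pendant} at the level of $F^{(5)}$, patching every use of $F_1 \subseteq F^{(j)}$ with ad hoc appeals to the stopping conditions of {\sc Split}$(U_1,F^{(3)})$ and {\sc LocalAug}. This is not a proof as stated, and it is doubtful it can be carried out: several steps of that earlier argument (for instance, deducing $u_3 \notin M_1$ from $F_1 \subseteq F^{(3)}$, or locating an unmatched $F_2$-edge to feed {\sc Unbreak}) genuinely break once $F_1$-edges have been split, and ``tracing the new partners to locate a configuration'' is not a replacement argument. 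The paper avoids all of this with a short structural observation: $F^{(5)} \oplus P$ is reachable from $F^{(3)}$ by augmenting paths, so the symmetric difference $F^{(3)} \oplus (F^{(5)} \oplus P)$ decomposes into $F^{(3)}$-augmenting paths, one of which must end at the exposed vertex $u_1 \in M_1$ --- directly contradicting Lemma~\ref{lem:main-anti-pendant}. You should use that lifting argument rather than re-proving the lemma.

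Second, the case $u_1, u_{2\ell} \in M_2$ --- which is the terminal case once the others are excluded --- is not actually handled: ``$P$ must exit $M_2$ through $N_G(M_2)$, reducing to a prior case'' does not reduce to anything, since your case split is on where the \emph{endpoints} lie and they remain in $M_2$. The paper needs real work here: it first proves $F_2 \subseteq F^{(5)}$ (by showing $F_2 \setminus F^{(5)} \subseteq F_2 \setminus F^*$ across all five operations and then exhibiting an explicit length-five $F^*$-augmenting path $(u_1,w,x_2,y_2,w',u_{2\ell})$ in $G \setminus M_1$ if some $F_2$-edge were missing), and then derives a contradiction from a dichotomy on whether $u_3 \in M_2$ or $u_3 \in M_1$, in the latter subcase tracking exactly which operation could have matched $\{u_2,u_3\}$. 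Relatedly, in the cases involving an exposed vertex of $N_G(M_1)$ you are missing the key structural observation that makes the reduction to the maximality of $F^*$ legitimate: if $N_G(M_1)$ retains an exposed vertex, then {\sc Match}$(M_1,F^{(0)})$ must have exhausted the exposed vertices of $M_1$, so the main phase stops after its first step, whence $F^* \subseteq F^{(4)}$ and every $N_G(M_1)$-vertex matched into $M_1$ was exposed in $F^*$. Without that, ``carefully accounting for the net edge changes'' does not yield an $F^*$-augmenting path, because {\sc Unbreak} and {\sc LocalAug} do rearrange edges inside $V \setminus M_1$.
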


\begin{proof}
Suppose for the sake of contradiction that there exists an $F^{(5)}$-augmenting path $P=(u_1,u_2,\ldots,u_{2\ell})$.
W.l.o.g., $P$ is of minimum size.
We divide the proof into the following claims.

\begin{fact}\label{fact:f-5}
$F^{(5)}$ is a good matching of $G$.
\end{fact}

\begin{claim}\label{claim:not-m1}
$u_1,u_{2\ell} \notin M_1$
\end{claim}

\begin{proofclaim}
Suppose for the sake of contradiction $u_1 \in M_1$ (the case $u_{2\ell} \in M_1$ is symmetrical to this one).
Since $F^{(3)}$ and $F^{(5)} \oplus P$ are matchings, the symmetric difference $F^{(3)} \oplus (F^{(5)} \oplus P)$ is a disjoint union of alternating cycles, alternating paths and isolated vertices.
In particular, since $F^{(5)} \oplus P$ can be obtained from $F^{(3)}$ by using augmenting paths, the symmetric difference $F^{(3)} \oplus (F^{(5)} \oplus P)$ is exactly a disjoint union of isolated vertices and of augmenting paths that can be used for obtaining $F^{(5)} \oplus P$ from $F^{(3)}$. 
One of these paths must contain $u_1$.
As a result, there is also an $F^{(3)}$-augmenting path with an end in $M_1$, thereby contradicting Lemma~\ref{lem:main-anti-pendant}.
The latter proves, as claimed, $u_1,u_{2\ell} \notin M_1$.
\end{proofclaim}

\begin{claim}\label{claim:not-nm1}
There is no exposed vertex in $N(M_1)$.
\end{claim}

\begin{proofclaim}
Suppose for the sake of contradiction $N(M_1) \not\subseteq V(F^{(5)})$.
In particular $N(M_1) \not\subseteq V(F^{(4)})$.
We will prove in this situation there can be only one vertex in $N(M_1)$ that is left exposed by $F^{(4)}$.
Then, we will derive a contradiction by proving that we can apply the {\sc LocalAug} operation.

First, we observe that the main phase of our procedure must terminate after its very first step {\sc Match}$(M_1,F^{(0)})$.
Indeed, after this step there can be no more exposed vertex in $M_1$, and so, the other rules cannot apply.
Then, we apply the operation {\sc Split}$(U_1,F^{(3)})$ in $G[M_1 \cup U_1]$ in order to further match some vertices in $N(M_1)$ to the vertices in $M_1$.
Doing so, we get the following two important properties for $F^{(4)}$:
\begin{enumerate}
\item\label{pty-1} if a vertex of $N(M_1)$ is matched to a vertex of $M_1$, then this vertex was left exposed by $F^*$;
\item\label{pty-2} $F^* \subseteq F^{(4)}$.
\end{enumerate}
Let $Q = (w_1,w_2,\ldots,w_{2q})$ be a minimum-length $F^{(4)}$-augmenting path.
Such path exist since $F^{(5)}$, and so, $F^{(4)}$, is not maximum.
Let $i_0$ be the minimum index $i$ such that $w_i \in M_1$.
The latter is well-defined since otherwise, by the above Property~\ref{pty-2} $Q$ would be an $F^*$-augmenting path in $G \setminus M_1$, thereby contradicting the maximality of $F^*$.
Furthermore, $w_1,w_{2q} \notin M_1$, and so, $i_0 > 1$ (the proof is the same as for Claim~\ref{claim:not-m1}).
More generally, we have that $i_0$ is even since, if it were not the case, by the above Property~\ref{pty-1} we would have that $(w_1,w_2,\ldots,w_{i_0-1})$ is an $F^*$-augmenting path in $G \setminus M_1$, thereby again contradicting the maximality of $F^*$.
There are two cases:

\begin{enumerate}

\item
Case there exists an edge $\{x_1,y_1\} \in F_1 \cap F^{(4)}$.
Then, there is exactly one exposed vertex $c \in N_G(M_1)$ (otherwise, the step {\sc Split}$(U_1,F^{(3)})$ in $G[M_1 \cup U_1]$ could have been continued).
W.l.o.g., $w_1 \neq c$.
Since $w_1 \notin M_1$, it implies $w_1 \in M_2$.
We consider the alternating subpath $(w_1,w_2,w_3)$ in $Q$.
Since $w_2 \notin M_1$ and $i_0$ is even, we have $w_3 \notin M_1$ by minimality of $i_0$.
Furthermore, since $w_1 \in M_2$ is exposed, $w_2 \in N(M_2)$ is matched to $w_3$ and $F^{(4)}$ is good by Fact~\ref{fact:f-4}, $w_3 \notin N(M_1)$.
Altogether combined, $w_3 \in M_2$.
We also have $w_4 \in M_2$ since otherwise, $Q'=(w_1,w_4,w_5,\ldots,w_{2q})$ would be a shorter augmenting path than $Q$, thereby contradicting the minimality of $Q$.
As a result, $\{w_3,w_4\} \in F_2 \setminus F^{(4)}$.
However, in this case the step {\sc LocalAug} will be applied, {\it e.g.} with $w_1 \in M_2$ and $c \in N(M_1)$ exposed, $\{x_1,y_1\} \in F_1 \cap F^{(4)}$ and $\{w_3,w_4\} \in F_2 \setminus F^{(4)}$.
In particular, $c$ is matched by $F^{(5)}$, that is a contradiction.

\item
Case $F_1 \cap F^{(4)} = \emptyset$.
In particular, $w_{i_0+1} \notin M_1$.
Let $j_0$ be the maximum $j \geq i_0+1$ such that $w_{i_0+1}, w_{i_0+2}, \ldots, w_j \notin M_1$.
We have $j_0 < 2q$ since otherwise, by the above Property~\ref{pty-1} $(w_{i_0+1},\ldots,w_{2q})$ would be an $F^*$-augmenting path in $G \setminus M_1$, thereby contradicting the maximality of $F^*$.
Thus, $w_{j_0+1} \in M_1$.
Furthermore, $j_0$ is even since otherwise, $Q'=(w_1,\ldots,w_{i_0-1},w_{j_0+1},\ldots,w_{2q})$ would be a shorter $F^{(4)}$-augmenting path than $Q$, thereby contradicting the minimality of $Q$.
However, then we have by the above Property~\ref{pty-1} $(w_{i_0+1},\ldots,w_{j_0})$ that is an $F^*$-augmenting path in $G \setminus M_1$, thereby contradicting the maximality of $F^*$.
\end{enumerate}
Overall, the above proves as claimed that there is no exposed vertex in $N(M_1)$.
\end{proofclaim}

It follows from Claims~\ref{claim:not-m1} and~\ref{claim:not-nm1} that $u_1,u_{2\ell} \in M_2$.
Furthermore we have $u_1 \in M_2$ is exposed, $\{u_2,u_3\}$ is matched and $u_2 \in N_G(M_2)$.
Since $F^{(5)}$ is good by Fact~\ref{fact:f-5} we have $u_3 \notin N(M_1)$.
Equivalently, $u_3 \in M_1 \cup M_2$.
The following claim will be instrumental in deriving a contradiction.

\begin{claim}\label{claim:broken-edge}
$F_2 \subseteq F^{(5)}$.
\end{claim}

\begin{proofclaim}
Suppose for the sake of contradiction there exists $\{x_2,y_2\} \in F_2 \setminus F^{(5)}$.
We prove that $F_2 \setminus F^{(5)} \subseteq F_2 \setminus F^*$.
Indeed, after the two first steps of the main phase we have $F_2 \setminus F^{(2)} = F_2 \setminus F^*$.
The operation {\sc Unbreak} adds edges of $F_2$ into the matching, hence $F_2 \setminus F^{(3)} \subseteq F_2 \setminus F^*$.
Then, after the operation {\sc Split}$(U_1,F^{(3)})$ in $G[M_1 \cup U_1]$ we have $F_2 \setminus F^{(4)} = F_2 \setminus F^{(3)} \subseteq F_2 \setminus F^*$.
Finally, the operation {\sc LocalAug} adds edges of $F_2$ into the matching, hence $F_2 \setminus F^{(5)} \subseteq F_2 \setminus F^*$.

However, since $F^{(0)}$ is good, we have $V(F_2) \subseteq V(F^*)$.
It implies there exist $w,w' \in N(M_2)$ such that $\{x_2,w\},\{y_2,w'\} \in F^*$.
In particular we have that $(u_1,w,x_2,y_2,w',u_{2\ell})$ is an $F^*$-augmenting path in $G \setminus M_1$, thereby contradicting the maximality of $F^*$.
\end{proofclaim}

Now, there are two cases.
\begin{itemize}
\item Case $u_3 \in M_2$.
We have $u_4 \notin N(M_2)$ since otherwise, $P'=(u_1,u_4,u_5,\ldots,u_{2\ell})$ would be a shorter augmenting path than $P$, thereby contradicting the minimality of $P$.
Therefore, $\{u_3,u_4\} \in F_2 \setminus F^{(5)}$.
The latter contradicts Claim~\ref{claim:broken-edge}.

\item Case $u_3 \in M_1$.
By maximality of $F^*$, $u_2$ was matched in $F^*$ (otherwise, we could have added $\{u_1,u_2\}$ in $F^*$).
Therefore, the edge $\{u_2,u_3\}$ was not matched during the operation {\sc Match}$(M_1,F^{(0)})$ nor during the operation {\sc Split}$(U_1,F^{(3)})$ in $G[M_1 \cup U_1]$.
Furthermore, this edge was not matched during the operation {\sc Split}$(M_1,F^{(1)})$ either since otherwise, $u_2$ would have been matched in $F^{(1)}$ with some other vertex in $N(M_1)$; since $u_1 \in M_2$ is exposed and $u_2 \in N(M_2)$, the latter would contradict that $F^{(1)}$ is good (Fact~\ref{fact:f-1}).
As a result, the edge $\{u_2,u_3\}$ was matched during the {\sc Unbreak} operation or the {\sc LocalAug} operation.
Both subcases imply the existence of some edge $\{x_2,y_2\} \in F_2 \setminus F^*$.
As in the previous case, the latter contradicts the maximality of $F^*$.
\end{itemize}
\end{proof}

\paragraph{Complexity analysis.}
Each step of our procedure is corresponding to a while loop.
In order to execute any loop in constant-time we need constant-time access to the following objects: 
\begin{itemize}
\item exposed vertices in $M_1, M_2$ or $N(M_1)$.
Recall that we have access to a canonical ordering for every module ({\it i.e.}, see Section~\ref{sec:prelim}).
Hence, constant-time access to the exposed vertices can be ensured up to ${\cal O}(p)$-time preprocessing, where $p = |V(G')|$.
\item matched edges with at least one end in $N(M_1)$.
We assume that for every matched vertex $u$, we can output in constant-time the unique edge of the matching that contains $u$.
Thus, constant-time access to these matched edges can be ensured up to ${\cal O}(|N_G(M_1)|)$-time.

We also need constant-time access to the subset of these matched edges that have their other end: also in $N_G(M_1)$; in $N_G(M_2)$; or in $V(F_2)$.
This takes additional ${\cal O}(|N_G(M_1)|)$-time preprocessing.
\item matched edges in $F_1$.
For that, it suffices to scan the canonical ordering of $M_1$, that takes ${\cal O}(n_1)$-time.
\item finally, unmatched edges in $F_2$.
For that, we enumerate the matched edges with their ends in $N_G(M_2)$ and $M_2$.
Doing so, since $E(H_2) = F_2$, we can enumerate for every such end in $M_2$ the unique unmatched edge in $F_2$ to which it is incident.
\end{itemize} 
Altogether combined, after a pre-processing in time ${\cal O}(|N_G[M_1]|)$, any loop of the procedure can be executed in constant-time.
Note that ${\cal O}(|N_G[M_1]|) = {\cal O}(\delta_1) = {\cal O}(\Delta m(G') - \Delta m(G^*))$.

We observe that after any loop, a new edge is added to the matching with exactly one end in $M_1$.
Furthermore, this edge is never removed from the matching at an ulterior step.
Hence, the total number of loops is an ${\cal O}(|N_G[M_1]|)$.
Overall, the total running-time of the procedure is in ${\cal O}(|N_G[M_1]|)$, that is in ${\cal O}(\Delta m(G') - \Delta m(G^*))$.

\subsection{Pendant}\label{sec:pendant}

Suppose $v_1$ is pendant in $G'$.
W.l.o.g., $v_2$ is the unique vertex that is adjacent to $v_1$ in $G'$.
This last case is arguably more complex than the others since it requires both a pre-processing and a post-processing treatment on the matching.

We note that a particular subcase was solved in~\cite{CDP18}, namely, when $M_2$ is a trivial module.
However, our techniques for the general case are quite different than the techniques in~\cite{CDP18}. 

\paragraph{First phase: greedy matching.}
We apply the {\sc Match} \& {\sc Split} technique to $M_1$.
Doing so, we obtain a set $F_{1,2}$ of matched edges between $M_1$ and $M_2$.
We remove $V(F_{1,2})$, the set of vertices incident to an end of $F_{1,2}$, from $G$.
Then, two situations can occur.
In the first situation, this initial pre-treatment suffices in order to prune $M_1$ (pathological cases).
Otherwise, at most one exposed vertex remains in $M_1$; we arbitrarily break an edge of $F_2$ to match such vertex.
More precisely, there are three cases.
\begin{itemize}
\item
If $M_2 \subseteq V(F_{1,2})$ then $M_1 \setminus V(F_{1,2})$ is now an isolated module.
We can apply Reduction rule~\ref{rule:isolated}.
\item
If $M_1 \subseteq V(F_{1,2})$ then $M_1$ is already eliminated.
Let $F_2^*$ contain the edges of $F_2$ that are not incident to a vertex of $M_2 \cap V(F_{1,2})$. 

We set $G^* = G' \setminus v_1$, ${\cal P}^* = \{H_2 \setminus V(F_{1,2})\} \cup ({\cal P} \setminus \{H_1,H_2\}), {\cal F}^* = \{F_2^*\} \cup ({\cal F} \setminus \{F_1,F_2\})$.  
\item The interesting case is when both $M_1 \setminus V(F_{1,2})$ and $M_2 \setminus V(F_{1,2})$ are nonempty.
In particular, suppose there remains an exposed vertex $x_1 \in M_1 \setminus V(F_{1,2})$.
Since $M_2 \setminus V(F_{1,2}) \neq \emptyset$, there exists $\{x_2,y_2\} \in F_2$ such that $x_2,y_2 \notin V(F_{1,2})$.
We remove $x_1$ from $M_1$, $x_2$ from $M_2$, $\{x_2,y_2\}$ from $F_2$ and then we add $\{x_1,x_2\}$ in $F_{1,2}$. 
\end{itemize}

Our first result in this section is that there always exists an optimal solution that contains $F_{1,2}$.
This justifies a posteriori the removal of $V(F_{1,2})$ from $G$.

\begin{lemma}\label{lem:greedy-pendant}
There is a maximum-cardinality matching of $G$ that contains all edges in $F_{1,2}$.
\end{lemma}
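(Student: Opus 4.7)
My plan is to take an arbitrary maximum matching $F^{*}$ of $G$ and rearrange it, without loss of cardinality, into a matching containing $F_{1,2}$. The natural object to analyse is the symmetric difference $D=F^{*}\triangle F_{1,2}$, which decomposes into alternating paths and even cycles. Flipping an even cycle, or an alternating path having one $F^{*}$-exposed endpoint and one $F_{1,2}$-exposed endpoint, preserves $|F^{*}|$; an alternating path with both endpoints $F^{*}$-exposed is $F^{*}$-augmenting and cannot exist by maximality of $F^{*}$. Flipping every component of $D$ then yields a matching $F'$ with $F_{1,2}\subseteq F'$ and $|F'|=|F^{*}|$, provided no ``bad'' component is present -- by a bad component I mean an alternating path whose both endpoints are $F^{*}$-matched but $F_{1,2}$-exposed, since flipping it would lose one edge. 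The whole lemma thus reduces to proving that no such bad path exists.

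Suppose for contradiction that $P=(u,v_{1},\ldots,v_{2k},w)$ is a bad path in $D$. Every edge of $F_{1,2}$ crosses $M_{1}$ and $M_{2}$, hence all internal vertices of $P$ lie in $M_{1}\cup M_{2}$. A direct inspection of the MATCH, SPLIT and Case~3 steps of Phase~1 gives the following invariant: $M_{1}\setminus V(F_{1,2})\subseteq V(F_{1})$, and $M_{2}\setminus V(F_{1,2})$ is contained in $V(\widetilde F_{2})\cup\{y_{2}\}$, where $\widetilde F_{2}\subseteq F_{2}$ collects the $F_{2}$-edges that survived SPLIT and Case~3, and $y_{2}$ is the ``sacrificed'' $M_{2}$-vertex that appears only when Case~3 fires. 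Consequently each endpoint $u,w$ of $P$ falls into one of four cases: $F_{1}$-matched inside $M_{1}$, $\widetilde F_{2}$-matched inside $M_{2}$, equal to $y_{2}$, or lying outside $M_{1}\cup M_{2}$ (in which case its $F^{*}$-neighbour inside $P$ is forced to lie in $M_{2}$, because $M_{1}$'s only external neighbours sit in $M_{2}$).

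In each case I plan to extend $P$ at $u$ (and symmetrically at $w$) into an $F^{*}$-alternating walk leading to a contradiction. When $u\in V(F_{1})$ with $F_{1}$-partner $u_{1}$, the edge $\{u,u_{1}\}$ cannot belong to $F^{*}$: otherwise the $F^{*}$-partner of $u$ would be $u_{1}\notin V(F_{1,2})$, contradicting $v_{1}\in V(F_{1,2})$. Prepending $\{u,u_{1}\}$ therefore preserves $F^{*}$-alternation; I then follow $u_{1}$'s $F^{*}$-match and iterate, terminating either at an $F^{*}$-exposed vertex (yielding an $F^{*}$-augmenting path, forbidden) or at a configuration where MATCH/SPLIT should already have fired, contradicting the construction of $F_{1,2}$. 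The case $u\in V(\widetilde F_{2})$ is symmetric, using an unbroken edge of $F_{2}$; the outside case follows because $v_{1}\in M_{2}$, combined with an exposed vertex of $M_{1}$ reached along $P$, exhibits an augmentation that MATCH would have performed in Phase~1.

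The main obstacle I expect is the exception $u=y_{2}$ (or $w=y_{2}$): neither $F_{1}$ nor $\widetilde F_{2}$ extends $P$ directly. I plan to splice $P$ with the broken edge $\{x_{2},y_{2}\}\in F_{2}\setminus\widetilde F_{2}$ and the Case~3 edge $\{x_{1},x_{2}\}\in F_{1,2}$, thereby reaching the previously exposed vertex $x_{1}\in M_{1}$. The resulting walk is either $F^{*}$-augmenting, contradicting maximality, or it reveals a way in which Phase~1 could have matched $x_{1}$ without breaking $\{x_{2},y_{2}\}$, contradicting the very trigger of Case~3. Resolving this case completes the proof.
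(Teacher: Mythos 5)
Your reduction of the lemma to the non-existence of ``bad'' components is not valid: bad components can genuinely occur in $F^{*}\triangle F_{1,2}$ for a maximum matching $F^{*}$, so no contradiction can be derived from assuming one exists. Concretely, take $M_1=\{a,a',a''\}$ with $F_1=\{\{a',a''\}\}$, $M_2=\{b,c\}$ edgeless (so $F_2=\emptyset$), and one external vertex $d$ adjacent to $M_2$. Phase~1 runs \textsc{Match} exactly once and stops, because the only $F_1$-exposed vertex of $M_1$ is $a$; this gives $F_{1,2}=\{\{a,b\}\}$. Now $F^{*}=\{\{a,c\},\{a',a''\},\{b,d\}\}$ is a perfect, hence maximum, matching of $G$, and $F^{*}\triangle F_{1,2}$ contains the path $(c,a,b,d)$ whose two end edges $\{a,c\},\{b,d\}$ lie in $F^{*}$ and whose endpoints $c,d$ are $F_{1,2}$-exposed: this is exactly a bad path, and flipping it drops the cardinality from $3$ to $2$. (The lemma is of course true here, witnessed by $\{\{a,b\},\{a',a''\},\{c,d\}\}$.) The same example falsifies the invariant your case analysis rests on, namely that $M_2\setminus V(F_{1,2})\subseteq V(\widetilde F_2)\cup\{y_2\}$: here $c\in M_2\setminus V(F_{1,2})$ is $F_2$-exposed and Case~3 never fires, so $c$ falls into none of your four endpoint cases. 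The invariant fails whenever \textsc{Match} exhausts the exposed vertices of $M_1$ before those of $M_2$, which is the generic situation for a pendant module.

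The missing ingredient is the one the paper's proof is actually built on: the interchangeability, inside the module $M_2$, of the partners that $M_2$-vertices receive from outside $M_2$. The paper does not start from an arbitrary maximum matching; it augments $F_{1,2}$ itself to a maximum matching $F$ (so every vertex of $V(F_{1,2})$ stays matched, and each greedily chosen exposed $u_i\in M_1$ stays matched into $M_2$), and then observes that the set $A_2$ of $M_2$-vertices matched outside $M_2$ may be assumed to be a canonical prefix, and that the bipartite matching between $A_2$ and its partner set $B_2$ can be replaced by \emph{any} perfect matching between these two sets, since $M_2$ is a module and all edges between $A_2$ and $B_2$ are present. In the counterexample above this is the swap exchanging the roles of $b$ and $c$, turning $\{a,c\},\{b,d\}$ into $\{a,b\},\{c,d\}$. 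If you wish to keep the symmetric-difference framework, you would need to add such module swaps as an additional repair move on bad components; without them the argument cannot be completed.
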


\begin{proof}
Let $M_1 = (u_1,u_2,\ldots,u_{n_1})$ and $M_2 = (w_1,w_2,\ldots,w_{n_2})$ be canonically ordered w.r.t. $F_1,F_2$ (cf. Sec.~\ref{sec:prelim}).
Furthermore, let $u_1,u_2,\ldots,u_k$ be the maximal sequence of exposed vertices in $M_1$ with $k \leq n_2$.
We observe that $F_{1,2}$ is obtained by greedily matching $u_i$ with $w_i$.

Then, let $F$ be any maximum-cardinality matching of $G$ that can be obtained from $F_{1,2}$ using augmenting paths.
By construction, $u_1,u_2,\ldots,u_k$ are matched by $F$.
In particular, since every $u_i$ is isolated in $H_1 = G[M_1]$, it is matched by $F$ to some vertex in $M_2$.
So, let $A_2 \subseteq M_2$ be the vertices matched by $F$ with a vertex in $V \setminus M_2$ (possibly, in $M_1$).
Since $M_2$ is a module, we can always assume that $A_2$ induces a suffix $(w_1,w_2,\ldots,w_j)$ of the canonical ordering ({\it i.e.}, see~\cite[Lemma $5.1$]{CDP18}).
Finally, let $B_2 \subseteq V \setminus M_2$, $|B_2| = |A_2|$, be the set of vertices matched by $F$ with a vertex of $A_2$.
Note that we have $u_1,u_2,\ldots,u_k \in B_2$.
Since $M_2$ is a module, there are all possible edges between $A_2$ and $B_2$.
As a result, we can always replace the matched edges between $A_2,B_2$ by any perfect matching between these two sets without changing the cardinality of $F$.
It implies that we can assume w.l.o.g. every $u_i$ is matched to $w_i$. 
\end{proof}

We stress that during this phase, all the operations except maybe the last one increase the cardinality of the matching.
Furthermore, the only possible operation that does not increase the cardinality of the matching is the replacement of an edge in $F_2$ by an edge in $F_{1,2}$.
Doing so, either we fall in one of the two pathological cases $M_1 \subseteq V(F_{1,2})$ or $M_2 \subseteq V(F_{1,2})$ (easy to solve), or then we obtain through the replacement operation the following stronger property: 

\begin{property}\label{pty:pendant}
All vertices in $M_1$ are matched by $F_1$.
\end{property} 

We will assume Property~\ref{pty:pendant} to be true for the remaining of this section.

\paragraph{Second phase: virtual split edges.}
We complete the previous phase by performing a {\sc Split} between $M_2,M_1$ (Operation~\ref{op:split}).
That is, while there exist two exposed vertices $x_2,y_2 \in M_2$ and a matched edge $\{x_1,y_1\} \in F_1$ we replace $\{x_1,y_1\}$ by $\{x_1,x_2\},\{y_1,y_2\}$ in the current matching. 
However, we encode the {\sc Split} operation using virtual edges in $H_2$.

Formally, we add a virtual edge $\{x_2,y_2\}$ in $H_2$ that is labeled by the corresponding edge $\{x_1,y_1\} \in F_1$.
Let $H_2^*$ and $F_2^*$ be obtained from $H_2$ and $F_2$ by adding all the virtual edges. 
We set $G^* = G' \setminus v_1$, ${\cal P}^* = \{H_2^*\} \cup ({\cal P} \setminus \{H_1,H_2\})$ and ${\cal F}^* = \{F_2^*\} \cup ({\cal F} \setminus \{F_1,F_2\})$.

Intuitively, the virtual edges are used in order to shorten the augmenting paths crossing $M_1$.

\paragraph{Third phase: post-processing.}
Let $F^*$ be a maximum-cardinality matching of the subdivision $G^*({\cal P}^*)$ ({\it i.e.}, obtained by applying our reduction rules to the new instance).
We construct a matching $F$ for $G$ as follows.
\begin{enumerate}
\item
We add in $F$ all the non virtual edges in $F^*$.
\item
For every virtual edge $\{x_2,y_2\}$, let $\{x_1,y_1\} \in F_1$ be its label.
If $\{x_2,y_2\} \in F^*$ then we add $\{x_1,y_2\},\{x_2,y_1\}$ in $F$, otherwise we add $\{x_1,y_1\}$ in $F$.
In the first case, we say that we confirm the {\sc Split} operation, whereas in the second case we say that we cancel it.
\item
Finally, we complete $F$ with all the edges of $F_1$ that do not label any virtual edge ({\it i.e.}, unused during the second phase).
\end{enumerate}

\begin{lemma}\label{lem:main-pendant}
$F$ is a maximum-cardinality matching of $G$.
\end{lemma}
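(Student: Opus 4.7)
The plan is to prove the lemma in three steps: verify that $F$ is a matching, compute $|F|$ exactly, and then match this count with an upper bound on $\mu(G)$.

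First I would check that $F$ is a valid matching. The non-virtual edges of $F^*$ lie outside $M_1$ and are pairwise disjoint by construction. For each virtual edge $\{x_2,y_2\} \in F^*$ with label $\{x_1,y_1\} \in F_1$, the replacement pair $\{x_1,y_2\},\{x_2,y_1\}$ re-uses the $M_2$-endpoints that were already occupied by the virtual edge in $F^*$ and the two $M_1$-endpoints given by the label; since the virtual edges of $F^*$ form a matching of $H_2^*$ and distinct virtual edges carry distinct $F_1$-labels (each $F_1$-edge is used at most once during phase~2's \textsc{Split}), no conflicts arise. Canceled virtual edges and unused $F_1$-edges contribute edges internal to $M_1$, disjoint from all of the above. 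Writing $r$ for the total number of virtual edges and $k$ for the number among them lying in $F^*$, summing the contributions yields
\[ |F| \;=\; |F_{1,2}| \;+\; (|F^*|-k) \;+\; 2k \;+\; (r-k) \;+\; (|F_1|-r) \;=\; |F_{1,2}| + |F^*| + |F_1|. \]

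Next I would prove the upper bound $\mu(G) \le |F_{1,2}| + |F_1| + |F^*|$. By Lemma~\ref{lem:greedy-pendant} it suffices to bound $\mu(G \setminus V(F_{1,2}))$ by $|F_1|+|F^*|$. Given any matching $F'$ of $G \setminus V(F_{1,2})$, I would use standard module-based exchange arguments — using Property~\ref{pty:pendant}, the fact that $M_1, M_2$ are modules, and $E(H_1)=F_1$ — to assume without loss of generality that every vertex of $M_1$ is matched by $F'$ and that each $F_1$-pair $\{x_1,y_1\}$ is either kept intact in $F'$ (\emph{conservative}) or \emph{split}, meaning $x_1,y_1$ are sent to two distinct vertices of $M_2$. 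From such a canonical $F'$ one then builds a matching $F'^*$ of $G^*({\cal P}^*)$ by keeping every edge of $F'$ disjoint from $M_1$ and replacing each split pair by a virtual edge of $H_2^*$ covering its two $M_2$-endpoints; this delivers $|F'^*| = |F'| - |F_1|$, and the bound $|F'| \le |F^*|+|F_1|$ then follows from the optimality of $F^*$.

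The main obstacle, in my view, is justifying that the virtual edges required above are actually present in $H_2^*$. The \textsc{Split} operation of phase~2 pairs up the $M_2$-vertices left exposed after phase~1 with $F_1$-edges, stopping only when one of these two resources is exhausted. Using the fact that $M_2$ is a module of $G$ — so the $M_2$-endpoints of any split pair of $F'$ can be swapped with any other $M_2$-vertex that is exposed inside $F' \cap E(H_2)$ without affecting $|F'|$ — I would re-route $F'$ so that its split $M_2$-endpoints coincide exactly with pairs of vertices covered by virtual edges introduced in phase~2. It then remains to observe that the number $s$ of split pairs in $F'$ is bounded above by $|F_1|$ and by half the number of $M_2$-vertices left exposed after phase~1, and that by the exhaustion of \textsc{Split} in phase~2 both bounds coincide with the number of available virtual edges. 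This realizes $F'^*$ as a genuine matching of $G^*({\cal P}^*)$ and completes the proof.
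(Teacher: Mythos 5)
Your strategy is genuinely different from the paper's: instead of taking a shortest $F$-augmenting path and transforming it into an $F^*$-augmenting path of $G^*({\cal P}^*)$ (which is what the paper does, via Claims on how such a path can cross $M_1$ and $M_2$), you count $|F|$ exactly and try to match it against an upper bound on $\mu(G\setminus V(F_{1,2}))$ obtained by projecting an arbitrary matching $F'$ down to $G^*({\cal P}^*)$. The counting $|F|=|F_{1,2}|+|F_1|+|F^*|$ and the reduction to showing $\mu(G\setminus V(F_{1,2}))\le |F_1|+|F^*|$ are fine, and you correctly identify the crux: realizing the split pairs of $F'$ as virtual edges of $H_2^*$.

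However, the step you present as a mere ``observation'' --- that the number $s$ of split pairs is at most half the number of $M_2$-vertices left exposed after phase~1 --- is false for an arbitrary matching in your canonical form, so there is a genuine gap exactly at the obstacle you flagged. Concretely, take $M_1=\{x_1,y_1\}$ with $F_1=\{x_1y_1\}$ and $M_2=\{a,b\}$ with $F_2=\{ab\}$, and no other module, so that $G=K_4$. Phase~1 does nothing and phase~2 creates \emph{no} virtual edge (no $M_2$-vertex is exposed), yet $F'=\{x_1a,\,y_1b\}$ is a maximum matching in your canonical form with $s=1>0$; your projection $F'^*$ would need a virtual edge that does not exist. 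The point is that a split pair may send its two $M_1$-endpoints to $M_2$-vertices that are still covered by $F_2$, and such endpoints cannot be re-routed onto virtual-edge endpoints by the module-swap you describe. To close the gap you need an additional exchange argument: among maximum matchings in canonical form, choose one minimizing $s$, and show that if some split endpoint $w$ has a surviving $F_2$-partner $z$, then (depending on whether $z$ is exposed, matched into $M_1$, or matched outside $M_1\cup M_2$) one can always un-split that pair while preserving cardinality, contradicting minimality. Only then does $2s\le e_2$ hold, and combined with $s\le |F_1|$ and the exhaustiveness of the phase-2 {\sc Split} this gives $s\le r$ and makes your re-routing legitimate. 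With that lemma added, your argument goes through and is a valid, arguably more global, alternative to the paper's augmenting-path surgery; without it, the bound you rely on is simply not true.
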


\begin{proof}
Suppose for the sake of contradiction that $F$ is not maximum.
Let $P=(u_1,u_2,\ldots,u_{2\ell})$ be a shortest $F$-augmenting path.
In order to derive a contradiction, we will transform $P$ into an $F^*$-augmenting path in $G^*({\cal P}^*)$.
For that, we essentially need to avoid passing by $M_1$, using instead the virtual edges.
In the first part of the proof, we show that $P$ intersects $M_1$ in at most one edge (Claim~\ref{claim:small-cross}).
We need a few preparatory claims in order to prove this result.

First we observe that the two ends of $P$ cannot be in $M_1$:

\begin{claim}\label{claim:m1-matched}
$M_1 \subseteq V(F)$.
In particular, $u_1,u_{2\ell} \notin M_1$.
\end{claim}

\begin{proofclaim}
According to Property~\ref{pty:pendant}, all vertices in $M_1$ are matched by $F_1$.
Our procedure during the third phase ensures that $V(F_1) \subseteq V(F)$, and so, $M_1 \subseteq V(F)$. 
\end{proofclaim}

Then, we prove that for every $\{x_1,y_1\} \in F$ we have either $x_1,y_1 \notin V(P)$ or $\{x_1,y_1\} \in E(P)$.
This result follows from the combination of Claims~\ref{claim:no-half-f1} and~\ref{claim:respect-f1}.

\begin{claim}\label{claim:no-half-f1}
Let $\{x_1,y_1\} \in F$.
Either $x_1,y_1 \in V(P)$ or $x_1,y_1 \notin V(P)$.
\end{claim}

\begin{proofclaim}
Suppose for the sake of contradiction $x_1 \in V(P)$ but $y_1 \notin V(P)$.
Up to reverting the path $P$ we have $x_1 = u_{2i+1}$ for some $i$.
Then, since we have $y_1 \notin V(P)$ and $M_1$ induces a matching, $u_{2i+2} \notin M_1$.
It implies $u_{2i+2} \in M_2$.
Furthermore, our construction ensures that $u_{2i}$ (the vertex matched with $x_1$) was left exposed by $F_2$.
Indeed, $u_{2i}$ must be an end of a virtual edge (cf. Second phase).
Since $E(H_2) = F_2$ it implies $u_{2i-1} \notin M_2$. 
Finally, since $u_{2i-1} \in N_G(M_2)$ and $M_2$ is a module, $P'=(u_1,u_2,\ldots,u_{2i-1},u_{2i+2},\ldots,u_{2\ell})$ is a shorter augmenting path than $P$, thereby contradicting the minimality of $P$.
\end{proofclaim}

\begin{claim}\label{claim:respect-f1}
Let $u_i,u_j \in V(P) \cap M_1, \ j > i$, such that $\{u_i,u_j\} \in F_1$.
Then, $j = i+1$.
\end{claim}

\begin{proofclaim}
The result trivially holds if $\{u_i,u_j\} \in F$.
Thus, we assume from now on $\{u_i,u_j\} \notin F$.
We need to consider the following cases:
\begin{itemize}
\item Case $i$ odd, $j$ even.
Since $P'=(u_1,u_2,\ldots,u_{i-1},u_i,u_j,u_{j+1},\ldots,u_{2\ell})$ is also an augmenting path, we get $j=i+1$ by minimality of $P$.
\item Case $i$ odd, $j$ odd.
Note that $u_{j+1} \notin M_1$ since we assume $\{u_i,u_j\} \in F_1$ and $M_1$ induces a matching.
Then, since $u_{j+1} \in N_G(M_1)$ and $M_1$ is a module we have that $P'=(u_1,u_2,\ldots,u_{i-1},u_i,u_{j+1},\ldots,u_{2\ell})$ is a shorter augmenting path than $P$, thereby contradicting the minimality of $P$.
\item Case $i$ even, $j$ even.
Note that $u_{i-1} \notin M_1$ since we assume $\{u_i,u_j\} \in F_1$ and $M_1$ induces a matching.
Then, since $u_{i-1} \in N_G(M_1)$ and $M_1$ is a module we have that $P'=(u_1,u_2,\ldots,u_{i-1},u_j,u_{j+1},\ldots,u_{2\ell})$ is a shorter augmenting path than $P$, thereby contradicting the minimality of $P$.
\item Case $i$ even, $j$ odd.
As before, we have $u_{i-1},u_{j+1} \notin M_1$, that implies $u_{i-1},u_{j+1} \in M_2$.
We observe that $\{u_{i+1},u_{j-1}\}$ is a virtual edge labeled by $\{u_i,u_j\}$.
In particular, $u_{i+1},u_{j-1}$ are isolated in $M_2$, and so, $u_{i+2},u_{j-1} \notin M_2$.
It implies, since $u_{i+2},u_{j-1} \in N_G(M_2)$ and $M_2$ is a module, $P'= (u_1,u_2,\ldots,u_{i-1},u_{i+2},\ldots,u_{j-1},u_{j+1},\ldots,u_{2\ell})$ is  shorter augmenting path than $P$, thereby contradicting the minimality of $P$.
\end{itemize}
Overall the first case implies, as claimed, $j=i+1$, whereas all other cases lead to a contradiction.
Therefore, $j=i+1$.
\end{proofclaim}

Finally, our last preparatory claim is that $P$ can cross the module $M_1$ in at most one edge.

\begin{claim}\label{claim:small-cross}
$|E(P) \cap F_1| \leq 1$.
\end{claim}

\begin{proofclaim}
Suppose by contradiction there exist $\{u_i,u_{i+1}\}, \{u_j,u_{j+1}\} \in F_1 \cap E(P)$, for some $i < j$.
Since $M_1$ induces a matching, $u_{i-1},u_{j-1} \notin M_1$.
There are three cases.
\begin{itemize}
\item Case $i,j$ even.
Then, $P'=(u_1,\ldots,u_{i-1},u_j,u_{j+1},\ldots,u_{2\ell})$ is a shorter augmenting path than $P$, thereby contradicting the minimality of $P$.
\item Case $i,j$ odd.
Then, $P'=(u_1,\ldots,u_i,u_{j-1},u_j,\ldots,u_{2\ell})$ is a shorter augmenting path than $P$, thereby contradicting the minimality of $P$.
\item Case $i$ even, $j$ odd (Case $i$ odd, $j$ even is symmetrical to this one).
Then, \\$P'=(u_1,\ldots,u_{i-1},u_{j+1},\ldots,u_{2\ell})$ is a shorter augmenting path than $P$, thereby contradicting the minimality of $P$.
\end{itemize}
We note that in order to prove this result, we did not use the fact that $M_1$ is pendant.
\end{proofclaim}

Let $\{u_{i_0},u_{i_0+1}\}$ be the unique edge in $E(P) \cap F_1$.
Such edge must exist since otherwise, $P$ would also be an $F^*$-augmenting path.
In order to derive a contradiction, we are left to replace $\{u_{i_0},u_{i_0+1}\}$ with a virtual edge.
We prove next that it can be easily done if $i_0$ is odd, {\it i.e.}, $\{u_{i_0},u_{i_0+1}\} \notin F$.
Indeed, in such case we observe that $\{u_{i_0-1},u_{i_0+2}\}$ is the virtual edge that is labeled by $\{u_{i_0},u_{i_0+1}\}$.
Furthermore, $\{u_{i_0-1},u_{i_0+2}\} \in F^*$ since we confirmed the {\sc Split}.
Therefore, we will assume from now on that $i_0$ is even, {\it i.e.}, $\{u_{i_0},u_{i_0+1}\} \in F$.

We will need the following observation:

\begin{claim}\label{claim:cross-m2}
The vertices $u_{i_0-1},u_{i_0+2}$ are the only vertices in $M_2 \cap V(P)$.
\end{claim}

\begin{proofclaim}
Suppose for the sake of contradiction this is not the case.
By symmetry, we can assume the existence of an index $j < i_0 - 1$ such that $u_j \in M_2$.
Furthermore, $j$ is even since otherwise, $P'=(u_1,\ldots,u_j,u_{i_0},u_{i_0+1},\ldots,u_{2\ell})$ would be a shorter augmenting path than $P$, thereby contradicting the minimality of $P$.
For the same reason as above, we also have $u_{j+1} \notin M_2$.
However, since $u_{j+1} \in N_G(M_2)$ and $M_2$ is a module, it implies that $P'=(u_1,\ldots,u_j,u_{j+1},u_{i_0+2},\ldots,u_{2\ell})$ would be a shorter augmenting path than $P$, thereby contradicting the minimality of $P$.
\end{proofclaim}

There are three cases.
\begin{enumerate}
\item Case $u_{i_0-1},u_{i_0+2} \notin V(F_2^*)$ (left unmatched by $F_2^*$).
There exists a virtual edge $\{x_2,y_2\}$ that is labeled by $\{u_{i_0},u_{i_0+1}\}$ (otherwise, the second phase could have continued with $u_{i_0-1},u_{i_0+2}$ and $\{u_{i_0},u_{i_0+1}\}$).
The two of $x_2,y_2$ cannot be matched together in $F^*$ since we have $\{u_{i_0},u_{i_0+1}\} \in F$. 
Nevertheless, since $x_2,y_2$ are adjacent in the subdivision $G^*({\cal P}^*)$, at least one of the two vertices, say $x_2$, is matched by $F^*$.
There are two subcases.
\begin{enumerate}
\item Subcase $y_2$ is exposed.
Let $\{w,x_2\} \in F^*$.
Since $w \neq x_2$ we have $w \notin M_2$.
Then, $P^*=(u_1,u_2,\ldots,u_{i_0-1},w,x_2,y_2)$ is an $F^*$-augmenting path, thereby contradicting the maximality of $F^*$.
\item Subcase $y_2$ is matched.
Let $\{w,x_2\},\{w',y_2\} \in F^*$.
As before, $w,w' \notin M_2$.
Then, $P^*=(u_1,u_2,\ldots,u_{i_0-1},w,x_2,y_2,w',u_{i_0+2},\ldots,u_{2\ell})$ is an $F^*$-augmenting path, thereby contradicting the maximality of $F^*$.
\end{enumerate}
\item Case $\{u_{i_0-1},u_{i_0+2}\} \in F_2^*$.
We have $\{u_{i_0-1},u_{i_0+2}\} \notin F^*$.
Hence, we have that $P^*= (u_1,u_2,\ldots,u_{i_0-1},u_{i_0+2},\ldots,u_{2\ell})$ is an $F^*$-augmenting path, thereby contradicting the maximality of $F^*$.
\item Case $\{u_{i_0-1},w\} \in F_2^*$ for some $w \neq u_{i_0+2}$.
There are two subcases.
\begin{enumerate}
\item Subcase $w$ is exposed.
Then, $P^*=(u_1,u_2,\ldots,u_{i_0-1},w)$ is an $F^*$-augmenting path, thereby contradicting the maximality of $F^*$.
\item Subcase $w$ is matched.
Let $\{w,w'\} \in F^*$.
As before, $w' \notin M_2$.
Then, \\$P^*=(u_1,u_2,\ldots,u_{i_0-1},w,w',u_{i_0+2},\ldots,u_{2\ell})$ is an $F^*$-augmenting path, thereby contradicting the maximality of $F^*$.
\end{enumerate}
\end{enumerate}
Summarizing, by the contrapositive we get $F^*$ maximum for $G^*({\cal P}^*)$ $\Longrightarrow$ $F$ maximum for $G$.
\end{proof}

\paragraph{Complexity.}
The complexity of this reduction rule is essentially dominated by {\sc Match} and {\sc Split} operations.
Therefore, the total running time is an ${\cal O}(\delta_1)$, that is in ${\cal O}(\Delta m(G') - \Delta m(G^*))$.

\subsection{Main result}\label{sec:main}

Our framework consists in applying any reduction rule presented in this section until it can no more be done.
Then, we rely on the following result:

\begin{theorem}[~\cite{CDP18}]\label{thm:main-soda}
For every $\langle G', {\cal P}, {\cal F} \rangle$, we can solve {\sc Module Matching} in ${\cal O}(\Delta{\cal \mu} \cdot p^4)$-time.
\end{theorem}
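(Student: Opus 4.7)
The plan is to start from the matching $F^{(0)} = \bigcup_{i=1}^p F_i$, which is feasible in $G = G'({\cal P})$ because each $F_i \subseteq E(H_i) \subseteq E(G)$ and the $F_i$'s have disjoint vertex sets. By definition $|F^{(0)}| = \sum_i \mu(H_i) = \mu(G) - \Delta\mu$, so by Berge's lemma (Lemma~\ref{lem:berge}) we reach a maximum matching of $G$ after finding and augmenting along exactly $\Delta\mu$ augmenting paths. Hence it suffices to describe a single augmenting-path search (or certificate of non-existence) running in $O(p^4)$-time on the current matching $F$.

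To locate an augmenting path efficiently, I would avoid running a blossom algorithm directly on $G$, whose size can be much larger than $p$, and instead work on an auxiliary graph of size polynomial in $p$ that compresses each module into a constant number of representatives. The key structural observation is that inside any module $M_i$ all vertices share the same external neighbourhood, so any two exposed vertices of $M_i$ are interchangeable as endpoints of an augmenting path and, similarly, any edge of $F_i$ may be swapped for another edge of $F_i$ without affecting the feasibility of an augmenting path through $M_i$. Consequently, the only data about $M_i$ that really matters for the augmenting-path search is: whether $M_i$ currently has an exposed vertex, and whether $M_i$ contains a matching edge of $F_i$ that could be ``split'' to let an augmenting path enter and leave $M_i$ at two matched vertices (cf.\ the {\sc Match} and {\sc Split} technique formalised around Lemma~\ref{lem:join}).

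The plan is then to build a small graph $G^*$ with $O(p)$ vertices per module (an ``exposed'' token and tokens representing the pairs of $F_i$) and $O(p^2)$ edges reflecting $E(G')$, together with an induced matching $F^*$ recording the current state. I would prove two directions: (i) any augmenting path in $G$ can be projected to an augmenting path in $G^*$ by replacing consecutive visits to the same module by the corresponding token transitions; and (ii) any augmenting path in $G^*$ can be lifted to an augmenting path in $G$ by instantiating the tokens with actual vertices of $M_i$ and, where necessary, performing a local {\sc Split} on $F_i$ so that the lifted path remains alternating. Running Edmonds' blossom algorithm on $(G^*,F^*)$ finds one augmenting path (or proves maximality) in $O(|V(G^*)| \cdot |E(G^*)|) = O(p^4)$-time.

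The hard part will be step (ii): showing that an augmenting path in the auxiliary graph always lifts back, in particular when it uses two distinct tokens of the same module $M_i$, which encodes the case where the true augmenting path in $G$ breaks a matched edge of $F_i$ and uses both endpoints separately. This requires a careful case analysis on the parity of the entry/exit positions along the path (much like the analysis carried out in Claims~\ref{claim:small-cross} and~\ref{claim:respect-f1} of Section~\ref{sec:pendant}) to confirm that the resulting lifted walk is an alternating path, not a longer alternating walk. Granting this, each of the $\Delta\mu$ augmentations costs $O(p^4)$, for a total running time of $O(\Delta\mu \cdot p^4)$, as claimed.
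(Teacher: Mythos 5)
This theorem is not proved in the paper at all: it is imported verbatim from~\cite{CDP18}, so there is no in-paper argument to compare yours against. Your reconstruction is at least consistent with the one-line description the authors give of that algorithm in the introduction (an augmenting-path search performed on a supergraph of the quotient graph, updated after each augmentation), and the outer loop of your plan --- start from $F^{(0)}=\bigcup_i F_i$, note $|F^{(0)}|=\mu(G)-\Delta\mu$, and pay for exactly $\Delta\mu$ augmentations --- is sound and is surely how the $\Delta\mu$ factor arises.

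However, as a proof the proposal has a genuine gap at its center. You propose an auxiliary graph with ``$O(p)$ vertices per module (an `exposed' token and tokens representing the pairs of $F_i$)'', but the number of pairs in $F_i$ is $|F_i|=\Theta(n_i)$, which is not bounded by any function of $p$; so the auxiliary graph as described is not of size polynomial in $p$ and the whole complexity claim collapses. What is actually needed --- and what is the real content of the result in~\cite{CDP18} --- is a structural lemma asserting that some shortest augmenting path meets each module $M_i$ in a \emph{bounded} number of vertices (using the interchangeability of vertices of $M_i$ under the canonical ordering, cf.\ Lemma~5.1 of~\cite{CDP18} as invoked in Section~\ref{sec:pendant}), so that a constant number of representatives per module suffices. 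You gesture at the interchangeability but never state or prove the bounded-intersection lemma, and without it neither direction (i) nor (ii) of your projection/lifting argument can even be set up on a small graph. Relatedly, your size accounting is inconsistent ($O(p)$ tokens per module gives $O(p^2)$ vertices and, between adjacent modules, up to $O(p^4)$ edges, so $O(|V^*|\cdot|E^*|)$ would exceed $p^4$), and the lifting step (ii), which you yourself flag as the hard part, is precisely where the parity case analysis and the treatment of blossoms must be carried out; deferring it leaves the proof incomplete. Finally, you should state and maintain the invariant that after each augmentation the matching restricted to each $H_i$ is still maximum (via Lemma~\ref{lem:mw-matching-reduction} one may assume $E(H_i)=F_i$, which makes this manageable), since otherwise the auxiliary graph cannot simply be ``updated'' between rounds.
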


We are now ready to state our main result in this paper.

\begin{theorem}\label{thm:main}
Let $G=(V,E)$ be a graph.
Suppose that, for every prime subgraph $H'$ in the modular decomposition of $G$, its pruned subgraph has order at most $k$.
Then, we can solve {\sc Maximum Matching} for $G$ in ${\cal O}(k^4\cdot n + m\log n)$-time.
\end{theorem}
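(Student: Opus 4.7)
The plan is to stitch together three pieces: (i)~the pruned modular decomposition, which Proposition~\ref{prop:pruned-compute} lets us precompute in ${\cal O}(n + m\log n)$-time over all quotient subgraphs of the modular decomposition of $G$; (ii)~the reduction rules of Section~\ref{sec:core} for pendant, anti-pendant, universal, isolated, and twin modules; and (iii)~the generic algorithm of Theorem~\ref{thm:main-soda} for the residual bounded-size quotient. The driver is Lemma~\ref{lem:incr}, which reduces everything to bounding the running time of a single {\sc Module Matching} instance $\langle G',{\cal P},{\cal F}\rangle$ in terms of $p$, $\Delta m$, and $\Delta\mu$.

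Concretely, I would design a {\sc Module Matching} solver that processes $G'$ as follows. While $G'$ has a one-vertex extension $v_i$, apply the corresponding reduction rule from Section~\ref{sec:core} (Rules~\ref{rule:isolated}--\ref{rule:true-twin} for the five easy cases, and the more delicate procedures of Sections~\ref{sec:anti-pendant} and~\ref{sec:pendant} for pendant and anti-pendant modules). By Corollary~\ref{cor:unique-pruned} the order of removals is irrelevant and this greedy process terminates on the unique pruned subgraph of $G'$, which by hypothesis has at most $k$ vertices. Each individual rule costs ${\cal O}(\Delta m(G') - \Delta m(G^\ast))$-time by the complexity analyses at the end of each subsection, so the aggregate cost of \emph{all} reduction rules telescopes to ${\cal O}(\Delta m)$. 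When no more rules apply, we are left with an instance whose quotient graph has order at most $k$; Theorem~\ref{thm:main-soda} solves it in ${\cal O}(k^4 \cdot \Delta\mu')$-time where $\Delta\mu' \leq \Delta\mu$. The whole solver therefore runs in
\[
T(p,\Delta m,\Delta\mu) \;=\; {\cal O}(\Delta m + k^4\cdot \Delta\mu),
\]
which is a subadditive (in fact linear) polynomial, as required by Lemma~\ref{lem:incr}.

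Feeding $T$ into Lemma~\ref{lem:incr}, a maximum matching of $G$ can be computed in ${\cal O}(T({\cal O}(n),m,n)) = {\cal O}(m + k^4 n)$-time, \emph{provided} we have already computed the modular decomposition together with a maximum matching of every quotient subgraph. The modular decomposition itself is available in linear time~\cite{TCHP08}, and the pruned quotient subgraphs in ${\cal O}(n + m\log n)$-time by Proposition~\ref{prop:pruned-compute}. Adding the preprocessing cost yields the announced ${\cal O}(k^4\cdot n + m\log n)$-time bound.

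The main conceptual obstacle is ensuring that the reduction rules really can be composed into a recursive framework. Two points need attention. First, after each rule the modules $M_i$ may have been merged or decorated with virtual edges, so the reduced instance is not a literal subgraph of $G'$; this is precisely why Section~\ref{sec:core} formulates the rules within the \textsc{Module Matching} framework (which stores the current maximum matchings ${\cal F}^\ast$) rather than for plain {\sc Maximum Matching}. Second, the data structures of Section~\ref{sec:prelim} (canonical orderings, constant-time access to the mate of a matched vertex) must be maintainable under the rules at a cost that fits within the ${\cal O}(\Delta m(G') - \Delta m(G^\ast))$ bound claimed in the per-rule complexity analyses; the complexity paragraphs at the ends of Sections~\ref{sec:simple}, \ref{sec:anti-pendant} and~\ref{sec:pendant} are written precisely to justify this. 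With both points in hand, the telescoping argument and Lemma~\ref{lem:incr} deliver the theorem.
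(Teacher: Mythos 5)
Your proposal is correct and follows essentially the same route as the paper: Lemma~\ref{lem:incr} as the driver, Proposition~\ref{prop:pruned-compute} to obtain the pruning sequence, the telescoping ${\cal O}(\Delta m(G')-\Delta m(G^*))$ cost per reduction rule summing to ${\cal O}(\Delta m)$, and Theorem~\ref{thm:main-soda} on the residual quotient of order at most $k$ (the paper additionally notes that degenerate quotient subgraphs prune to a singleton, a trivial point you leave implicit). The only cosmetic difference is that you account for the $m\log n$ term as a global preprocessing step rather than folding it into $T(p,\Delta m,\Delta\mu)$ per instance as the paper does; both bookkeeping choices yield the same bound.
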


\begin{proof}
By Lemma~\ref{lem:incr}, it suffices to solve {\sc Module Matching} for any $\langle H', {\cal P}, {\cal F} \rangle$, with $H'$ in the modular decomposition of $G$, in time ${\cal O}(p + \Delta m \cdot \log{p} + k^4\cdot \Delta{\cal \mu})$.
For that, we start computing the pruned subgraph $H^{pr}$ of $H$, and a corresponding pruning sequence.
By Proposition~\ref{prop:pruned-compute}, it can be done in ${\cal O}(p + \Delta m \cdot \log{p})$-time.
Then, we follow the pruning sequence and at each step, we apply the reduction rule that corresponds to the current one-vertex extension.
Doing so, we pass by various intermediate instances $\langle H^j, {\cal P}^j, {\cal F}^j \rangle$.
For any rule we apply, the pre-processing time for passing from $\langle H^j, {\cal P}^j, {\cal F}^j \rangle$ to the next instance $\langle H^{j+1}, {\cal P}^{j+1}, {\cal F}^{j+1} \rangle$ is an ${\cal O}(\Delta m(H^j) - \Delta m(H^{j+1}))$.
Similarly, the post-processing time for computing a solution for $\langle H^j, {\cal P}^j, {\cal F}^j \rangle$ from a solution for $\langle H^{j+1}, {\cal P}^{j+1}, {\cal F}^{j+1} \rangle$ is an ${\cal O}(\Delta m(H^j) - \Delta m(H^{j+1}))$.
Therefore, the total running time for applying all the reduction rules is an ${\cal O}(\Delta m)$.
Finally, we are left with solving {\sc Module Matching} on a reduced instance $\langle H^{pr}, {\cal P}^{pr}, {\cal F}^{pr} \rangle$.
We stress that if $H'$ is degenerate (complete or edgeless) then $H^{pr}$ is trivial, otherwise by the hypothesis $H^{pr}$ has order at most $k$.
As a result, by Theorem~\ref{thm:main-soda} we can solve {\sc Module Matching} on the reduced instance in ${\cal O}(\Delta{\cal \mu} \cdot k^4)$-time.
\end{proof}

\section{Applications}\label{sec:applications}

We conclude this paper presenting applications of our main result to some graph classes (Theorem~\ref{thm:main}).
Some interesting refinements of our framework are also presented in Section~\ref{sec:unicycle}.

\subsection{Graphs totally decomposable by the pruned modular decomposition}

Cographs are exactly the graphs that are totally decomposable by modular decomposition~\cite{CPS85}.
We show that three distinct generalizations of cographs in the literature are totally decomposable by the pruned modular decomposition.

\paragraph{Distance-hereditary graphs.}
A graph $G=(V,E)$ is distance-hereditary if it can be reduced to a singleton by pruning sequentially the pendant vertices and twin vertices~\cite{BaM86}.
Conversely, $G$ is co-distance hereditary if it is the complement of a distance-hereditary graph, {\it i.e.}, it can be reduced to a singleton by pruning sequentially the anti-pendant vertices and twin vertices.
In both cases, the corresponding pruning sequence can be computed in linear-time~\cite{DHP01,DGE08}.
Therefore, we can derive the following result from our framework:

\begin{proposition}\label{prop:dh}
We can solve {\sc Maximum Matching} in linear-time on graphs that can be modularly decomposed into distance-hereditary graphs and co-distance hereditary graphs.

In particular, we can solve {\sc Maximum Matching} in linear-time on distance-hereditary graphs and co-distance hereditary graphs.
\end{proposition}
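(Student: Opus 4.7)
The plan is to invoke Theorem~\ref{thm:main} with $k=1$, and then refine the complexity analysis in order to remove the spurious $\log n$ factor coming from Proposition~\ref{prop:pruned-compute}. For the first part, consider any prime subgraph $H'$ in the modular decomposition of $G$. By hypothesis $H'$ is either distance-hereditary or co-distance-hereditary, so by definition $H'$ admits a sequence of eliminations of pendant and twin vertices (respectively, of anti-pendant and twin vertices) that reduces it to a singleton. Each such elimination is a one-vertex extension in the sense of Definition~\ref{def:one-vertex-ext}, so the pruned subgraph of $H'$ is a singleton. Theorem~\ref{thm:main} then immediately yields an ${\cal O}(n + m \log n)$-time algorithm.

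To shave the $\log n$ factor I would replace the generic call to Proposition~\ref{prop:pruned-compute} in the proof of Theorem~\ref{thm:main} by a specialized routine tailored to the two classes at hand. Namely, for every prime quotient subgraph $H'$, I would first decide in linear time whether $H'$ is distance-hereditary or co-distance-hereditary and, in either case, output a full pruning sequence reducing $H'$ to a singleton, using the algorithms of~\cite{DHP01,DGE08}. I would then feed this sequence to the reduction rules of Section~\ref{sec:core} exactly as in the proof of Theorem~\ref{thm:main}: each reduction from an intermediate instance to the next one takes time ${\cal O}(\Delta m(H^j) - \Delta m(H^{j+1}))$, the sum telescopes into ${\cal O}(\Delta m(H'))$, and the singleton reduced instance is solved in constant time by Theorem~\ref{thm:main-soda}. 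Summing over all quotients via Lemma~\ref{lem:incr} gives the required ${\cal O}(n+m)$ running time.

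The main subtlety I foresee is that a vertex tagged, say, as ``pendant'' by the recognition algorithm of~\cite{DHP01} may no longer be pendant in the current reduced subgraph, because earlier eliminations can have turned it into an isolated, true-twin or universal vertex. Here Lemma~\ref{lem:pruned-unique} saves the day: removing a one-vertex extension only reinforces the one-vertex-extension status of the remaining vertices (possibly changing its type). Thus each element of the sequence can be matched in constant time to one of the five reduction rules of Section~\ref{sec:core}, which is all the flexibility we need. The ``in particular'' statement follows by observing that distance-hereditary graphs are closed under induced subgraphs, and that every quotient subgraph is isomorphic to the induced subgraph on one representative per module; hence when $G$ itself is distance-hereditary (resp.\ co-distance-hereditary), every prime quotient in its modular decomposition inherits the same property, and the hypothesis of Theorem~\ref{thm:main} is automatically met.
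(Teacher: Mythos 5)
Your proposal is correct and follows essentially the same route as the paper: the paper likewise observes that the pendant/twin (resp.\ anti-pendant/twin) elimination scheme defining (co-)distance-hereditary graphs is a pruning sequence by one-vertex extensions, and removes the $\log n$ factor of Proposition~\ref{prop:pruned-compute} by computing that sequence in linear time via~\cite{DHP01,DGE08} before feeding it to the reduction rules. Your added remarks on type changes along the sequence (handled by Lemma~\ref{lem:pruned-unique}) and on heredity of the quotient subgraphs match the paper's own discussion following the proposition.
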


We stress that even for distance-hereditary graphs, we may need to use the reduction rule of Section~\ref{sec:pendant} for pendant modules.
Indeed, as we follow the pruning sequence, we may encounter twin vertices and merge them into a single module.
Hence, even in the simpler case of distance-hereditary graphs, we need to handle with modules instead of just handling with vertices.
In the same way, even for co-distance hereditary graphs, we may need to use the reduction rule of Section~\ref{sec:anti-pendant} for anti-pendant modules.

\medskip
Trees are a special subclass of distance-hereditary graphs.
We say that a graph has {\em modular treewidth} at most $k$ if every prime quotient subgraph in its modular decomposition has treewidth at most $k$\footnote{Our definition is more restricted than the one in~\cite{PSS16} since they only impose the quotient subgraph $G'$ to have bounded treewidth.}.
In particular, graphs with modular treewidth at most one are exactly the graphs that can be modularly decomposed into trees.
We stress the following consequence of Proposition~\ref{prop:dh}:

\begin{corollary}
We can solve {\sc Maximum Matching} in linear-time on graphs with modular-treewidth at most one.
\end{corollary}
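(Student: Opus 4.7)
My plan is to derive this corollary as a direct consequence of Proposition~\ref{prop:dh}. The core of the argument is to show that every graph of modular treewidth at most one can be modularly decomposed into distance-hereditary graphs, so that Proposition~\ref{prop:dh} immediately applies.

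First I would unfold the definition: a graph $G$ has modular treewidth at most one precisely when every prime quotient subgraph $H'$ in its modular decomposition has treewidth at most one. By Gallai's theorem (Theorem~\ref{thm:modular-dec}), the subgraphs appearing in the modular decomposition are either degenerate (complete or edgeless) or prime. Degenerate subgraphs are trivially distance-hereditary (being either cliques or independent sets, both clearly built by twin additions from a singleton). So I only need to worry about the prime quotient subgraphs.

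Next, I would argue that any prime quotient subgraph $H'$ with at least two vertices and treewidth at most one is a tree. A prime graph on $\geq 2$ vertices is connected, since otherwise any connected component would be a nontrivial module, contradicting primality. A connected graph of treewidth at most one is a tree. Finally, trees are distance-hereditary: in a tree, between any two vertices there is a unique path, so the distance-hereditary property (distances preserved in connected induced subgraphs) holds vacuously as every induced connected subgraph of a tree is itself a tree. Hence every prime quotient subgraph in the modular decomposition of $G$ is distance-hereditary.

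Putting this together, $G$ can be modularly decomposed into distance-hereditary graphs, so Proposition~\ref{prop:dh} yields a linear-time algorithm for \textsc{Maximum Matching} on $G$. I do not anticipate any technical obstacle: the only (very minor) subtlety is the reminder that although the quotient subgraphs here are actual trees, the reduction rules still have to be applied at the level of \emph{modules} (since the pruning sequence of a tree includes pendant and twin removals that, in the full graph, merge nontrivial modules), but this is exactly what Proposition~\ref{prop:dh} already accounts for.
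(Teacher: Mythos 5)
Your proposal is correct and follows exactly the paper's route: the paper derives this corollary from Proposition~\ref{prop:dh} by observing that trees are a special subclass of distance-hereditary graphs, so that graphs of modular treewidth at most one (equivalently, graphs modularly decomposable into trees) are covered by that proposition. The extra details you supply (primality forces connectivity, so treewidth at most one forces the prime quotients to be trees) are exactly what the paper leaves implicit.
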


The case of graphs with modular treewidth $k \geq 2$ is left as an intriguing open question.

\paragraph{Tree-perfect graphs.}

Two graphs $G_1,G_2$ are $P_4$-isomorphic if there exists a bijection from $G_1$ to $G_2$ such that, for every induced $P_4$ in $G_1$, its image in $G_2$ also induces a $P_4$~\cite{Chv84}.
The notion of $P_4$-isomorphism plays an important role in the study of perfect graphs.
A graph is {\em tree-perfect} if it is $P_4$-isomorphic to a tree~\cite{BrL99}.
We prove the following result:

\begin{proposition}\label{prop:tree-perfect}
Tree-perfect graphs are totally decomposable by the pruned modular decomposition.

In particular, we can solve {\sc Maximum Matching} in linear-time on tree-perfect graphs.
\end{proposition}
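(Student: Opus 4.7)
The plan is to use the classical correspondence between $P_4$-structure and modular decomposition. Assume $G$ is $P_4$-isomorphic to some tree $T$ via a bijection $\phi$. A well-known structural result (see, e.g.,~\cite{Chv84}) says that the $P_4$-structure of a graph determines the shape of its modular decomposition tree, the type (degenerate or prime) of each internal node, and each prime quotient subgraph up to complementation. Applied to $G$ and $T$, it yields that their modular decomposition trees have the same shape and that every prime quotient subgraph of $G$ is either isomorphic or complementary to the corresponding prime quotient subgraph of $T$.

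Next, I would analyze the modular decomposition of the tree $T$ itself and show that every prime quotient subgraph in this decomposition is again a tree. The key claim is that every module $M$ of size at least $2$ in a connected tree consists of twin leaves sharing a common parent: if $u,u' \in M$ were adjacent, any external neighbor $v$ of $M$ would be adjacent to both and create a triangle, so $M$ is an independent set; if $M$ had two external neighbors $v,v'$, each would be adjacent to all of $M$, creating a $C_4$; hence $M$ has a unique external neighbor $v$ to which every vertex of $M$ is attached, making $M$ a set of twin leaves. Contracting such a module preserves the tree structure, and iterating shows that every prime quotient subgraph in the modular decomposition of $T$ is itself a tree. Combined with the first step, every prime quotient in the modular decomposition of $G$ is either a tree or the complement of a tree.

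Finally, both trees and complements of trees are totally decomposable by the pruned modular decomposition. Trees are distance-hereditary and thus totally prunable by sequential pendant and twin removal (Proposition~\ref{prop:dh}). Complementation swaps pendant with anti-pendant vertices and swaps false twins with true twins, all of which are one-vertex extensions in Definition~\ref{def:one-vertex-ext}, so complements of trees are also totally prunable. It follows that every prime quotient in the modular decomposition of $G$ has a trivial pruned subgraph, so $G$ is totally decomposable by the pruned modular decomposition, and applying Theorem~\ref{thm:main} with $k=1$ yields the linear-time algorithm for \textsc{Maximum Matching}. The main obstacle in turning this sketch into a full proof is making the first step rigorous: one must verify carefully that $P_4$-isomorphism to a tree determines the relevant prime quotients up to complementation, and not merely up to some more permissive equivalence.
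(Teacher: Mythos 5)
Your argument hinges on the claimed ``well-known structural result'' that $P_4$-isomorphism determines the shape of the modular decomposition tree and each prime quotient up to complementation. That claim is false, and it is exactly where the proof breaks. A first warning sign: all cographs on $n$ vertices have the same (empty) $P_4$-structure yet wildly different modular decomposition trees, so the $P_4$-structure certainly does not determine the tree shape. More to the point, there exist \emph{prime} tree-perfect graphs that are neither trees nor complements of trees. Take the elementary graph on $6$ vertices as defined in Section~\ref{sec:applications}: its edge set is $\{v_1v_2,v_1v_3,v_1v_5,v_2v_5,v_3v_4,v_3v_5,v_5v_6\}$, it is prime (no two vertices are twins and one checks directly that no nontrivial module exists), its only induced $P_4$'s are on the vertex sets $\{v_1,v_2,v_3,v_4\}$, $\{v_2,v_3,v_4,v_5\}$, $\{v_3,v_4,v_5,v_6\}$, so it is $P_4$-isomorphic to the path $P_6$ via $v_i \mapsto u_i$ --- yet it has $7$ edges and its complement has $8$, so it is neither a tree nor a co-tree. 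Under your claimed lemma this graph's (unique, prime) quotient would have to be isomorphic or co-isomorphic to a quotient of a tree, i.e.\ a tree or co-tree by your (correct) second step; it is neither. So the first step cannot be repaired, and your reduction to ``trees and co-trees'' misses most of the prime quotients that actually occur.

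The paper avoids this entirely by invoking the full structural characterization of Brandst\"adt and Le (Theorem~\ref{thm:tree-perfect}): every nontrivial module of a tree-perfect graph induces a cograph, and the prime quotient belongs to an explicit list --- trees, elementary graphs, the classes ${\cal C}_1\cup{\cal C}_2\cup{\cal C}_3$, the two sporadic graphs $H_1,H_2$, and their complements. The proof then verifies total prunability class by class: elementary graphs come with a built-in pendant/anti-pendant pruning sequence, the ${\cal C}_j$ graphs reduce along that same sequence down to a $5$-vertex remainder (a path or a house) which is then prunable, and the sporadic cases are checked directly. Your second and third steps (modules of a tree are sets of twin leaves, hence tree quotients are trees; trees and co-trees are totally prunable) are correct but only cover the first item of that list. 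To fix the proof you would need to import the Brandst\"adt--Le characterization, or some equally strong structural statement, in place of your first step.
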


Our proof is based on a deep structural characterization of tree-perfect graphs~\cite{BrL99}.
Before stating this characterization properly, we need to introduce a few additional graph classes.

\begin{figure}[h!]
\centering
\includegraphics[width=.55\textwidth]{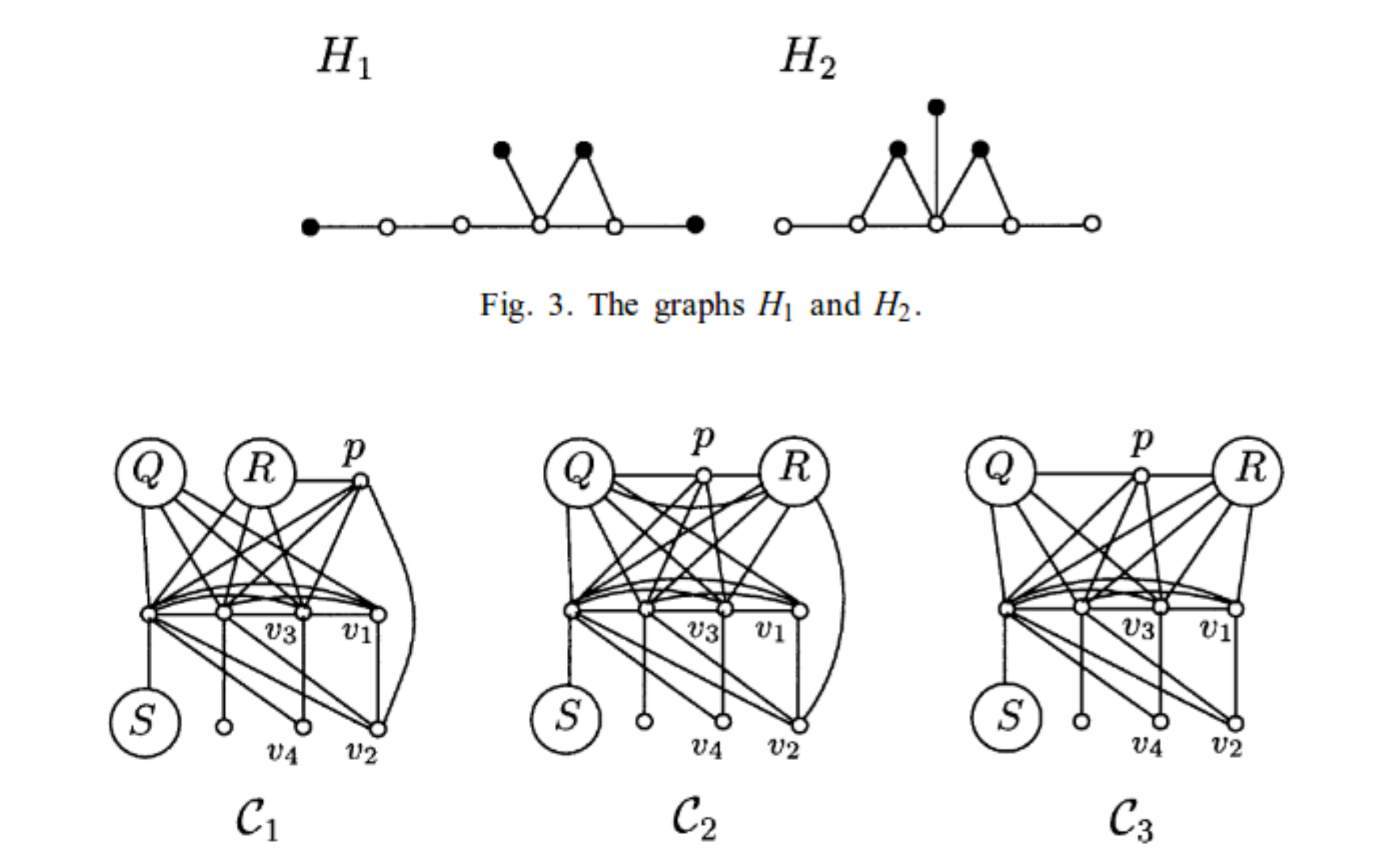}
\caption{Examples of tree-perfect graphs~\cite{BrL99}. The sets $Q,R,S$ represent modules substituting the vertices $q,r,v_n$.}
\label{fig:tree-perfect}
\end{figure}

Given a vertex-ordering $(v_1,v_2,\ldots,v_n)$ let $N_{<i}(v_i) = N(v_i) \cap \{v_1,v_2,\ldots,v_{i-1}\}$.
A graph is termed {\em elementary} if it admits a vertex-ordering $(v_1,v_2,\ldots,v_n)$ such that, for every $i$:
$$N_{<i}(v_i) = \begin{cases} \{v_1,v_2,\ldots,v_{i-2}\} \ \mbox{if} \ i \ \mbox{is odd} \\ \{v_{i-1}\} \ \mbox{otherwise.}\end{cases}$$
Note that such ordering as above is a pruning sequence by pendant and anti-pendant vertices.

The classes ${\cal C}_j, \ j = 1,2,3$ contain all the graphs that can be obtained from an elementary graph, with ordering $(v_1,v_2,\ldots,v_n)$, by adding the three new vertices $p,q,r$ and the following set of edges:
\begin{itemize}
\item (for all classes) $\{p,v_i\}, \{q,v_i\}, \{r,v_i\}$ for every $i > 1$ odd;
\item (only for ${\cal C}_1$) $\{v_1,q\}, \ \{p,r\} \ \mbox{and} \ \{v_2,p\}$; 
\item (only for ${\cal C}_2$) $\{p,q\}, \ \{p,r\}, \{q,r\}, \ \{v_1,q\} \ \mbox{and} \ \{v_2,r\}$;
\item (only for ${\cal C}_3$) $\{p,q\}, \ \{p,r\} \ \mbox{and} \ \{v_1,r\}$.
\end{itemize}
The graphs $H_1,H_2$ are illustrated in Fig.~\ref{fig:tree-perfect}

\medskip
Tree-perfect graphs are fully characterized in~\cite{BrL99}, and a linear-time recognition algorithm can be derived from this characterization.
We will only use a weaker form of this result:

\begin{theorem}[~\cite{BrL99}]\label{thm:tree-perfect}
A graph $G=(V,E)$ is a tree-perfect graph {\em only if} every nontrivial module induces a cograph and the quotient graph $G'$ is in one of the following classes or their complements: trees; elementary graphs; ${\cal C}_1 \cup {\cal C}_2 \cup {\cal C}_3$; $H_1$ or $H_2$.  
\end{theorem}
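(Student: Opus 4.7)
The plan is to combine Theorem~\ref{thm:tree-perfect} with a case analysis on the structure of the quotient graphs, and then to derive the {\sc Maximum Matching} consequence from Theorem~\ref{thm:main}.

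First I would argue that if $G$ is tree-perfect, then every prime quotient subgraph in its modular decomposition is prunable. By Theorem~\ref{thm:tree-perfect}, every nontrivial module of $G$ induces a cograph; since cographs admit a purely series/parallel modular decomposition~\cite{CPS85}, all the quotient subgraphs arising inside such modules are either complete or edgeless, and so they are trivially reducible to a singleton (every vertex being universal, respectively isolated). Hence the only quotient that requires real attention is the top-level one $G'$. Moreover, by the self-duality of one-vertex extensions (universal$\leftrightarrow$isolated, pendant$\leftrightarrow$anti-pendant, true twin$\leftrightarrow$false twin), prunability is preserved under complementation, so it is enough to handle the five classes listed in Theorem~\ref{thm:tree-perfect}.

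Next I would treat each class in turn. Trees are prunable since any leaf is pendant and its removal preserves tree-ness. For elementary graphs, I would show that the defining ordering $(v_1,\ldots,v_n)$ is itself a valid pruning sequence in reverse: by the adjacency rule, $v_n$ is pendant (its unique left-neighbour being $v_{n-1}$) when $n$ is even, and anti-pendant (missing only $v_{n-1}$ among $v_1,\ldots,v_{n-1}$) when $n$ is odd, and removing $v_n$ yields an elementary graph on $n-1$ vertices. For the classes $\mathcal{C}_1,\mathcal{C}_2,\mathcal{C}_3$, the additional edges incident to $p,q,r$ attach only to vertices $v_i$ with $i>1$ odd and do not touch $v_n$ whenever $n$ is large enough with the right parity; hence $v_n$ remains pendant or anti-pendant as in the elementary case, and its removal yields an instance of the same class $\mathcal{C}_j$ on a shorter elementary graph. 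By induction on $n$, the problem reduces to small base cases (graphs of order at most $5$), which I would verify by direct inspection. Finally $H_1$ and $H_2$ are specific finite graphs and their prunability can be certified by a finite case check.

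The {\sc Maximum Matching} consequence then follows from Theorem~\ref{thm:main} with $k=1$, giving a running time of $\mathcal{O}(n+m\log n)$; and by exploiting a linear-time recognition/pruning procedure for tree-perfect graphs in the spirit of~\cite{BrL99}, the logarithmic factor can be dropped. The main technical obstacle is the case analysis for $\mathcal{C}_1,\mathcal{C}_2,\mathcal{C}_3$: the inductive step is clean, but the small base cases need care because the three auxiliary vertices $p,q,r$ interact in different ways with $v_1,v_2$ in each of the three classes, and the first one-vertex extension that appears may be among $\{p,q,r\}$ rather than among the $v_i$ (for instance, in $\mathcal{C}_2$ with $n=2$ the first reduction is by an anti-pendant in $\{p,q,r\}$).
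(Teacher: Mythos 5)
There is a fundamental mismatch between what you were asked to prove and what you actually argue. The statement is Theorem~\ref{thm:tree-perfect} itself: the structural characterization of tree-perfect graphs (every nontrivial module induces a cograph, and the quotient graph lies in one of the five listed classes or their complements). Your proposal opens with ``combine Theorem~\ref{thm:tree-perfect} with a case analysis,'' i.e., you invoke the very statement you are supposed to establish, and then go on to derive consequences from it. That makes the argument circular as a proof of the theorem. What you have actually sketched is a proof of Proposition~\ref{prop:tree-perfect} (total decomposability of tree-perfect graphs by the pruned modular decomposition, and the resulting linear-time {\sc Maximum Matching} algorithm), and indeed your sketch closely mirrors the paper's own proof of that proposition: the reduction to the quotient graph via the cograph modules, the observation that the elementary ordering is a pruning sequence by pendant and anti-pendant vertices, the inductive peeling of $v_n,\ldots,v_3$ in ${\cal C}_1\cup{\cal C}_2\cup{\cal C}_3$, and the finite check on the residual graph on $\{p,q,r,v_1,v_2\}$ (a path or a house). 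As a proof of the proposition your outline is essentially sound, modulo one inaccuracy: your claim that prunability is ``preserved under complementation by self-duality'' is fine, but your parity discussion of why $v_n$ stays pendant or anti-pendant in ${\cal C}_j$ is looser than needed --- the point is that $p,q,r$ are adjacent to every $v_i$ with $i>1$ odd and nonadjacent to every $v_i$ with $i>2$ even, so the pendant/anti-pendant status of $v_i$ in the truncated graph is preserved for every $i\geq 3$, not only ``for $n$ large enough with the right parity.''

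A genuine proof of Theorem~\ref{thm:tree-perfect} would be an entirely different undertaking: one must start from the definition of $P_4$-isomorphism to a tree and analyze which $P_4$-structures a graph can carry while being $P_4$-isomorphic to a tree, eventually forcing the module/quotient structure described in the statement. Neither your proposal nor the paper attempts this; the paper explicitly imports the result from~\cite{BrL99} without proof. So the concrete gap is not a flaw in a step --- it is that no step of your argument addresses the claim at all.
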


We can now apply Theorem~\ref{thm:tree-perfect} in order to conclude, as follows: 

\begin{proofof}{Proposition~\ref{prop:tree-perfect}}
Let $G=(V,E)$ be a tree-perfect graph.
By Theorem~\ref{thm:tree-perfect} every nontrivial module induces a cograph.
It implies that all the subgraphs in the modular decomposition of $G$, except maybe its quotient graph $G'$, are cographs, and so, totally decomposable by the modular decomposition.
We are left with proving that $G'$ is totally decomposable by the pruned modular decomposition.

The latter is immediate whenever $G$ is a tree, $H_1$, $H_2$ or a complement of one of these graphs.
Furthermore, we already observed that elementary graphs can be reduced to a singleton by pruning pendant and anti-pendant vertices sequentially.
Therefore elementary graphs and their complements are also totally decomposable by the pruned modular decomposition.

\begin{figure}[h!]
\centering
\includegraphics[width=.35\textwidth]{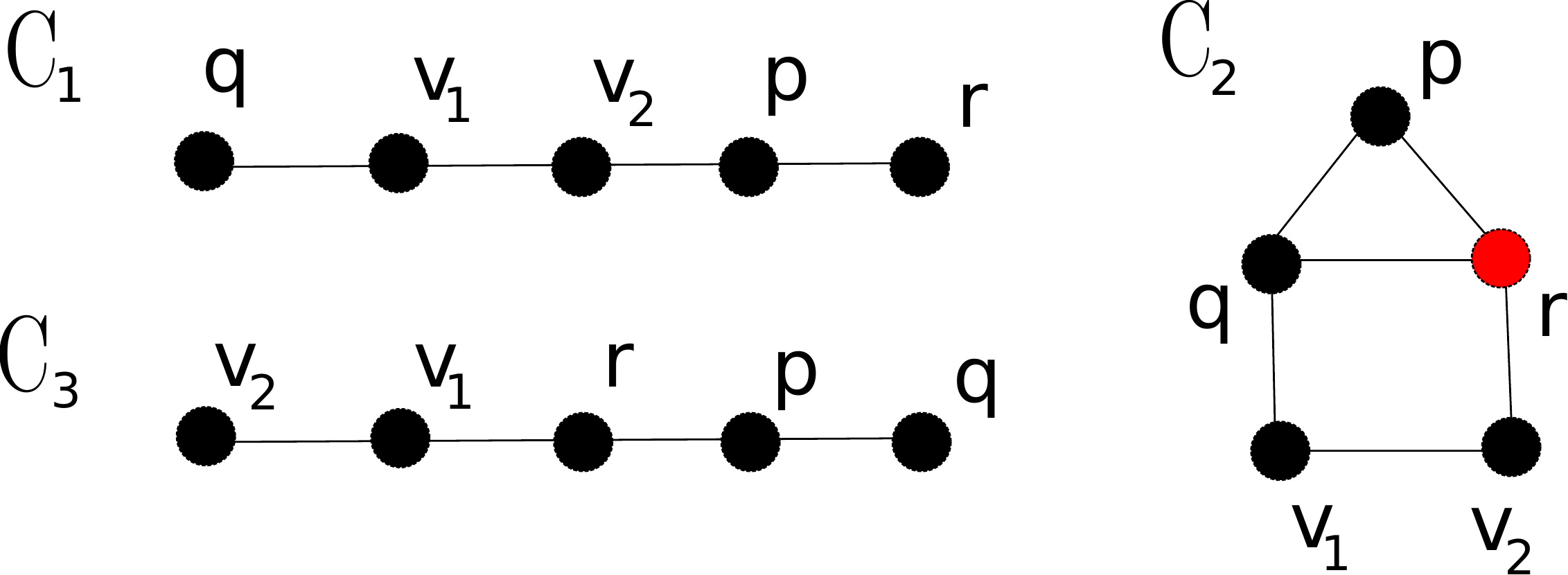}
\caption{Small tree-perfect graphs with $5$ vertices.}
\label{fig:tree-perfect-final}
\end{figure}

Finally, we prove that graphs in ${\cal C}_1 \cup {\cal C}_2 \cup {\cal C}_3$ are totally decomposable (this will prove the same for their complements).
Recall that every graph $G' \in {\cal C}_j, \ j = 1,2,3$ can be obtained from an elementary graph $H$ with ordering $(v_1,v_2,\ldots,v_n)$ by adding three new vertices $p,q,r$ and a set of specified edges. 
Furthermore, for every odd $i$, resp. for every even $i$, we have that $v_i$ is anti-pendant, resp. pendant, in $H \setminus \{v_{i+1},\ldots,v_n\}$.
Since $p,q,r$ are made adjacent to every $v_i$ for $i > 1$ odd, and nonadjacent to every $v_i$ for $i > 2$ even, this above property stays true in $G' \setminus \{v_{i+1},\ldots,v_n\}$.
As a result, we can remove the vertices $v_n,v_{n-1},\ldots,v_3$ sequentially.
We are left with studying the subgraph induced by $p,q,r,v_1,v_2$.
The latter subgraph is a path if $G' \in {\cal C}_1 \cup {\cal C}_3$, otherwise it is a house (cf. Fig.~\ref{fig:tree-perfect-final}).
In both cases, such subgraph can be totally decomposed by pruning pendant and anti-pendant vertices sequentially.
\end{proofof}

\paragraph{Other generalizations.}
Finally, the $c$-decomposition is a lesser-known generalization of the modular decomposition studied in~\cite{Lan01,Rao08}.
It was proved in~\cite{Lan01} that the graphs totally decomposable by the $c$-decomposition are exactly the graphs that can be reduced to a singleton by pruning pendant and universal vertices sequentially.

\begin{proposition}\label{prop:c-dec}
We can solve {\sc Maximum Matching} in linear-time on the graphs that are totally decomposable by the $c$-decomposition.
\end{proposition}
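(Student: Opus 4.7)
The plan is to apply the pendant and universal reduction rules from Sections~\ref{sec:simple} and~\ref{sec:pendant} directly, starting from the trivial \textsc{Module Matching} instance $\langle G, {\cal P}, {\cal F} \rangle$ with ${\cal P} = \{\{v\} : v \in V\}$ and ${\cal F}$ a collection of empty matchings, and following the pruning sequence guaranteed by the characterization of~\cite{Lan01}. Since every singleton $\{v\}$ is trivially a module of any supergraph, every pendant (resp.\ universal) vertex is in particular a pendant (resp.\ universal) module; hence the two rules apply verbatim with $H_1$ reduced to a single exposed vertex and $F_1 = \emptyset$. In this regime several complications from Section~\ref{sec:pendant} disappear (e.g.\ Property~\ref{pty:pendant} is vacuous and the virtual edges collapse to at most one candidate), which would also allow a simplified direct presentation of the pendant rule.

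By the characterization recalled just before the statement, for any graph $G$ that is totally decomposable by the $c$-decomposition, we can iteratively find a pendant or universal vertex in the current graph and remove it, until a single vertex is left. I would follow such an ordering: at each step, I apply Reduction rule~\ref{rule:universal} for a universal vertex or the pendant reduction of Section~\ref{sec:pendant} for a pendant vertex, obtaining in either case a smaller instance $\langle G^*, {\cal P}^*, {\cal F}^* \rangle$ on which induction applies. Correctness is immediate from Lemma~\ref{lem:join} and Lemma~\ref{lem:main-pendant} respectively.

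For the running time, note that a valid pruning sequence can be constructed in ${\cal O}(n+m)$ time using a degree-based data structure (two buckets, one for vertices of current degree $1$ and one for current degree $|V|-1$, updated incrementally when a vertex is removed); we do \emph{not} need the trie-based structure of Proposition~\ref{prop:pruned-compute} since only pendants and universals need to be detected, avoiding the $\log n$ factor incurred by twin detection. Then, by the complexity analyses at the ends of Sections~\ref{sec:simple} and~\ref{sec:pendant}, applying a single rule to a pendant or universal vertex $v$ costs ${\cal O}(\Delta m(G') - \Delta m(G^*)) = {\cal O}(\deg(v))$ time. Summing over all the vertices in the pruning sequence, the total cost of all applied rules telescopes to ${\cal O}(n+m)$.

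The main subtlety, rather than being a structural obstacle, is bookkeeping: one must verify that the matching and the module data are maintained consistently as vertices are removed, so that newly exposed pendant or universal vertices in the reduced graph can be processed by the same rules. This is handled uniformly by the framework of Lemma~\ref{lem:incr} together with the observation that removing a pendant or universal vertex from $G$ produces a graph that remains totally decomposable by the $c$-decomposition, so the induction hypothesis carries through without modification.
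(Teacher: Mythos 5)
Your proposal is correct and is essentially the paper's own (implicit) argument: the paper gives no explicit proof of this proposition, relying exactly on the characterization from~\cite{Lan01} combined with the pendant and universal reduction rules applied to the trivial instance with singleton modules, and your observation that only degree-based detection is needed (so the trie, and hence the $\log n$ factor of Proposition~\ref{prop:pruned-compute} and Theorem~\ref{thm:main}, is avoided) is precisely what makes the bound linear. The one imprecision is in your closing induction: with singleton modules the pendant rule consumes the pendant vertex \emph{together with its matched neighbour}, so the algorithm recurses on a graph strictly smaller than $G \setminus v$; this is harmless because a vertex that is pendant (resp.\ universal) at its step of the pruning sequence stays pendant or isolated (resp.\ universal) in any induced subgraph containing it, the isolated case being absorbed by Reduction rule~\ref{rule:isolated}, but your stated justification (``removing a pendant or universal vertex preserves total decomposability'') does not by itself cover this.
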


\subsection{The case of unicycles}\label{sec:unicycle}

We end up this section with a refinement of our framework for the special case of unicyclic quotient graphs ({\it a.k.a.}, graphs with exactly one cycle).

\begin{proposition}
We can solve {\sc Maximum Matching} in linear-time on the graphs that can be modularly decomposed into unicycles.
\end{proposition}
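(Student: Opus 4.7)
The plan is to extend the framework of Theorem~\ref{thm:main} with an ad hoc treatment of cycles. By Lemma~\ref{lem:incr}, it suffices to design an algorithm for \textsc{Module Matching} that runs in ${\cal O}(p+\Delta m+\Delta\mu)$-time on any instance $\langle H',{\cal P},{\cal F}\rangle$ in which $H'$ is unicyclic, i.e., consists of a single cycle $C$ with trees rooted on some of its vertices.

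The first phase of the algorithm strips off the trees. Every leaf of a tree hanging on $C$ is a pendant vertex of $H'$, so the pendant-module reduction of Section~\ref{sec:pendant} applies; deleting a leaf may turn its parent into a new leaf, so by iterating the pendant rule I eventually reduce $H'$ to its single cycle $M_1-M_2-\cdots-M_p-M_1$. Each individual reduction takes ${\cal O}(\Delta m(H^j)-\Delta m(H^{j+1}))$ time by the complexity analysis of Section~\ref{sec:pendant}, and these costs telescope to ${\cal O}(\Delta m)$ overall.

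In the second phase I handle the pure cycle. If $p\le 4$ then either $H'$ is $K_3$ (degenerate, hence already handled) or $C_4$, in which case opposite modules form false twins and Reduction rules~\ref{rule:false-twin} and~\ref{rule:universal} finish the job; so I may assume $p\ge 5$ and that $H'$ admits no one-vertex extension. I would delete the edge $\{v_p,v_1\}$ from $H'$, obtaining the path $P_p$, and solve \textsc{Module Matching} on $\langle P_p,{\cal P},{\cal F}\rangle$ by iterating the pendant rule from both endpoints; this produces a matching $F$ that is maximum in the subgraph $G-E(M_p,M_1)$. Then I would repair $F$ with a \textsc{Match}/\textsc{Split} style post-processing restricted to the suppressed edges $E(M_p,M_1)$: greedily add edges between exposed vertices of $M_p$ and of $M_1$; swap matched edges whose endpoints lie in $N_G(M_p)\cup N_G(M_1)$ via operations analogous to {\sc Unbreak} and {\sc LocalAug} of Section~\ref{sec:anti-pendant}; and, if needed, break one edge of $F_p$ or $F_1$ so as to complete a short augmenting path through a suppressed edge. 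Each such local move either contributes one unit to $|F|$ or is charged against a potential bounded by $|N_G[M_p]|+|N_G[M_1]|={\cal O}(\Delta m)$, and so the whole repair costs ${\cal O}(\Delta m+\Delta\mu)$.

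The main technical obstacle I anticipate is proving that the repair procedure actually produces a maximum matching of the full substitution graph $G$. In the spirit of Lemma~\ref{lem:main-anti-pendant} and Lemma~\ref{lem:main-pendant}, I would assume for contradiction the existence of a shortest $F$-augmenting path $P$ in $G$; since $F$ is already maximum in $G-E(M_p,M_1)$, path $P$ must traverse at least one edge between $M_p$ and $M_1$, and its minimality together with the modular structure of $M_p$ and $M_1$ (all $n_p n_1$ inter-module edges are present and can be used to shortcut) should force it to traverse exactly one such edge. A case analysis on the endpoints of $P$ and on whether they fall in $M_p\cup M_1$ or in its complement will then exhibit a local move that the repair should have performed, contradicting its termination condition. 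Summing the phase costs and invoking Lemma~\ref{lem:incr} will give the desired linear-time algorithm.
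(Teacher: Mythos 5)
Your first phase (stripping the hanging trees with the pendant rule) and your instinct to cut one edge of the cycle so as to fall back on the path case both match the paper's strategy. The gap is in your repair phase. After you compute a maximum matching $F$ of $G - E(M_p,M_1)$, the $F$-augmenting paths of $G$ are \emph{not} of bounded length, so no fixed repertoire of local moves in the style of \textsc{Match}, \textsc{Split}, \textsc{Unbreak} or \textsc{LocalAug} can find them. Concretely, take the cycle on modules $M_1,\dots,M_{2k+1}$ where $M_{k+1}=\{a,b,c\}$ is edgeless and every other $M_i=\{v_i\}$ is a singleton, and suppress the single edge $v_1v_{2k+1}$. Any maximum matching of the resulting path graph leaves at least one of $a,b,c$ exposed together with (at least) one vertex on one side of the path, and is one edge short of $\mu(G)$; but every $F$-augmenting path must traverse the suppressed edge $v_1v_{2k+1}$, and since each module sees only its two cyclic neighbours there is no shortcut: the path is forced to run through $v_k,v_{k-1},\dots,v_2,v_1,v_{2k+1},\dots$, so its length is $\Omega(k)=\Omega(p)$. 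Your appeal to "minimality together with the modular structure ($n_pn_1$ inter-module edges)" only shortcuts \emph{inside} $N_G(M_p)\cup N_G(M_1)$; it does nothing against an augmenting path whose two exposed endpoints sit in a module far from the cut. So the step you flag as "the main technical obstacle" is not a missing proof of a true statement --- the repair procedure as described is simply not correct.

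The paper avoids this entirely with a different idea: instead of repairing after the fact, it \emph{guesses} the number $\mu_{i,j}$ of matched edges crossing one cycle edge $e=\{v_i,v_j\}$ in an optimal solution, removes (by \cite[Lemma~5.1]{CDP18}) the corresponding prefix of vertices from the canonical orderings of $M_i$ and $M_j$, deletes $e$, and solves the remaining path instance by the pendant rule of Section~\ref{sec:pendant} --- keeping the best guess. Correctness is then inherited from the path case with no new augmenting-path analysis. The cost is controlled by two tricks you would also need: choosing $e$ so that $n_i$ is minimum (hence only ${\cal O}(\Delta m/p)$ guesses), and running each guess as an ${\cal O}(p)$-time \emph{cardinality-only} simulation of the pendant reduction (tracking counters $n_t,\mu_t$ rather than the matching itself), re-running the real reduction once for the winning guess. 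If you want to salvage your outline, you should replace the local repair by this enumeration over the cut, or else prove a genuinely new structural lemma bounding the augmenting paths --- which the example above shows is impossible as stated.
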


\begin{proof}
By Lemma~\ref{lem:incr}, it suffices to show that on every instance $\langle G', {\cal P}, {\cal F}\rangle$ such that $G'$ is a unicycle, we can solve {\sc Module Matching} in ${\cal O}(\Delta m)$-time.
Recall that $G'$ is a unicycle if it can be reduced to a cycle by pruning the pendant vertices sequentially.
Therefore, in order to prove the result, we only need to prove it when $G'$ is a cycle.

Given an edge $e = \{v_i,v_j\} \in E(G')$, our strategy consists in fixing the number $\mu_{i,j}$ of matched edges with one end in $M_i$ and the other end in $M_j$.
By~\cite[Lemma $5.1$]{CDP18}, we can always assume that the ends of these $\mu_{i,j}$ edges are the $\mu_{i,j}$ first vertices in a canonical ordering of $M_i$ (w.r.t. $F_i$), and in the same way, the $\mu_{i,j}$ first vertices in a canonical ordering of $M_j$ (w.r.t. $F_j$). 
We can remove these above vertices from $M_i,M_j$ and update the matchings $F_i,F_j$ accordingly.
Doing so, we can remove the edge $\{v_i,v_j\}$ from $G'$.
Then, since $G' \setminus e$ is a path, we can systematically apply the reduction rule for pendant modules (Section~\ref{sec:pendant}).
Overall, we test for all possible number of matched edges between $M_i$ and $M_j$ and we keep any one possibility that gives the largest matching.

In order to apply our strategy, we choose any edge $e$ such that $|M_i|$ is minimized.
Doing so, there can only be at most ${\cal O}(n_i) \leq {\cal O}(\Delta m/p)$ possibilities for $\mu_{i,j}$, where $p = |V(G')|$.
However, we are not done yet as we now need to test for every possibility in ${\cal O}(p)$-time.
A naive implementation of this test, using the reduction rule of Section~\ref{sec:pendant}, would run in ${\cal O}(\Delta m)$-time.
We propose a faster implementation that only computes the {\em cardinality} of the solution ({\it i.e.}, not the matching itself).
The latter is enough in order to compute the optimum value for $\mu_{i,j}$.
Then, once this value is fixed, we can run the naive implementation in order to compute a maximum-cardinality matching. 

\medskip
W.l.o.g., $i=1$, $j=p$.
For every $t$ let $n_t = |M_t|$.
Furthermore, let $\mu_t = |F_t|$.
Note that there are exactly $n_t - 2 \mu_t$ vertices in $M_t$ that are left exposed by $F_t$.
We also maintain a counter $\mu$ representing the cardinality of the current matching.
Initially $\mu = \mu_{i,j}$.
Then, we proceed as follows:
\begin{itemize}
\item We start removing the $\mu_{i,j}$ first vertices in a canonical ordering of $M_i$ w.r.t. $F_i$.
More precisely, we decrease $n_i$ by $\mu_{i,j}$.
If $\mu_{i,j} \leq n_i - 2 \mu_i$ then we only removed exposed vertices and there is nothing else to change.
Otherwise, we also need to decrease $\mu_i$ by exactly $\left\lceil \left( \mu_{i,j} - n_i + 2\mu_i \right) / 2 \right\rceil$.
We proceed similarly for $M_j$.

After that, we can remove $e$ from $G'$.
We have that $G' / e$ is isomorphic to the path $(v_1,v_2,\ldots,v_{p})$.
This first step takes constant-time.
\item Then, for every $1 \leq t < p$, we simulate the reduction rule of Section~\ref{sec:pendant} sequentially.
More precisely:
\begin{enumerate}
\item Let $k_t = \min\{n_t - 2\mu_t, n_{t+1}\}$ be the maximum number of exposed vertices in $M_t$ that can be matched with a vertex of $M_{t+1}$ in the first phase.
We decrease $n_t,n_{t+1}$ by $k_t$.
Furthermore, the size $\mu$ of the current matching is also increased by $k_t$.

If $k_t \leq n_{t+1} - 2 \mu_{t+1}$ then we only remove exposed vertices from $M_{t+1}$ and so, there is nothing else to be done.
Otherwise, we also need to decrease $\mu_{t+1}$ by exactly $\left\lceil \left( k_t - n_{t+1} + 2 \mu_{t+1} \right) / 2 \right\rceil$.

We fall in a degenerate case if $k_t = n_t$ or $k_t = n_{t+1}$.
In the former case, we do not modify the value of $\mu$, however in the latter case ($M_t$ is now an isolated module) we can increase this value by $\mu_t$.
For both degenerate cases, we continue directly to the next vertex $v_{t+1}$.
Otherwise, we go to Step 2.

\item Let $k'_t = \min\{ \left\lfloor ( n_{t+1} - 2 \mu_{t+1} )/2 \right\rfloor , \mu_t  \}$ be the number of virtual edges that we create during the second phase.
We increase $\mu_{t+1}$ by exactly $k'_{t}$.

\item Finally, in order to simulate the third phase, we claim that we only need to increase $\mu$ by exactly $\mu_t$.
Indeed, after a solution $F_t^*$ was obtained for $(v_{t+1},\ldots,v_{p})$ the reduction rule proceeds as follows.
Either we confirm a {\sc Split} operation, {\it i.e.}, we replace a virtual matched edge in $F_t^*$ by two edges between $M_t,M_{t+1}$; or we cancel the {\sc Split} operation, {\it i.e.}, we add an edge of $F_t$ in the current matching.
In both cases, the cardinality of the solution increases by one.
Then, all the edges of $F_t$ that were not used during the second phase are added to the current matching.
Overall, we have as claimed that the cardinality of the solution increases by exactly $\mu_t$. 
\end{enumerate}
\end{itemize}
The procedure ends for $t = p$.
In this situation, the quotient subgraph is reduced to a single node, and so, we only need to increase the current size $\mu$ of the matching by $\mu_{p}$.
Summarizing, since all the steps of this procedure take constant-time, the total running-time is an ${\cal O}(p)$.
\end{proof}

\section{Open problems}\label{sec:ccl}

The pruned modular decomposition happens to be an interesting add up in the study of {\sc Maximum Matching} algorithms.
An exhaustive study of its other algorithmic applications remains to be done.
Moreover, another interesting question is to characterize the graphs that are totally decomposable by this new decomposition.

We note that our pruning process can be seen as a repeated update of the modular decomposition of a graph after some specified modules (pendant, anti-pendant) are removed.
However, we can only detect a restricted family of these new modules (universal, isolated, twins).
A fully dynamic modular decomposition algorithm could be helpful in order to further refine our framework.

Finally, in a companion paper~\cite{DuP18+}, we propose another approach for {\sc Maximum Matching} that is based on split decomposition, and that partly overlaps the cases seen in this paper.
The combination of both framework looks like a challenging task.

\bibliographystyle{abbrv}
\small
\bibliography{bibliography-modular}

\end{document}